\let\mathbbm\mathbb
\newcommand{\overbar}[1]{\mkern 1.5mu\overline{\mkern-1.5mu#1\mkern-1.5mu}\mkern 1.5mu}
\newcommand{\nfa}{{{NFA}}\xspace}
\newcommand{\nfas}{{\textup{NFAs}}\xspace}
\newcommand{\nat}{\ensuremath{\mathbb{N}}\xspace}
\newcommand{\integ}{\ensuremath{\mathbb{Z}}\xspace}
\newcommand{\veps}{\ensuremath{\varepsilon}\xspace}
\newcommand{\As}{\ensuremath{\mathscr{A}}\xspace}
\newcommand{\Bs}{\ensuremath{\mathscr{B}}\xspace}
\newcommand{\Cs}{\ensuremath{\mathscr{C}}\xspace}
\newcommand{\Fs}{\ensuremath{\mathscr{F}}\xspace}
\newcommand{\Gs}{\ensuremath{\mathscr{G}}\xspace}
\newcommand{\Is}{\ensuremath{\mathscr{I}}\xspace}
\newcommand{\Ls}{\ensuremath{\mathscr{L}}\xspace}
\newcommand{\Hb}{\ensuremath{\mathbf{H}}\xspace}
\newcommand{\Kb}{\ensuremath{\mathbf{K}}\xspace}
\newcommand{\Lb}{\ensuremath{\mathbf{L}}\xspace}
\newcommand{\inv}{\ensuremath{^{-1}}}
\newcommand{\tila}{\ensuremath{\tilde{A}}\xspace}
\newcommand{\tilas}{\ensuremath{\tilde{A}^*}\xspace}
\newcommand{\folcl}[1]{\ensuremath{[#1]_{\simeq}}\xspace}
\newcommand{\fo}{\ensuremath{\textup{FO}}\xspace}
\newcommand{\setfqr}{\ensuremath{F_{\{q,r\}}}\xspace}
\newcommand{\frP}{\ensuremath{\mathbbm{P}}\xspace}
\newcommand{\prefsig}[1]{\ensuremath{\frP_{#1}}\xspace}
\newcommand{\sic}[1]{\ensuremath{\Sigma_{#1}}\xspace}
\newcommand{\bsc}[1]{\ensuremath{\Bs\Sigma_{#1}}\xspace}
\newcommand{\acons}[1]{\ensuremath{\langle #1 \rangle}\xspace}
\newcommand{\acauto}{\acons{\As}}
\newcommand{\acdel}{\acons{\delta}}
\newcommand{\econs}[1]{\ensuremath{\langle #1 \rangle_\veps}\xspace}
\newcommand{\ecauto}{\econs{\As}}
\newcommand{\ecdel}{\econs{\delta}}
\newcommand{\lauto}[3]{\ensuremath{L(#1,#2,#3)}\xspace}
\newcommand{\alauto}[2]{\lauto{\As}{#1}{#2}}
\newcommand{\aclauto}[2]{\lauto{\acauto}{#1}{#2}}
\newcommand{\aelauto}[2]{\lauto{\ecauto}{#1}{#2}}
\newcommand{\md}{\ensuremath{\textup{MOD}}\xspace}
\newcommand{\abg}{\ensuremath{\textup{AMT}}\xspace}
\newcommand{\grp}{\ensuremath{\textup{GR}}\xspace}
\newcommand{\ptime}{{\sc P}\xspace}
\newcommand{\nptime}{{\sc NP}\xspace}
\newcommand{\conptime}{co-{\sc NP}\xspace}
\newcommand{\nl}{\textup{NL}\xspace}
\newcommand{\conl}{\textup{co{-}NL}\xspace}
\newcommand{\pspace}{{\sc PSPACE}\xspace}
\newcommand{\tsat}{\textup{3-SAT}\xspace}
\tikzset{every state/.style={draw=indigo(dye)!80,very thick,fill=capri!20}}
\tikzset{statesub/.style={state,minimum size=.7cm,inner sep=1pt}}
\tikzset{pattstate/.style={state,draw=red!50!yellow,line width=2pt,fill=red!50!yellow!20}}
\tikzset{pdotstate/.style={state,minimum size=0.75cm,inner sep=0.5pt,draw=red!50!yellow,line
    width=2pt,dashed,fill=red!50!yellow!20}}
\tikzset{lstate/.style={state,minimum size=0.5cm,inner sep=0.5pt}}
\tikzstyle{trans}=[shorten >= 1pt,thick,->]
\tikzstyle{initial by arrow}=   [after node path=
\tikzstyle{accepting by arrow}=   [after node path=
\title{Group separation strikes back}
\author{\IEEEauthorblockN{Thomas Place}
  \IEEEauthorblockA{LaBRI, Univ. Bordeaux, CNRS, France\\
    Email: tplace@labri.fr}
  \and
  \IEEEauthorblockN{Marc Zeitoun}
  \IEEEauthorblockA{LaBRI, Univ. Bordeaux, CNRS, France\\
    Email: mz@labri.fr}}
\theoremstyle{plain}
\newtheorem{theorem}{Theorem}
\newtheorem{corollary}[theorem]{Corollary}
\newtheorem{fact}[theorem]{Fact}
\newtheorem{proposition}[theorem]{Proposition}
\newtheorem{lemma}[theorem]{Lemma}
\newtheorem{remark}[theorem]{Remark}
\newtheorem{example}[theorem]{Example}
\begin{document}

\maketitle

\begin{abstract}
  \emph{Group languages} are regular languages recognized by finite groups, or equivalently by finite automata in which each letter induces a permutation on the set of states. We investigate the separation problem for this class of languages: given two arbitrary regular languages as input, we show how to decide if there exists a group language containing the first one while being disjoint from the second. We prove that covering, a problem generalizing separation, is decidable. A simple covering algorithm was already known: it can be obtained indirectly as a corollary of an algebraic theorem by Ash. Unfortunately, while deducing the algorithm from this algebraic result is straightforward, all proofs of Ash's result itself require a strong background on algebraic concepts, and a wealth of technical machinery outside of automata theory. Our proof is independent of previous ones. It relies exclusively on standard notions from automata theory: we directly deal with separation and work with input languages represented by nondeterministic~finite~automata.

  We also investigate two strict  subclasses. First, the \emph{alphabet modulo testable languages} are those defined by counting the occurrences of each letter modulo some fixed integer (equivalently, they are the languages recognized by a commutative group). Secondly, the \emph{modulo languages} are those defined by counting the length of words modulo some fixed integer.  We prove that covering is decidable for both classes, with algorithms that rely on the construction made for group languages.

  Our proofs lead to tight complexity bounds for separation for all three classes, as well as for covering for both alphabet modulo testable languages and for modulo testable languages.
\end{abstract}

\section{Introduction}
\label{sec:intro}
\noindent
{\bf Context.} A prominent question in automata theory is to  understand natural classes of languages defined by restricting the common definitions of regular languages (such as regular expressions, automata, monadic second-order logic or finite monoids). Of course, ``understanding a class'' is an informal goal. The standard approach is to show that the class under investigation is recursive by looking for \emph{membership algorithms}: given a regular language as input, decide whether it belongs to the class. Rather than the procedure itself, the motivation is that formulating such an algorithm often requires a deep understanding of the class. This approach was initiated in the 60s by Schützenberger~\cite{schutzsf}, who provided a membership algorithm for the class of \emph{star-free languages} (those defined by a regular expression without Kleene star but with complement instead). This theorem started a~fruitful line of research, which is now supported by a wealth of results. In fact, some of the most famous open problems in automata theory are membership questions  (see~\cite{testher,jep-op35,jep-dd45} for surveys).

In this paper, we look at two problems, which both generalize membership. The first one is \emph{separation}: given \emph{two} regular languages $L_1$ and $L_2$ as input, decide whether there exists a third language that belongs to the investigated class, includes $L_1$ and is disjoint from~$L_2$. The second one is \emph{covering}. It generalizes separation to an arbitrary number of input languages. These problems have been getting a lot of attention recently, and one could even argue that they have replaced membership as the central question. The motivation is twofold. First, it has recently been shown~\cite{pzjacm19} that separation and covering are key ingredients for solving some of the most difficult membership questions (see~\cite{PZ:generic18} for a survey). Yet, the main motivation is tied to our original goal: ``understanding classes''. In this respect, separation and covering are more rewarding than membership (albeit more difficult). Intuitively, a membership algorithm for a class \Cs can only detect the languages in \Cs, while a covering algorithm provides information on how \emph{arbitrary regular languages} interact with~\Cs.

\medskip
\noindent
{\bf Group languages.} In the paper, we look at three specific classes. The main one is the class of \emph{group languages} \grp. While natural, this class is rather unique since its only known definition is based on machines: group languages are those recognized by a finite group, or equivalently, by a \emph{permutation automaton}~\cite{permauto} (a deterministic finite automaton in which each letter induces a permutation on the set of states). On the other hand, no ``descriptive'' definition of \grp is known (\emph{e.g.}, based on regular expressions or on logic). This makes it difficult to get an intuitive grasp about group languages, which may explain why this class remains poorly understood. We also consider two more intuitive subclasses: the first, \abg, consists of all languages recognized by \emph{Abelian} (\emph{i.e.}, commutative) groups. From a language theoretic point of view, these are the languages that can be defined by counting the occurrences of each letter modulo some fixed integer. The second is a subclass of \abg named \md. A language is in \md if membership of a word in the language only depends on its length modulo some fixed integer. Like all classes of group languages, these three classes are orthogonal and complementary to the classes for which separation and covering have been recently investigated (\emph{i.e.}, subclasses of the star-free languages, see~\cite{PZ:generic18}). Indeed, only the empty and universal languages are simultaneously star-free and group~languages.

\medskip
\noindent
\emph{\bf Motivations.} The class \grp and its subclasses serve as ingredients for building more~complex classes. This is well illustrated by logic: one may associate several classes to a fixed fragment of first-order logic. Each such class corresponds to a choice of \emph{signature} (\emph{i.e.}, the allowed predicates). For a class of languages~\Cs, define a signature \prefsig\Cs as follows: each language~$L$ in~\Cs gives rise to a predicate $P_L(x)$ selecting all positions~$x$ in a word $w$ such that the prefix of $w$ up to $x$ (excluded) belongs to~$L$. When $\Cs$ is $\abg$ or $\md$, we obtain two natural signatures: the predicates of \prefsig\abg allow one to test, for each letter $a$ of the alphabet, the number of $a$'s before position $x$ modulo some integer. Likewise, the predicates of \prefsig\md make it possible to test the value of positions modulo some integer.

More generally, given an arbitrary class \Gs of group languages, it is natural to consider the signatures $\{<\} \cup \prefsig\Gs$ and $\{<,{+1}\} \cup \prefsig\Gs$ (where ``${+}1$'' denotes the successor). It was recently shown that for many fragments of first-order logic \Fs, membership and sometimes even separation and covering are decidable for $\Fs(<,\prefsig\Gs)$ and $\Fs(<,{+1},\prefsig\Gs)$ as soon as \hbox{\emph{separation}} is decidable for \Gs. Prominent examples include the whole first-order logic~\cite{pzsfclos} (\fo), the first levels $\sic1$, $\bsc1$, $\sic2$ and $\sic3$ of the well-known quantifier alternation hierarchy of \fo~\cite{pzconcagroup,place2022characterizing,pzptimelevel1}, as well as two variable first-order logic ($\fo^2$)  and its whole quantifier alternation hierarchy~\cite{amazing2vars}. The proofs are based on language theoretic definitions of these classes, which are built by applying operators to \Gs. Consequently, it is desirable to have \emph{accessible language theoretic proofs}  that separation is decidable for the most prominent classes of group languages: \grp, \abg and~\md.

\medskip\noindent\textbf{Connection with other fields.} Separation by group languages is related to another area of independent interest: the decidability of separation for \grp and \abg can be deduced from purely algebraic results, which were proved even before separation was considered on the side of language theory. For example, the decidability of \grp separation follows from a theorem by Ash~\cite{Ash91}, who solved a conjecture of Rhodes~\cite{Karnofsky1982,typeii2} in semigroup theory. In this vast field of research, there are many publications on this topic, including several alternative proofs of Ash's theorem (see \emph{e.g.},~\cite{Herwig1999ExtendingPA},~\cite{Auinger04}, or~\cite{ribesz} relying on~\cite{pingroup}).

Ash's result spawned other lines of research in algebra. For instance, it motivated the computation of closures of regular languages in profinite topologies. Indeed, deciding whether such profinite closures intersect corresponds to deciding a property of subsets of finite monoids, which in turn is equivalent to deciding covering~\cite{almeidasep}. Extensions of such properties have been investigated for groups~(\emph{e.g.},~\cite{Benj_pgroups,almeida:groups}), Abelian groups~(\emph{e.g.},~\cite{DBLP:journals/ijac/AlmeidaD05,Alibabaei2019}) or other algebraic classes~(\emph{e.g.},~\cite{almeida05:R}).

However, this line of research is disconnected from our motivation: to obtain a direct and purely automata theoretic proof of the decidability of \grp-covering, in order to understand the involved combinatorics on regular languages. In particular, we should not rely on Ash's result itself (unlike some of the work cited above).   Unfortunately, the existing proofs of Ash's result do not meet our motivation. Indeed, they do not involve covering. Their use therefore requires a detour: abstract the problem as a purely algebraic/topological question, do the proof in this framework and then come back to covering.

From our perspective, this detour has several drawbacks. First, it relies on ``black box'' results: to get a complete proof requires to gather and understand a lot of material. Secondly, the proofs demand a solid background on algebraic concepts and a wealth of technical machinery outside  automata theory: for instance, for Ash's theorem, some proofs~\cite{Ash91,Auinger04} are based on the theory of inverse semigroups while others rely on topological arguments~\cite{pingroup,MARGOLIS1992305,ribesz}.  For these reasons, beyond their intrinsic difficulty, these proofs do not bring any intuition \emph{at the level of languages}.  This means that these results and their proofs are not satisfactory with respect to our primary objective: ``understanding~classes of languages''.

It has been shown that this detour can be avoided for almost all natural classes~\cite{amazing2vars,pzptimelevel1,place2022characterizing,pzconcagroup}: one can work directly with languages and use only basic algebraic notions (typically, the definition of regular languages by morphisms into finite monoids and standard combinatorics on monoids). This direct approach is much more rewarding with respect to our original goal. In particular, the proofs of separation algorithms provide an explicit description of generic separators (when they exist).

Group languages are among the few classes for which it is not known whether a fully language-theoretic approach is possible. This is the question  we address in this paper.

\medskip
\noindent
\emph{\bf Contributions.} We present self-contained proofs that covering and separation are decidable for \grp, \abg and \md. They avoid the detour through algebra and are based on new ideas that are independent of any pre-existing indirect proofs in this area. Of course, our proofs remain involved: these are hard questions. However, they rely exclusively on basic notions of automata theory, which makes them accessible to computer scientists. We work with nondeterministic finite automata (\nfa).  Paradoxically, we use very few algebraic notions beyond the standard definition of a group. Roughly speaking, proofs are based on word combinatorics for \grp, on arithmetic for \abg, while \md reduces to the other two for unary alphabets.

All separation and covering algorithms are neat and simple. They work directly with input languages represented by \nfas. However, it is worth mentioning that the main value of the paper lies not in the algorithms themselves, but in their proofs. Indeed, is actually easy to derive these algorithms from the aforementioned independent algebraic results. In particular, the covering algorithm we present for \grp is essentially a reformulation on automata and a simple corollary of the original algorithm obtained from Ash's theorem~\cite{Ash91}, which uses inputs represented by monoid morphisms rather than automata. Actually, we show how to deduce our algorithm from Ash's one. Furthermore, an algorithm similar to ours is given in~\cite{pingroup}. It relies on a conjecture proved later in~\cite{ribesz}, and on an algorithm to compute closures of certain regular languages in an appropriate topology~\cite{BirgetMargolisMeakinWeil,Myasnikov-Kapovich}, itself based on Stallings foldings~\cite{stallings}. In contrast, our new proof is direct, matching our original objective: to remain in the framework of automata throughout the whole argument. In fact, \nfas are a key ingredient of this proof: we use nondeterminism in a crucial~way.

Let us illustrate the simplicity of the algorithms using \grp. We present a simple construction that inputs an \nfa \As and outputs a new one \ecauto. Then, we show that the languages recognized by two \nfas $\As_1$ and $\As_2$ can be separated by a group language if and only if the languages recognized by $\econs{\As_1}$ and~$\econs{\As_2}$ do \emph{not} intersect. Since $\econs{\As_1}$ and~$\econs{\As_2}$ can be computed in polynomial time, this shows that \grp-separation is in~\ptime (this goes up to \pspace for covering as this boils down to deciding intersection between an arbitrary number of \nfas). The approach for \abg is similar with one key difference: we look at \emph{Parikh images}. More precisely, we show that whether the languages recognized by two \nfas $\As_1$ and $\As_2$ can be separated by \abg boils down to some specific condition on the Parikh images of $\econs{\As_1}$ and $\econs{\As_2}$. The standard result that states that existential Presburger arithmetic is in \nptime~\cite{presNP1} implies then that \abg-separation is in \conptime. Actually, we show that both \abg-separation and \abg-covering are \conptime-complete. Finally, we show that in the much simpler case of \md, separation is \nl-complete and covering  \conptime-complete.

\medskip\noindent
\emph{\bf Organization.} In Section~\ref{sec:prelims}, we introduce preliminary definitions and a key automata construction used in all algorithms. Section~\ref{sec:grp} is devoted to separation and covering for the class \grp of all group languages (in particular, in Section~\ref{sec:history}, we show how to deduce our algorithm from Ash's original one). Section~\ref{sec:abe} is devoted to covering for \abg. Finally, Section~\ref{sec:mod} is devoted to covering for \md.

\medskip\noindent This paper is the full version of~\cite{pzgroups23}.

\section{Preliminaries}
\label{sec:prelims}

\subsection{Words, languages, separation and covering}

\noindent
{\bf Languages.} We fix an arbitrary finite alphabet $A$ for the paper. As usual, $A^*$ denotes the set of all finite words over $A$, including the empty word~\veps. We let $A^{+}=A^{*}\setminus\{\veps\}$. For $u,v \in A^*$, we let $uv$ be the word obtained by concatenating $u$ and~$v$. A \emph{language}  (over~$A$) is a subset of~$A^*$. Finally, a \emph{class of languages} \Cs is a set of languages, \emph{i.e.}, a subset of $2^{A^*}$. Additionally, we say that \Cs is a \emph{Boolean algebra} when it is closed under union, intersection and complement: for every $K,L \in \Cs$, we have $K \cup L \in \Cs$, $K \cap L \in \Cs$ and $A^* \setminus K \in \Cs$. In this paper, we consider \emph{regular languages}: those that can be equivalently defined by finite automata, finite monoids or monadic second-order logic. We work with the definition based on automata. We shall also consider monoids in order to connect our statements to historical results. Let us recall these two definitions.

\medskip\noindent {\bf Automata.} A nondeterministic finite automaton (\nfa) over~$A$ is a tuple $\As= (Q,I,F,\delta)$ where $Q$ is a finite set of states, $I \subseteq Q$ and $F \subseteq Q$ are sets of initial and final states, and $\delta \subseteq Q \times A \times Q$ is a set of transitions. We define the language recognized by \As, denoted by $L(\As)$, as follows. Given $q,r \in Q$ and $w\in A^*$, we say that there exists a \emph{run labeled by $w$ from $q$ to $r$} (in \As) if there exist $q_0,\dots,q_n \in Q$ and $a_1,\dots,a_n \in A$ such that $w = a_1 \cdots a_n$ , $q_0 = q$, $q_n = r$ and $(q_{i-1},a_i,q_i) \in \delta$ for every $1 \leq i \leq n$. Given $q,r \in Q$, we write $\alauto{q}{r}$ for the language consisting of all words $w \in A^*$ such that there exists a run labeled by $w$ from $q$ to $r$ (note that $\veps \in \alauto{q}{q}$ for every $q \in Q$). The \emph{language $L(\As)$ recognized by \As} is $\bigcup_{q \in I} \bigcup_{r \in F} \alauto{q}{r}$. We say that a language is \emph{regular} when it is recognized by an~\nfa.

We also consider \nfas with \veps-transitions. In such an \nfa $\As = (Q,I,F,\delta)$, a transition may also be labeled by the empty word ``\veps'' (that is, $\delta\subseteq Q \times (A \cup \{\veps\} )\times Q$). We use the standard semantics: an \veps-transition can be taken without consuming an input letter. Unless otherwise specified, the \nfas that we consider are assumed to be  \emph{without}~\veps-transitions.

In the examples, we represent \nfas by graphs, as usual: nodes are the states (the initial states have an incoming arrow and the final ones an outgoing arrow), and each transition $(q,a,r)$ is depicted by an $a$-labeled edge from $q$ to $r$.

\medskip\noindent
{\bf Monoids.} A \emph{monoid} is a set $M$ endowed with an associative multiplication $(s,t)\mapsto st$ having an identity element $1_M$, \emph{i.e.}, such that ${1_M}s=s {1_M}=s$ for every $s \in M$. Clearly, $A^{*}$ is a monoid whose multiplication is concatenation (the identity element is~\veps). Therefore, we may consider monoid morphisms $\alpha: A^* \to M$ where $M$ is an arbitrary monoid: they are the mappings satisfying $\alpha(\veps)=1_M$ and $\alpha(uv)=\alpha(u)\alpha(v)$ for all $u,v\in A^*$. Given such a morphism and some language $L \subseteq A^*$, we say that $L$ is \emph{recognized} by~$\alpha$ when there exists a set $F \subseteq M$ such that $L = \alpha\inv(F)$. It is well-known and simple to verify that a language is regular if and only if it is recognized by a morphism into a \emph{finite} monoid.

\subsection{Separation and covering}

We now define two decision problems, which depend on an arbitrary fixed class~\Cs. They are used as mathematical tools for investigating \Cs. They take finitely many regular languages as input (which we represent with \nfas in the paper).

Given two languages $L_1,L_2$, we say that $L_1$ is \emph{\Cs-separable} from $L_2$ if there exists $K\! \in\! \Cs$ such that $L_1\! \subseteq\! K$ and $L_2 \cap K\! =\! \emptyset$. The \emph{\Cs-separation problem} takes two regular languages $L_1$ and $L_2$ as input and asks whether $L_1$  is \Cs-separable from $L_2$.

Covering is a generalization introduced in~\cite{pzcovering2}. Given a language~$L$, a \emph{\Cs-cover of $L$} is a \emph{finite} set of languages \Kb such that every $K \in \Kb$ belongs to \Cs and~$L \subseteq \bigcup_{K \in \Kb} K$. Given a pair $(L_1,\Lb_2)$ where $L_1$ is a language and $\Lb_2$ a \emph{finite set of languages}, we say that $(L_1,\Lb_2)$ is \emph{\Cs-coverable} when there exists a \Cs-cover \Kb of $L_1$ such that for every $K\in\Kb$, there exists $L \in \Lb_2$ satisfying $K \cap L = \emptyset$. The \emph{\Cs-covering problem} takes as input a regular language~$L_1$ and a finite set of regular languages $\Lb_2$ and asks whether $(L_1,\Lb_2)$ is~\Cs-coverable.

Covering generalizes separation when \Cs is closed under union: in this case, one may verify that $L_1$ is \Cs-separable from $L_2$, if and only if $(L_1,\{L_2\})$ is \Cs-coverable. Additionally, the definition of covering may be simplified when \Cs is a \emph{Boolean algebra}: it suffices to consider the case when the language $L_1$ that needs to be covered is $A^*$. Indeed, in that case, $(L_1,\Lb_2)$ is \Cs-coverable if and only if $(A^*,\{L_1\} \cup \Lb_2)$ is \Cs-coverable (the proof is simple, see~\cite{pzcovering2}).

We say that  a finite set of languages \Lb is \emph{\Cs-coverable} when $(A^*,\Lb)$ is \Cs-coverable. If \Cs is a Boolean algebra, the above remark shows that \Cs-covering boils down to deciding whether a finite input set \Lb of regular languages is \Cs-coverable~\cite{pzcovering2}. Also, \Cs-separation is the special case when $|\Lb| = 2$.

\begin{remark}
  When discussing complexity, we consider the alphabet~$A$ as part of the~input.
\end{remark}

\subsection{Group languages}

A \emph{group} is a monoid $G$ such that every element $g \in G$ has an inverse $g\inv \in G$, \emph{i.e.}, $gg\inv = g\inv g = 1_G$. We write \grp for the class of all \emph{group languages}, \emph{i.e.}, which are recognized by a  morphism into a \emph{finite group}. One can verify that \grp is a Boolean algebra.

\begin{remark}
  No language theoretic definition of \grp is known (\emph{i.e.}, by specific regular expressions). There is however an automata-based definition: group languages are those recognized by permutation automata (\emph{i.e.}, which are simultaneously deterministic, co-deterministic and complete). For instance, $(ab^*a+ba^*b)^*$ is a group language. Indeed, it is recognized by the permutation automaton drawn below, and by the morphism into the symmetric group on $\{1,2,3\}$ that maps $a$ to the transposition $(1,2)$ and $b$ to the transposition $(1,3)$.
  \begin{center}
    \begin{tikzpicture}
      \node[lstate,initial above,accepting below] (q0) at (0,0) {$1$};
      \node[lstate] (q1) at (-2.0,0) {$2$};
      \node[lstate] (q2) at (2.0,0) {$3$};

      \draw[trans] (q1) edge [loop above] node[above] {$b$} (q1);
      \draw[trans] (q1) edge [bend left=20] node[above] {$a$} (q0);
      \draw[trans] (q0) edge [bend left=20] node[below] {$a$} (q1);

      \draw[trans] (q0) edge [bend left=20] node[above] {$b$} (q2);
      \draw[trans] (q2) edge [bend left=20] node[below] {$b$} (q0);
      \draw[trans] (q2) edge [loop above] node[above] {$a$} (q2);
    \end{tikzpicture}
  \end{center}
\end{remark}

We also look at two subclasses. The first one is the class \md of \emph{modulo languages}. For $w \in A^*$, we write $|w| \in \nat$ for the \emph{length} of $w$ (its number of letters). For all $q,r \in \nat$ such that $r < q$, we let $L_{q,r} = \{w \in A^* \mid |w| \equiv r \bmod q\}$. The class \md consists of all \emph{finite unions} of languages $L_{q,r}$. We turn to the class \abg of \emph{alphabet modulo testable languages}. If $w \in A^*$ and $a \in A$, let~$|w|_a \in \nat$ be the number of copies of ``$a$'' in $w$. For all $q,r \in \nat$ such that $r < q$ and all $a \in A$, let $L^a_{q,r} = \{w \in A^* \mid |w|_a \equiv r \bmod q\}$. We let \abg be the least class containing all languages $L^a_{q,r}$ and closed under union and intersection. It can be verified that both \md and \abg are Boolean algebras and that $\md \subsetneq \abg \subsetneq \grp$. In the paper, we prove that covering and separation are decidable for \grp, \abg and \md. The proofs are based exclusively on elementary arguments from automata theory. We rely on a common automata-based construction, which we now present.

\subsection{Automata-based construction}

We extend $A$ as a larger alphabet denoted by \tila. For each $a \in A$, we create a fresh letter $a\inv$ (by ``fresh'', we mean that $a\inv \not\in A$) and define $A\inv = \{a\inv \mid a \in A\}$. We let \tila be the disjoint union $\tila= A \cup A\inv$. Observe that we have a bijection $a \mapsto a\inv$ from $A$ to $A\inv$. We extend it as an involution of \tilas: for every $a \in A$, we let $(a\inv) \inv = a$. Then, for every $w = b_1b_2 \cdots b_n \in \tilas$ (with $b_1 ,\dots,b_n \in \tila)$, we define $w\inv = b_n\inv \cdots b_2\inv b_1\inv$ (we let $\veps\inv = \veps$). The map $w \mapsto w\inv$ is an involution of \tilas: $(w\inv)\inv = w$.

Every morphism $\alpha: A^* \to G$ into a group $G$ can be extended as morphism $\alpha:\tilas\to G$. For all $a\inv \in A\inv$, we let $\alpha(a\inv) = (\alpha(a))\inv$ (\emph{i.e.}, $\alpha(a\inv)$ is the inverse of $\alpha(a)$ in~$G$).   One may verify that the definition implies $\alpha(w\inv) = (\alpha(w))\inv$ for every $w\in\tilas$. We shall use this fact implicitly.

\begin{remark}
  This construction is standard, and used to introduce the free group over $A$ (which is a quotient of \tilas).  We do not need this notion. We use \tila as a syntactic tool: we build auxiliary \nfas over \tila from \nfas over $A$. We shall never consider arbitrary objects over~\tila: all arbitrary \nfas that we encounter are implicitly assumed to be over $A$.
\end{remark}

We turn to the main construction. Let $\As = (Q,I,F,\delta)$ be an arbitrary \nfa over the original alphabet $A$ (\emph{i.e.},
$\delta \subseteq Q \times A \times Q$). We build a new \nfa \acauto over the extended alphabet \tila. We say that two states $q,r \in Q$ are \emph{strongly connected} if we have $\alauto{q}{r} \neq \emptyset$ and $\alauto{r}{q} \neq\emptyset$  (\emph{i.e.}, $q$ and~$r$ are in the same strongly connected component of the graph representation of~\As). This is an equivalence relation. We let $\acdel \subseteq Q\times \tila \times Q$ as the following extended set of~transitions:
\[
  \acdel = \delta \cup \left\{(r,a\inv,q) \mid
    \begin{array}{l}
      (q,a,r) \in \delta \text{ and} \\
      \text{\,$q,r$ are strongly connected}
    \end{array}\right\}. \]
We let $\acauto = (Q,I,F,\acdel)$, so that $L(\acauto) \subseteq \tilas$. Observe that for all $u \in \tilas$ and all strongly connected $q,r \in Q$, we have $u \in \aclauto{q}{r}$ if and only if $u\inv \in \aclauto{r}{q}$. Moreover, we can compute \acauto from \As in polynomial time: this boils down to computing the pairs of strongly connected states, \emph{i.e.}, to directed graph reachability. The following lemma is used to ``simulate'' the runs in \acauto into the original~\nfa~\As.

\begin{lemma} \label{lem:trans}
  Let $\As = (Q,I,F,\delta)$ be an \nfa and $\alpha: A^*\to G$ be a morphism into a finite group. For every $q,r \in Q$ and $w \in \aclauto{q}{r} \subseteq \tilas$, there exists a word $w' \in A^*$ such that $w' \in \alauto{q}{r}$ and $\alpha(w) = \alpha(w')$.
\end{lemma}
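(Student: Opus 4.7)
The plan is to induct on the number of ``inverse transitions'' appearing in a fixed run $\rho$ of $\acauto$ labeled by $w$ from $q$ to $r$, where by \emph{inverse transition} I mean a transition in $\acdel \setminus \delta$, that is one of the form $(s, a\inv, s')$ with $(s', a, s) \in \delta$ and $s, s'$ strongly connected in $\As$. If $\rho$ uses no inverse transition, then $w$ is already in $A^*$ and lies in $\alauto{q}{r}$, so we take $w' = w$. Otherwise we show how to replace a single inverse transition by a genuine $\As$-path with the same $\alpha$-value, which reduces the count by one and lets the induction close.

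The core step is the following replacement lemma. Suppose $(s, a\inv, s')$ is an inverse transition used in $\rho$, coming from $(s', a, s) \in \delta$ with $s, s'$ strongly connected. By strong connectedness, there exists some $v \in \alauto{s}{s'}$. Then $va \in \alauto{s}{s}$ is a genuine cycle of $\As$ at $s$. Let $k \geq 1$ be the order of $\alpha(va)$ in the finite group $G$, so that $\alpha((va)^k) = 1_G$. The word $u = (va)^{k-1} v$ belongs to $\alauto{s}{s'}$ (traverse the cycle $k-1$ times, then follow $v$), and
\[
  \alpha(u) = \alpha(va)^{k-1}\,\alpha(v) = \alpha(va)\inv \alpha(v) = \alpha(a)\inv = \alpha(a\inv).
\]
Hence $u \in A^*$ labels an $\As$-path from $s$ to $s'$ with the same $\alpha$-image as the single inverse transition we want to remove.

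Now I would splice $u$ into the run $\rho$ in place of the chosen inverse transition. The result is a new run $\rho'$ in $\acauto$ from $q$ to $r$ whose label $w''$ satisfies $\alpha(w'') = \alpha(w)$ (since $\alpha$ is a morphism and we only substituted a subword with one of equal image), and whose number of inverse transitions has strictly decreased. Induction on this count yields a run in $\acauto$ using \emph{no} inverse transition, that is a run of $\As$, labeled by some $w' \in A^*$ from $q$ to $r$, with $\alpha(w') = \alpha(w)$, as required.

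\textbf{Main obstacle.} The only non-routine point is producing, for each inverse letter $a\inv$, a word in $A^*$ connecting the right pair of states and having $\alpha$-value $\alpha(a)\inv$. Strong connectedness alone just gives \emph{some} connecting word; matching its $\alpha$-image with $\alpha(a)\inv$ crucially uses that $G$ is finite, hence $\alpha(va)$ has finite order. This is essentially the only place where the group hypothesis enters the argument, and everything else is bookkeeping on runs.
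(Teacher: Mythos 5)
Your proof is correct and follows essentially the same route as the paper. The paper also reduces to replacing a single inverse transition $(s,a\inv,t)\in\acdel$ by a word $x\in\alauto{s}{t}$ with $\alpha(x)=\alpha(a)\inv$, and constructs $x=y(ay)^{p-1}$ for $y\in\alauto{s}{t}$ and $p$ with $g^p=1_G$ for all $g\in G$; since $y(ay)^{p-1}=(ya)^{p-1}y$, this is literally your word $u=(va)^{k-1}v$ up to using a uniform exponent $p$ rather than the order $k$ of $\alpha(va)$, and the splicing/induction you make explicit is left implicit in the paper's phrasing.
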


\begin{proof}
  We have to show that for every \mbox{$(s,a\inv,t) \in \acdel$} where $a \in A$, there exists $x \in \alauto{s}{t}$ such that $\alpha(x) = (\alpha(a))\inv$. By definition of \acdel, we have $(t,a,s)\in\delta$ and $s,t$ are strongly connected. Hence, we get $y \in  \alauto{s}{t}$. Since $G$ is a \emph{finite} group, it is standard that there exists $p \geq 1$ such that $g^p=1_G$ for all $g\in G$. Thus, $\alpha((ay)^p) = 1_G$. Let $x = y(ay)^{p-1}$. By hypothesis on $a$ and $y$, we know that $x \in \alauto{s}{t}$. Since $\alpha(ax) = \alpha((ay)^p) = 1_G$, we obtain $\alpha(x) = (\alpha(a))\inv$, as~desired.
\end{proof}

\section{Covering for group languages}
\label{sec:grp}
We prove that separation and covering are decidable for~\grp. Historically, this was first obtained as a corollary of a difficult independent algebraic theorem by Ash~\cite{Ash91}  (see~\cite{henckell:hal-00019815} for details and~\cite{almeidasep} for the link with separation). Our algorithm is essentially the one obtained from this theorem. Yet, we choose a different presentation: our inputs are represented by \nfas whereas the original algorithm considers a single monoid morphism recognizing all inputs. In itself, the \nfa-based procedure is merely a natural reformulation of the one based on monoids (see Section~\ref{sec:history} for details). On the other hand, we work exclusively with \nfas, which is a new idea, and nondeterminism is a key ingredient in our proof.

This proof is our main contribution. The known proofs of Ash's theorem (\emph{e.g.}, \cite{Ash91,Auinger04,ribesz,Herwig1999ExtendingPA}) are arduous. Typically, they rely on specialized notions from independent fields such as algebra, topology or model theory. Moreover, they use black box results. In contrast, our proof is direct. While still difficult, it is fully self-contained and relies only on elementary notions from automata theory and combinatorics on words.

\subsection{Statement}

The procedure is based on a theorem characterizing the finite sets of regular languages that are \grp-coverable. We first extend the core construction $\As \mapsto \acauto$ introduced in the previous section (this extension is specific to~\grp-covering).

Given an arbitrary \nfa \As, we further modify the \nfa \acauto and construct a new \nfa \emph{with \veps-transitions} \ecauto (these are the only \nfas with \veps-transitions that we consider). The definition is based on a language $L_\veps \subseteq \tilas$ that we define first. We introduce a standard rewriting rule that one may apply to words in \tilas. If $w \in\tilas$ contains an infix of the form $aa\inv$ or $a\inv a$ for some $a\in A$, one may delete it. More precisely, given $w, w' \in \tilas$, we write $w \rightarrow w'$ if there exist $x,y \in \tilas$ and $a \in A$ such that either $w = xaa\inv y$ or $w =xa\inv a y$, and $w' = xy$. We write ``$\xrightarrow{*}$'' for the reflexive transitive closure of ``$\rightarrow$''. That is, given $w,w' \in \tilas$, we have $w\xrightarrow{*} w'$ if $w = w'$ or there exist words $w_0,\dots,w_n \in \tilas$ with $n \geq 1$ such that $w = w_0 \rightarrow w_1 \rightarrow w_2  \rightarrow \cdots \rightarrow w_n = w'$. We let $L_\veps = \{w \in \tilas \mid w \xrightarrow{*} \veps\}$. This is a variant of the well-known \emph{Dyck language} which is \emph{not} regular (it is only context-free).

Consider an \nfa $\As = (Q,I,F,\delta)$ and the associated \nfa $\acauto = (Q,I,F,\acdel)$. We extend \acdel with \veps-transitions. We define $\ecdel \subseteq Q \times (\tila \cup \{\veps\}) \cup Q$ as follows:
\[
  \ecdel = \acdel \cup \bigl\{(q,\veps,r) \mid q,r \in Q \text{ and }L_\veps \cap \aclauto{q}{r} \neq \emptyset\bigr\}. \]
Moreover, we let $\ecauto = (Q,I,F,\ecdel)$.

\begin{example}\label{exa:sep}
  Let $L_1 = b(ab)^*$ and $L_2 = aa^*$. These languages are recognized by the following two \nfas $\As_1$ and $\As_2$:

  \begin{tikzpicture}
    \node[lstate,initial left] (q0) at (0,0) {};
    \node[lstate,accepting above] (q1) at (1.5,0) {};
    \node[lstate] (q2) at (3,0) {};
    \node at (1.5,-.9) {Automaton $\As_1$};

    \draw[trans] (q0) edge node[above] {$b$} (q1);
    \draw[trans] (q1) edge [bend left=20] node[above] {$a$} (q2);
    \draw[trans] (q2) edge [bend left=20] node[below] {$b$} (q1);

    \node[lstate,initial left] (r0) at (5,0) {};
    \node[lstate,accepting right] (r1) at (6.5,0) {};
    \node at (5.75,-.9) {Automaton $\As_2$};

    \draw[trans] (r0) edge node[above] {$a$} (r1);
    \draw[trans] (r1) edge [loop above] node[above] {$a$} (r1);
  \end{tikzpicture}

  \noindent
  We compute \econs{\As_1} and \econs{\As_2} (omitting \veps-labeled self-loops).

  \begin{tikzpicture}
    \node[lstate,initial left] (q0) at (0,0) {};
    \node[lstate,accepting above] (q1) at (1.5,0) {};
    \node[lstate] (q2) at (3,0) {};
    \node at (1.5,-1) {Automaton $\econs{\As_1}$};

    \draw[trans] (q0) edge node[above] {$b$} (q1);
    \draw[trans] (q1) edge [bend left=20] node[above=-0.1] {$b\inv,a$} (q2);
    \draw[trans] (q2) edge [bend left=20] node[below=-0.1] {$b,a\inv$} (q1);
    \draw[trans] (q0) edge [bend left=65] node[above] {$\veps$} (q2);

    \node[lstate,initial left] (r0) at (5,0) {};
    \node[lstate,accepting right] (r1) at (6.5,0) {};
    \node at (5.75,-1) {Automaton $\econs{\As_2}$};

    \draw[trans] (r0) edge node[below] {$a$} (r1);
    \draw[trans] (r1) edge [loop above] node[above] {\quad$a,a\inv$} (r1);
    \draw[trans] (r0) edge [bend left=45] node[above] {$\veps$} (r1);
  \end{tikzpicture}
\end{example}

Note that one may compute \ecauto from \acauto (hence from \As) in \emph{polynomial time}. Indeed, the construction creates a new \veps-transition $(q,\veps,r)$ if and only if $\aclauto{q}{r}$ (which is regular) intersects $L_\veps$ (which is context-free). It is standard that this problem is decidable in polynomial time~\cite{BarHillel61}. We complete the definition with two simple but useful properties.

\begin{restatable}{fact}{fjump}\label{fct:jump}
  Let $\As= (Q,I,F,\delta)$ be an \nfa, $q,r \in Q$ and $w \in \tilas$ such that $w \in \aelauto{q}{r}$. If $w  \in L_\veps$, then $(q,\veps,r) \in \ecdel$. Also, if $q,r$ are strongly connected, then $w\inv \in \aelauto{r}{q}$.
\end{restatable}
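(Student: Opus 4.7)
The plan is to unpack the run witnessing $w \in \aelauto{q}{r}$ as a sequence of states $q = q_0, q_1, \ldots, q_n = r$ together with transitions, each either an $\acdel$-transition labelled by some $b_i \in \tila$ or an $\veps$-transition added in the construction of $\ecauto$. By definition of $\ecdel$, every $\veps$-step $(q_{j-1},\veps,q_j)$ in the run comes with a witness $v_j \in L_\veps \cap \aclauto{q_{j-1}}{q_j}$. This decomposition is the common starting point for both parts.

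For the first assertion, I would substitute each $\veps$-step by its witness $v_j$ in the sequence. This yields a word $w'$ labelling a run in $\acauto$ (without $\veps$-transitions) from $q$ to $r$, so $w' \in \aclauto{q}{r}$. Because each inserted $v_j$ lies in $L_\veps$, i.e.\ reduces to $\veps$, I get $w' \xrightarrow{*} w$; combined with the hypothesis $w \xrightarrow{*} \veps$, this gives $w' \in L_\veps$. Hence $L_\veps \cap \aclauto{q}{r} \neq \emptyset$, which is exactly the condition for $(q,\veps,r) \in \ecdel$.

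For the second assertion, the preliminary step is to show that every intermediate state $q_i$ lies in the strongly connected component of $q$ and $r$. To this end I would check that every kind of transition in $\ecdel$ preserves reachability in $\As$: original transitions do so trivially; transitions in $\acdel \setminus \delta$ (labelled by some $a\inv$) exist only between strongly connected states by construction; and for an $\veps$-step, the witness $v_j \in \aclauto{q_{j-1}}{q_j}$ lifts by Lemma~\ref{lem:trans} (applied to any morphism into a finite group) to an actual run in $\As$ from $q_{j-1}$ to $q_j$. Thus $q = q_0$ reaches $q_1$ reaches $\cdots$ reaches $q_n = r$ in $\As$, and composing with $r \to q$ (strong connectedness of $q,r$) shows each $q_i$ is strongly connected to both $q$ and $r$.

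Once this is in place, reversal is transition-by-transition. An $\acdel$-step between strongly connected $q_{i-1}, q_i$ has its reverse in $\acdel$ directly from the definition. For an $\veps$-step with witness $v_j$, the involution interacts well with the rewriting (each $aa\inv$ or $a\inv a$ infix maps to an infix of the same shape), so $v_j\inv \in L_\veps$; combined with the symmetry $u \in \aclauto{s}{t} \iff u\inv \in \aclauto{t}{s}$ for strongly connected $s,t$ (stated just before the lemma), we get $v_j\inv \in L_\veps \cap \aclauto{q_j}{q_{j-1}}$, so $(q_j, \veps, q_{j-1}) \in \ecdel$. Concatenating the reverse transitions in reverse order yields the desired run labelled by $w\inv$ from $r$ to $q$. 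The main obstacle, though modest, is precisely the SCC containment step in the third paragraph: without it, one could not conclude that the individual $\acdel$-transitions along the run admit reverses nor that the symmetry applies to the $\veps$-witnesses.
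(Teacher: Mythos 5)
Your proof is correct, and for the first assertion it is essentially the paper's argument: replace each $\veps$-step by its witness to get a word $x \in \aclauto{q}{r}$ with $x \xrightarrow{*} w \xrightarrow{*} \veps$, hence $(q,\veps,r) \in \ecdel$.

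For the second assertion you take a genuinely different (if closely related) route. The paper stays at the level of the single word $x = u_0v_1u_1\cdots v_nu_n \in \aclauto{q}{r}$: it applies the stated observation ($x \in \aclauto{q}{r} \iff x\inv \in \aclauto{r}{q}$ for strongly connected $q,r$) once to $x$, then decomposes $x\inv = u_n\inv v_n\inv\cdots u_0\inv$ and notes that each $v_i\inv$ still reduces to $\veps$, yielding $w\inv \in \aelauto{r}{q}$ directly. You instead work transition-by-transition in $\ecauto$: you first prove that every intermediate state of the run lies in the SCC of $q$ and $r$, and then reverse each $\acdel$-step and each $\veps$-step individually, invoking the observation only on the small witnesses $v_j$ rather than on the whole word $x$. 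This is more verbose, but it has the merit of making explicit exactly the SCC-containment argument that the paper compresses into the unproved ``Observe that\ldots'' sentence preceding the fact --- your version essentially supplies a proof of that observation as a by-product. One small simplification: to show that an $\veps$-step preserves $\As$-reachability you invoke Lemma~\ref{lem:trans}, but this is heavier than needed; it suffices to note that every edge of $\acauto$ already implies reachability in $\As$ (trivially for $\delta$-edges, and by the SCC side-condition in the definition of $\acdel$ for $a\inv$-edges), so a run in $\acauto$ gives a path in $\As$. Either way the argument is sound.
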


\begin{proof}
  Since $w \in \aelauto{q}{r}$, the definition of $\ecauto$ yields $u_0,\dots,u_n\in \tilas$, $v_1,\dots,v_n\in L_\veps$ and $x \in \aclauto{q}{r}$ such that $w = u_0 \cdots u_n$ and $x = u_0v_1u_1 \cdots v_nu_n$. Assume first that $w \in L_\veps$. Since $v_1,\dots,v_n\in L_\veps$, we have $v_i \xrightarrow{*} \veps$ for every $i \leq n$. Hence, $x \xrightarrow{*} w$ and since $w \in L_\veps$, we get $x \xrightarrow{*} \veps$. Since $x \in \aclauto{q}{r}$, we get $(q,\veps,r) \in \ecdel$ by definition. Assume now that $q,r$ are strongly connected. Since $x \in \aclauto{q}{r}$, we get $x\inv \in \aclauto{r}{q}$ by definition of \acauto. Moreover, $x\inv = u_n\inv v_n\inv  \cdots u_1\inv v_1\inv u_0\inv$ and since $v_i \xrightarrow{*} \veps$ for all $i \leq n$, we have $v_i\inv \xrightarrow{*} \veps$ for all $i \leq n$. Thus, $x\inv \in \aclauto{r}{q}$ implies that $u_n\inv  \cdots u_0\inv \in \aelauto{r}{q}$. Since $u_n\inv  \cdots u_0\inv = w\inv$, this completes the proof.
\end{proof}

\noindent
Let us now extend Lemma~\ref{lem:trans} to this new automaton \ecauto.

\begin{restatable}{lemma}{jtrans}\label{lem:jtrans}
  Let $\As = (Q,I,F,\delta)$ be an \nfa and \mbox{$\alpha: A^*\to G$} be a morphism into a finite group. For every $q,r \in Q$ and \mbox{$w  \in \aelauto{q}{r} \subseteq \tilas$}, there exists a word $w' \in A^*$ such that $w' \in \alauto{q}{r}$ and $\alpha(w) = \alpha(w')$.
\end{restatable}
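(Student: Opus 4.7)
The plan is to reduce the statement to Lemma~\ref{lem:trans} by "filling in" each $\veps$-transition used in a run of $\ecauto$ with a genuine word in \tilas that labels a run of $\acauto$ and has trivial image under $\alpha$. Given $q,r \in Q$ and $w \in \aelauto{q}{r}$, I fix an accepting run in \ecauto and factor it as alternating blocks of transitions from $\acdel$ and \veps-jumps. For each \veps-jump $(s,\veps,t) \in \ecdel \setminus \acdel$ occurring in the run, the definition of \ecdel provides some witness $y_{s,t} \in L_\veps \cap \aclauto{s}{t}$. Substituting each such $y_{s,t}$ in place of the corresponding \veps-jump yields a word $x \in \tilas$ which, by construction, labels a run of \acauto from $q$ to $r$, that is, $x \in \aclauto{q}{r}$.

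The key observation is that every $y \in L_\veps$ satisfies $\alpha(y) = 1_G$. Indeed, a single rewriting step $y \rightarrow y'$ deletes an infix of the form $aa\inv$ or $a\inv a$; since $\alpha(a)\alpha(a\inv) = \alpha(a\inv)\alpha(a) = 1_G$ by definition of the extension of $\alpha$ to \tilas, we get $\alpha(y) = \alpha(y')$. Iterating this along $y \xrightarrow{*} \veps$, we obtain $\alpha(y) = \alpha(\veps) = 1_G$. Consequently, the word $x$ obtained above has the same $\alpha$-image as $w$: the inserted factors $y_{s,t}$ all contribute $1_G$, so $\alpha(x) = \alpha(w)$.

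It then suffices to apply Lemma~\ref{lem:trans} to $x \in \aclauto{q}{r}$ to produce a word $w' \in A^*$ such that $w' \in \alauto{q}{r}$ and $\alpha(w') = \alpha(x) = \alpha(w)$, which is exactly what is required.

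The argument is essentially bookkeeping, and I do not expect a genuine obstacle: the two small things to check are (i) that every run in \ecauto indeed decomposes cleanly into $\acdel$-blocks separated by \veps-jumps, so that the substitution produces a valid run of \acauto, and (ii) that $\alpha$ is invariant under the rewriting rule defining $L_\veps$, which is immediate from the extension $\alpha(a\inv) = (\alpha(a))\inv$. The deeper content is already packaged in Lemma~\ref{lem:trans}, which handles the reverse letters $a\inv$ using the finiteness of $G$.
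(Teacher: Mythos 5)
Your proof is correct and follows essentially the same route as the paper's: decompose the \ecauto-run into $\acdel$-segments and $\veps$-jumps, replace each $\veps$-jump by a witness word from $L_\veps \cap \aclauto{s}{t}$ to obtain $x \in \aclauto{q}{r}$ with $\alpha(x) = \alpha(w)$, and conclude by Lemma~\ref{lem:trans}. The only difference is that you spell out the (easy) verification that $\alpha$ is constant on $\xrightarrow{*}$-equivalence classes, which the paper states without elaboration.
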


\begin{proof}
  As $w  \in \aelauto{q}{r}$, there are \mbox{$u_0,\dots,u_n \in \tilas$} and $v_1,\dots,v_n\in L_\veps$ such that $x=u_0v_1u_1 \cdots v_nu_n \in \aclauto{q}{r}$ and $w = u_0 \cdots u_n$. Lemma~\ref{lem:trans} yields a word $w' \in A^*$ such that $w' \in \alauto{q}{r}$ and $\alpha(x) = \alpha(w')$. Moreover, since $v_1,\dots,v_n\in L_\veps$, we have $v_i \xrightarrow{*} \veps$ for all $i \leq n$. Hence, $\alpha(v_i) = 1_G$. We get $\alpha(w) = \alpha(x) = \alpha(w')$, as~desired.
\end{proof}

We now state the main theorem. It characterizes \grp-coverability using the construction $\As \mapsto \ecauto$.

\begin{restatable}{theorem}{grpsep} \label{thm:grpsep}
  Let $k \geq 1$ and let $\As_1,\dots,\As_k$ be \nfas. The following conditions are equivalent:
  \begin{enumerate}
    \item The set $\big\{L(\As_1),\dots,L(\As_k)\big\}$ is \grp-coverable.
    \item We have $\bigcap_{i \leq k} L(\econs{\As_i}) = \emptyset$.
  \end{enumerate}
\end{restatable}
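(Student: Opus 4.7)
I will prove the two implications separately; direction (2) $\Rightarrow$ (1) carries the technical weight.

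\emph{Direction (1) $\Rightarrow$ (2).} I argue by contradiction, using Lemma~\ref{lem:jtrans} as the bridge between the two frameworks. Suppose $\{L(\As_1),\dots,L(\As_k)\}$ admits a finite \grp-cover $\Kb$ such that every $K \in \Kb$ is disjoint from some $L(\As_i)$. Because $\Kb$ is finite and each of its members is a group language, a direct product of the recognizing groups yields a single morphism $\alpha : A^* \to G$ into a finite group that simultaneously recognizes every language in $\Kb$. If some $w$ belonged to $\bigcap_i L(\econs{\As_i})$, then applying Lemma~\ref{lem:jtrans} to every $\As_i$ with $\alpha$ would supply a word $w_i \in L(\As_i)$ satisfying $\alpha(w_i) = \alpha(w)$. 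Picking $K \in \Kb$ with $w_1 \in K$, the fact that $K$ is a union of $\alpha$-classes forces $w_i \in K$ for every $i$, contradicting the disjointness of $K$ from some $L(\As_\ell)$ (since then $w_\ell \in K \cap L(\As_\ell)$).

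\emph{Direction (2) $\Rightarrow$ (1).} I reduce this to producing a single morphism $\alpha : A^* \to G$ into a finite group with $\bigcap_i \alpha(L(\As_i)) = \emptyset$. Indeed, once such an $\alpha$ is given, the family of fibers $\{\alpha\inv(g) \mid g \in G\}$ is a finite partition of $A^*$ into group languages, and each fiber is disjoint from at least one $L(\As_i)$ by choice, so this family is a \grp-cover witnessing (1). The hypothesis $\bigcap_i L(\econs{\As_i}) = \emptyset$ translates into the assertion that the synchronous product $\Bs$ of the \nfas $\econs{\As_i}$ over the extended alphabet $\tila$ has empty language, and the task is to mine this emptiness. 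The inverse letters of $\tila$ together with the $\veps$-transitions added from the Dyck-like language $L_\veps$ are the key structural tools: they are precisely what permits interpreting paths in $\Bs$ as formal words that can be inverted and cancelled, and hence what makes a finite group operation emerge naturally from suitable equivalence classes extracted from the state structure of $\Bs$.

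\textbf{Main obstacle.} The difficult step is this algebraic extraction from a combinatorial emptiness: turning "no accepting run in $\Bs$" into a concrete finite group certifying non-coverability. A natural route is by contraposition, assuming that $\bigcap_i \alpha(L(\As_i)) \neq \emptyset$ for every finite group morphism $\alpha$ and then using nondeterminism inside each $\As_i$ to stitch the resulting tuples of $\alpha$-equivalent words $w_1,\dots,w_k$ into a common run in $\Bs$: inverse letters $a\inv$ absorb disagreements between runs in different $\As_i$, while $\veps$-jumps of $\ecauto$ collapse factors whose $\alpha$-image is $1_G$. This is where the announced crucial use of nondeterminism must appear, and where the word-combinatorial argument is expected to replace the algebraic-topological machinery used in previous proofs of Ash's theorem.
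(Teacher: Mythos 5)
Your direction $(1) \Rightarrow (2)$ is essentially identical to the paper's: fix a cover, take the product morphism $\alpha$ into a finite group recognizing every language in the cover, apply Lemma~\ref{lem:jtrans} to a putative common word $w\in\bigcap_i L(\econs{\As_i})$ to obtain $w_i\in L(\As_i)$ with $\alpha(w_i)=\alpha(w)$, and derive a contradiction. This matches the paper's argument (the paper phrases it as a direct contrapositive rather than a contradiction, but the substance is the same).

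For direction $(2)\Rightarrow(1)$ you correctly identify the target: produce one morphism $\alpha:A^*\to G$ into a finite group whose fibers already form a suitable cover — equivalently, such that for no $g\in G$ does $\alpha\inv(g)$ meet every $L(\As_i)$. This is exactly what the paper isolates in the notion of a \emph{synchronizer} (Proposition~\ref{prop:covesynch}). However, this reduction is the \emph{easy} part of the direction, and what you provide beyond it is only a description of the difficulty, not a resolution of it. Moreover, the contraposition route you sketch is likely to be a dead end: assuming "$\bigcap_i\alpha(L(\As_i))\neq\emptyset$ for every finite group morphism $\alpha$" gives you, for each $\alpha$, a \emph{different} tuple of witnesses $(w_1,\dots,w_k)$; there is no evident mechanism for amalgamating this universally quantified family into a single word $u\in\tilas$ lying in $\bigcap_i L(\econs{\As_i})$, and the "stitching via $a\inv$ and $\veps$-jumps" you gesture at is not specified. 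The paper instead proceeds \emph{constructively and directly}: it works with the disjoint union $\As$ of the $\As_i$, defines for each state pair a combinatorial distance $d(q,w,r)$ (the minimal number of frontier transitions on a run), introduces $\ell$-synchronizers satisfying two carefully calibrated conditions, and builds one by induction on $\ell$. The base case uses the equivalence $q\simeq r$ and the group of permutations of its classes; the inductive step feeds an $(\ell-1)$-synchronizer $\beta$ into a wreath-product-style extension $\alpha(w)=(\beta(w),f_w)$ that tracks the parities of the counters $\#_{h,a}(w)$, and then argues through the machinery of $F_S$-alternating pairs and stabilizing pairs (Lemmas~\ref{lem:thekey}, \ref{lem:stabalt}, \ref{lem:alt}, Corollary~\ref{cor:mainstuff}). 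None of this appears in your proposal. So: the plan is right, and the "main obstacle" you name is the genuine one, but the entire technical content of $(2)\Rightarrow(1)$ — the construction that turns the emptiness hypothesis into a concrete finite group — is missing, and your suggested contraposition strategy points away from the construction that actually works.
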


Clearly, the second condition in Theorem~\ref{thm:grpsep} can be decided. Indeed, for every $i \leq k$, we can compute $\econs{\As_i}$ from $\As_i$ in polynomial time. Moreover, it one can decide whether an arbitrary number of \nfas intersect (in polynomial space). Hence, we obtain as desired that \grp-covering is decidable and in \pspace (it is unknown whether this is tight). Additionally, when the number $k$ of inputs is fixed, intersection can be decided in polynomial \emph{time}. In particular, \grp-separation (the case $k =2$) is in~\ptime. We prove at the end of the section that the problem is \ptime-complete.

\begin{example}
  Recall the languages from Example~\ref{exa:sep}. Observe that $a\inv\in L(\econs{\As_1})\cap L(\econs{\As_2})$.  We deduce from Theorem~\ref{thm:grpsep} that no group language can separate $L_1$ from~$L_2$.
\end{example}

\subsection{Proof argument}

We fix a number $k \geq 1$ and for every $j \leq k$, we consider an \nfa $\As_j = (Q_j,I_j,F_j,\delta_j)$. The two implications in the theorem are handled independently. Let us start with $1) \Rightarrow 2)$.

\smallskip
\noindent
{\bf Implication $1) \Rightarrow 2)$.} We prove the contrapositive. Assume that there exists $w \in \bigcap_{j \leq k} L(\econs{\As_j})$. We prove that $\{L(\As_1),\dots,L(\As_k)\}$ is \emph{not} \grp-coverable. Hence, we fix an arbitrary \grp-cover \Kb of $A^*$ and exhibit $K \in \Kb$ such that $K \cap L(\As_j) \neq \emptyset$ for every $j \leq k$.

For all $i \leq n$, let $\alpha_i: A^* \to G_i$ be a morphism into a finite group recognizing $K_i \in \grp$. Clearly, $G = G_1 \times \cdots \times G_n$ is a finite group for the componentwise multiplication and the morphism $\alpha: A^* \to G$ defined by $\alpha(w) = (\alpha_1(w),\dots,\alpha_n(w))$ recognizes all languages $K_i$. Since $w \in L(\econs{\As_j})$ for every $j\leq k$, Lemma~\ref{lem:jtrans} yields $w_j \in A^*$ such that $w_j\in L(\As_j)$ and $\alpha(w_j) = \alpha(w)$. Since \Kb is a cover of $A^*$, there exists $K \in \Kb$ such that $w_1 \in K$. Hence, since $K$ is recognized by $\alpha$ and $\alpha(w_1)=\cdots = \alpha(w_k) = \alpha(w)$, we get $w_1,\dots,w_k \in K$. Thus, $K \cap L(\As_j) \neq \emptyset$ for all $j \leq k$, as desired.

\smallskip
\noindent
{\bf Implication $2) \Rightarrow 1)$.}  Let $Q = \bigcup_{j \leq k} Q_j$ (we assume without loss of generality that the sets $Q_j$ are pairwise disjoint) and $\delta = \bigcup_{j \leq k} \delta_j$. Let $\As = (Q,\emptyset,\emptyset,\delta)$. A \emph{synchronizer} (for \As) is a morphism $\alpha: A^* \to G$ into a finite group $G$ such that for all $n \geq 1$ and $q_1,\dots,q_n, r_1,\dots,r_n \in Q$, if there exists $g \in G$ such that $\alpha\inv(g) \cap \alauto{q_j}{r_j} \neq \emptyset$ for all $j\leq n$, then there exists $u \in \tilas$ such that $u \in \aelauto{q_j}{r_j}$ for all~$j \leq n$.

\begin{proposition}\label{prop:covesynch}
  There exists a synchronizer for \As.
\end{proposition}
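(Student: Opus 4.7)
The plan is to exhibit a concrete morphism $\alpha: A^* \to G$ into a finite group $G$ that enjoys the synchronization property for every $n \geq 1$ simultaneously, and then to verify the property by induction on $n$. Since the definition demands that runs in all $n$ components share a single word $u \in \tilas$ as soon as they share an $\alpha$-image, the group $G$ must be rich enough to ``see'' the combinatorics of $\As$. I would take $\alpha$ to be a common refinement of morphisms recognizing each pair language $\alauto{q}{r}$ for $q, r \in Q$; concretely, an iterated group-closure of the transition monoid of $\As$ (or a direct product of any choice of group morphisms recognizing these finitely many languages) yields a finite-group-valued morphism $\alpha$ in which $\alpha(u)=\alpha(v)$ forces $u$ and $v$ to be interchangeable inside any run of $\As$.

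With $\alpha$ fixed, I would prove the synchronization property by induction on $n$. The base case $n=1$ is immediate, since $\delta \subseteq \ecdel$ gives $\alauto{q_1}{r_1} \subseteq \aelauto{q_1}{r_1}$. For the inductive step, assume the property holds for $n-1$ and fix $u_1, \ldots, u_n$ with $u_j \in \alauto{q_j}{r_j}$ and $\alpha(u_j) = g$ for all $j$. By induction, there is $u' \in \tilas$ with $u' \in \aelauto{q_j}{r_j}$ for $j \leq n-1$. I would then splice $u'$ together with $u_n$, its formal inverse $u_n\inv$ (viewed as a word of $\tilas$), and pumping powers produced via $g^p = 1_G$, following the construction already used in the proof of Lemma~\ref{lem:trans}. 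The goal is a word whose every prefix/suffix used by component $j \leq n-1$ either replays $u'$ or is replaceable by an $\veps$-jump via Fact~\ref{fct:jump}, while the skeleton formed by $u_n$ and its inverse guarantees acceptance in component $n$.

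The main obstacle is making this splicing simultaneously valid in all $n$ components of $\ecauto$. The subtlety is that $u'$ has no a priori $\alpha$-image (it may contain inverse letters and $\veps$-jumps), so one cannot naively concatenate it with $u_n$ and hope that the group-theoretic bookkeeping lines up. The trick I would exploit is that Lemma~\ref{lem:jtrans} lets us simulate any $\ecauto$-run by an $\As$-run with the same $\alpha$-image, so along each already-synchronized component $j \leq n-1$ we may replace $u'$ by an honest $A^*$-word $u'_j \in \alauto{q_j}{r_j}$ with $\alpha(u'_j) = \alpha(u')$. Then the word $w_j := u'_j \cdot (u_j\inv u_n)$, suitably pumped to $w_j^p$, lies in $L_\veps$ on its ``correction'' part and thus collapses to an $\veps$-jump on component $j$, while yielding exactly $u_n$ on component $n$. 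Assembling these observations into a single $u \in \tilas$ that works for all $j \leq n$ is the combinatorial heart of the argument, and I expect it to be the step requiring the most care: one needs the strongly-connected structure of $\As$ so that every inverse letter used in the construction actually lies in $\acdel$, and one needs the fineness of $\alpha$ so that the various replacements $u' \rightsquigarrow u'_j$ are compatible across components.
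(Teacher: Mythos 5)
Your proposal has two linked gaps, both of which are fatal.

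\textbf{The choice of $\alpha$ does not make sense.} You propose taking $\alpha$ to be a direct product of ``group morphisms recognizing the finitely many languages $\alauto{q}{r}$''. But these languages are arbitrary regular languages, not group languages — for instance, if the transition $(q,a,r)$ is a frontier transition (not inside a strongly connected component), then $\alauto{q}{r}$ can be a finite language, which is never recognized by a nontrivial group morphism. So no such morphisms exist. Similarly, the claim that your $\alpha$ would satisfy ``$\alpha(u)=\alpha(v)$ forces $u$ and $v$ to be interchangeable inside any run of \As'' would imply that every $\alauto{q}{r}$ is a union of $\alpha$-classes, i.e. a group language — again false. The ``iterated group-closure of the transition monoid'' is not a defined object here, and cannot escape the same obstacle: whatever finite group $G$ you aim for, the map $A^*\to M(\As)\to G$ loses information, and the burden is to show that the information it keeps is enough. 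Your proposal does not engage with that burden. The right morphism must be built with the strongly connected structure of \As hardwired into it; the paper's construction makes this explicit by first working inside strongly connected components (a permutation group on $\simeq$-classes) and then climbing across frontier transitions.

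\textbf{The induction on $n$ does not close.} Even granting some $\alpha$, your inductive step produces, for each $j\leq n-1$, a separate candidate word $w_j = u'_j(u_j\inv u_n)$. These are different words for different $j$; the synchronizer property requires a \emph{single} $u\in\tilas$ valid in all $n$ components simultaneously, and nothing in the sketch collapses the family $(w_j)_j$ into one. Moreover, the proposed ``correction'' $u_j\inv u_n$ is not in $L_\veps$: having $\alpha(u_j)=\alpha(u_n)$ is far weaker than $u_j\inv u_n \xrightarrow{*}\veps$, which would require $u_j$ and $u_n$ to be equal after free reduction, not merely $\alpha$-equivalent. Pumping $w_j$ to some power $p$ with $g^p=1_G$ fixes the group image but does nothing to put the word in $L_\veps$. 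In short, the combinatorics you flag as ``the step requiring the most care'' is exactly where the argument fails, and it cannot be patched without changing the shape of the induction.

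The paper takes a genuinely different route: it inducts not on $n$ but on a parameter $\ell$ (an upper bound on the total number of frontier transitions along the runs), with a strictly stronger notion of $\ell$-synchronizer, and it \emph{rebuilds} $\alpha$ at each step via a wreath-product construction that records, for each $(h,a)\in H\times A$, the parity of the number of positions of $w$ at which $a$ occurs after a prefix mapping to $h$. This extra information is precisely what is missing from your sketch, and it is what lets one locate a position in the word at which all $n$ runs can be cut, synchronized, and re-glued. Your plan inverts the difficulty: it treats the morphism as cheap and the splicing as deep, whereas in fact the construction of the morphism carries the weight, and once the right $\ell$-synchronizer is in hand the splicing becomes routine case analysis.
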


We first use this result to prove $2) \Rightarrow 1)$. Assume that $\bigcap_{j \leq k} L(\econs{\As_j})\!=\! \emptyset$. By Proposition~\ref{prop:covesynch}, there exists  a synchronizer $\alpha\!:\! A^*\! \to\! G$ for \As. Let $\Kb = \{\alpha\inv(g) \mid g \in G\}$, which is a \grp-cover of $A^*$. We show that for every $g \in G$, there exists $j \leq k$ such that $\alpha\inv(g) \cap L(\As_j) = \emptyset$. This implies as desired that $\{L(\As_1),\cdots,L(\As_k)\}$ is \grp-coverable. Let $g \in G$. By contradiction, assume that $\alpha\inv(g) \cap L(\As_j) \neq \emptyset$ for every $j \leq k$. For each $j \leq k$, this yields $q_j \in I_j$ and $r_j \in F_j$ such that $\alpha\inv(g) \cap \alauto{q_j}{r_j} \neq \emptyset$. Since $\alpha$ is a synchronizer for \As, we obtain $u \in \tilas$ such that $u \in \aelauto{q_j}{r_j}$ for every $j \leq k$. Since $q_j \in I_j$ and $r_j \in F_j$, it follows that $u \in L(\econs{\As_j})$ for every $j \leq k$, contradicting the hypothesis that $\bigcap_{j \leq k} L(\econs{\As_j}) = \emptyset$. This concludes the main argument.

\smallskip

It remains to prove Proposition~\ref{prop:covesynch}. We first define an induction parameter. We say that $(q,a,r) \in \delta$ is a \emph{frontier transition} if the states $q$ and $r$ are \emph{not} strongly connected. Moreover, given $q,r\in Q$ and $w \in A^*$, we associate a number $d(q,w,r) \in \nat \cup \{\infty\}$. If $w \not\in \alauto{q}{r}$, we let $d(q,w,r) = \infty$. Otherwise, $w \in \alauto{q}{r}$ and $d(q,w,r)$ is the \emph{least} number $n \in \nat$ such that there is a run from $q$ to $r$ labeled by $w$ in \As using \emph{exactly $n$ frontier transitions}. Note that $d(q,w,r) = 0$ if and only if $w \in \alauto{q}{r}$  and $q,r$ are strongly connected. One may verify the following fact.

\begin{fact}\label{fct:dist}
  Let $q,r\in Q$ and $w \in A^*$ such that $w \in \alauto{q}{r}$. Then, $d(q,w,r) \leq |Q|-1$. Also, for all $u,v \in A^*$ if $w=uv$, there is $s \in Q$ such that $d(q,u,s) + d(s,v,r) = d(q,w,r)$.
\end{fact}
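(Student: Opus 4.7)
The plan is to prove the two assertions separately, both by direct analysis of runs in $\As$ and the strongly connected component (SCC) structure.

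For the bound $d(q,w,r)\le|Q|-1$, I will fix any run $\rho$ from $q$ to $r$ labeled by $w$, say $q=p_0 \autoarr{a_1} p_1 \autoarr{a_2} \cdots \autoarr{a_n} p_n = r$, and track the sequence $C_0,C_1,\dots,C_n$ of SCCs of $\As$ containing the successive states $p_0,\dots,p_n$. By definition, a frontier transition is precisely a transition whose endpoints lie in different SCCs, so the number of frontier transitions used by $\rho$ is exactly the number of indices $i$ at which $C_i\neq C_{i+1}$. Because the quotient graph of the SCCs of $\As$ is a DAG, whenever $C_i\neq C_{i+1}$ we move to a strictly later SCC in the DAG order, so the distinct SCCs appearing in the sequence are pairwise distinct; there are at most $|Q|$ of them, hence at most $|Q|-1$ changes. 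Taking the minimum over all runs yields the desired bound.

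For the factorization property, I will argue by an extremal-run argument. Assume $w\in\alauto{q}{r}$ and write $w=uv$. Pick an optimal run from $q$ to $r$ labeled by $w$, that is, a run using exactly $d(q,w,r)$ frontier transitions, and let $s\in Q$ be the state it visits immediately after reading $u$. Splitting the run at this point yields a sub-run from $q$ to $s$ labeled by $u$, using some number $k_1$ of frontier transitions, and a sub-run from $s$ to $r$ labeled by $v$ using $k_2$ such transitions, with $k_1+k_2=d(q,w,r)$. By minimality of $d$, we have $k_1\ge d(q,u,s)$ and $k_2\ge d(s,v,r)$, so $d(q,u,s)+d(s,v,r)\le d(q,w,r)$. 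Conversely, for \emph{any} $s'\in Q$ with $d(q,u,s')$ and $d(s',v,r)$ finite, concatenating optimal runs witnessing these two values produces a run from $q$ to $r$ labeled by $w$ using $d(q,u,s')+d(s',v,r)$ frontier transitions, so $d(q,w,r)\le d(q,u,s')+d(s',v,r)$. Applying this to our particular $s$ forces the two inequalities to be equalities, giving $d(q,u,s)+d(s,v,r)=d(q,w,r)$.

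I do not expect any substantial obstacle. The main thing to be careful about is the handling of $\infty$ values of $d$: in the factorization step, we only assert existence of $s$ achieving equality, and this $s$ is obtained from an \emph{existing} run for $w$, so both $d(q,u,s)$ and $d(s,v,r)$ are automatically finite. Everything else reduces to the elementary DAG property of the SCC quotient and to the standard concatenation/restriction of runs in an \nfa.
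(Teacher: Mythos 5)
Your proof is correct. The paper itself omits a proof of this fact (it is introduced with ``One may verify the following fact''), and your argument — bounding frontier transitions via the DAG structure of the SCC quotient, then establishing the factorization by splitting an optimal run and combining the two inequalities — is exactly the intended elementary verification.
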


Let $\ell \in \nat$. An \emph{$\ell$-synchronizer} is a morphism $\alpha: A^* \to G$ into a finite group $G$ satisfying the two following properties:
\begin{enumerate}
  \item for all $q,r \in Q$ and $w \in A^*$ such that $d(q,w,r) \leq \ell$ and $\alpha(w) = 1_G$, we have $\veps \in \aelauto{q}{r}$.
  \item for all $n \geq 1$, all $q_1,\dots,q_n,r_1,\dots,r_n \in Q$ and all $w_1,\dots,w_n \in A^*$ such that $\sum_{i \leq n} d(q_i,w_i,r_i) \leq \ell-1$ and $\alpha(w_1) = \cdots = \alpha(w_n)$, there exists $u \in \tilas$ such that $u \in \aelauto{q_i}{r_i}$ for every $i \leq n$.
\end{enumerate}

\begin{remark}\label{rem:grp:second}
  There is a subtle difference between Properties~1 and 2. The first requires that $d(q,w,r) \leq \ell$ while the second requires that $\sum_{i \leq k} d(q_i,w_i,r_i) \leq \ell-1$. In particular, when $\ell = 0$, the second property is trivially satisfied since $\sum_{i \leq k} d(q_i,w_i,r_i)$ cannot be smaller than $-1$.
\end{remark}

We first show that thanks to Property~2, any $\ell$-synchronizer for $\ell$ large enough is also a synchronizer (on the other hand, we do not need Property~1 at this stage).

\begin{lemma}\label{lem:lsynchtosynch}
  Let $\As=(Q,\delta,I,F)$ be an \nfa and $\ell = |Q|^3$. Every $\ell$-synchronizer is also a synchronizer.
\end{lemma}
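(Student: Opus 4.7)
The plan is to reduce the synchronizer condition, which quantifies over arbitrary tuples of states, to Property~2 of the $\ell$-synchronizer definition, which restricts attention to tuples whose total distance $\sum d(q_i,w_i,r_i)$ is bounded by $\ell-1$. Fixing an $\ell$-synchronizer $\alpha : A^* \to G$ and states $q_1,\dots,q_n,r_1,\dots,r_n \in Q$ together with an element $g \in G$ witnessing the synchronizer hypothesis, I would first pick, for each $j \leq n$, a word $w_j \in \alauto{q_j}{r_j}$ with $\alpha(w_j) = g$.

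The key observation I intend to exploit is that the conclusion ``$u \in \aelauto{q_j}{r_j}$'' depends only on the pair $(q_j,r_j)$, not on the index $j$. So it suffices to verify the conclusion on the set $S = \{(q_j,r_j) \mid j \leq n\}$ of distinct pairs, which has size at most $|Q|^2$. For each $(q,r) \in S$, I would keep exactly one representative word $w_{(q,r)}$ among the previously chosen $w_j$'s, all representatives still satisfying $\alpha(w_{(q,r)}) = g$. Fact~\ref{fct:dist} bounds each $d(q,w_{(q,r)},r)$ by $|Q|-1$, so
\[
  \sum_{(q,r) \in S} d(q,w_{(q,r)},r) \;\leq\; |Q|^2(|Q|-1) \;\leq\; |Q|^3 - 1 \;=\; \ell - 1.
\]
This matches the hypothesis of Property~2 exactly, and applying it yields a single $u \in \tilas$ with $u \in \aelauto{q}{r}$ for every $(q,r) \in S$; in particular $u \in \aelauto{q_j}{r_j}$ for every $j \leq n$, giving the synchronizer conclusion.

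I expect no serious obstacle: the whole argument is a pigeonhole count, and the value $\ell = |Q|^3$ appears to be engineered precisely so that the $|Q|^2$ possible distinct state pairs, each contributing at most $|Q|-1$ to the total distance by Fact~\ref{fct:dist}, fit within the threshold $\ell - 1$. Property~1 of the $\ell$-synchronizer definition plays no role in this reduction; it is Property~2 alone that is invoked, on the pruned representative family indexed by $S$.
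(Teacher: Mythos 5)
Your proof is correct and follows essentially the same argument as the paper's: restrict to distinct pairs $(q_j,r_j)$ (the paper phrases this as a WLOG assumption), bound $n$ by $|Q|^2$ and each $d(q_j,w_j,r_j)$ by $|Q|-1$ via Fact~\ref{fct:dist}, observe the total is at most $\ell-1$, and invoke Property~2. The only cosmetic difference is that you make the pruning to representative pairs explicit where the paper folds it into a ``without loss of generality.''
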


\begin{proof}
  Let $\alpha: A^* \to G$ be an $\ell$-synchronizer. We show that it is a synchronizer. Let $q_1,\dots,q_n, r_1,\dots,r_n \in Q$ and let $g \in G$ such that $\alpha\inv(g) \cap \alauto{q_j}{r_j} \neq \emptyset$ for all $j\leq n$.  We exhibit $u \in \tilas$ such that $u \in \aelauto{q_j}{r_j}$ for all~$j \leq n$. Clearly, we may assume without loss of generality that for all $i,j \leq n$ such that $i \neq j$, we have $(q_i,r_i)\neq(q_j,r_j)$. This implies that $n \leq |Q|^2$. Our hypothesis yields $w_j \in  \alpha\inv(g) \cap \alauto{q_j}{r_j}$ for every $j \leq n$. By Fact~\ref{fct:dist}, we have $d(q_j,w_j,r_j) < |Q|$. Since $n \leq |Q|^2$, it follows that $\sum_{j \leq n} d(q_j,w_j,r_j)< |Q|^3= \ell$. Moreover, we have $\alpha(w_1) = \cdots = \alpha(w_n) = g$ by definition. Hence, since $\alpha$ is an $\ell$-synchronizer, we get $u \in \tilas$ such that $u \in \aelauto{q_i}{r_i}$ for every $i \leq n$, as desired.
\end{proof}

In view of Lemma~\ref{lem:lsynchtosynch}, it suffices to prove that for each $\ell \in \nat$, there exists an $\ell$-synchronizer. Indeed, the case when $\ell = |Q|^3$ yields the synchronizer described in Proposition~\ref{prop:covesynch}.

We shall use induction on $\ell \in \nat$ to build an $\ell$-synchronizer (it is for this induction that Property~1 will be useful). We devote the remainder of the section to this proof. Before starting the induction, we state simple lemmas on~$\ell$-synchronizers.

\smallskip
\noindent
{\bf Preliminaries.} For each $q \in Q$, we define a set $L(q) \subseteq \tilas$. For $v \in \tilas$, we let $v \in L(q)$ if and only if there exists $q' \in Q$ such that $q,q'$ are strongly connected and $v \in \aelauto{q}{q'}$.

\begin{lemma}\label{lem:thekey}
  Let $\ell\in\nat$, let $\beta:A^*\to H$ be an $\ell$-synchronizer, let $s,t \in Q$ and let $w \in A^*$ such that $d(s,w,t) \leq \ell$. Then:
  \begin{itemize}
    \item \!If $v \in L(s)$ and $\beta(w) =\beta(v)$, then $v \in \aelauto{s}{t}$.
    \item \!If $v \in  L(t)$ and $\beta(w) = \beta(v)\inv$, then \mbox{$v\inv \in \aelauto{s}{t}$}.
  \end{itemize}
\end{lemma}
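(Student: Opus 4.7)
For both bullets, the idea is to use Property~1 of the $\ell$-synchronizer to establish an $\veps$-path in $\ecauto$ bridging the missing gap, and then concatenate with the existing $v$- or $v\inv$-path. I start by unpacking $v \in L(s)$ (resp.\ $v \in L(t)$) into a state $s'$ (resp.\ $t'$) strongly connected to $s$ (resp.\ $t$) with $v \in \aelauto{s}{s'}$ (resp.\ $v \in \aelauto{t}{t'}$); Fact~\ref{fct:jump} then provides the reverse path $v\inv \in \aelauto{s'}{s}$ (resp.\ $v\inv \in \aelauto{t'}{t}$).

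For the first bullet, Lemma~\ref{lem:jtrans} applied to $v\inv \in \aelauto{s'}{s}$ produces $y \in A^*$ with $y \in \alauto{s'}{s}$ and $\beta(y) = \beta(v)\inv$. The key observation is that $d(s', y, s) = 0$: because $s'$ and $s$ are strongly connected and the strongly connected components of $\As$ form a DAG, any run in $\As$ between strongly connected states visits only states in the common SCC (any intermediate state is reachable from $s'$ and reaches $s$, hence — using that $s$ reaches $s'$ — is strongly connected to $s$) and therefore uses no frontier transitions. Concatenating with $w$ yields $yw \in \alauto{s'}{t}$, and Fact~\ref{fct:dist} bounds $d(s', yw, t) \le d(s', y, s) + d(s, w, t) \le 0 + \ell = \ell$. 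Moreover $\beta(yw) = \beta(v)\inv \beta(w) = 1_H$. Property~1 of the $\ell$-synchronizer therefore delivers $\veps \in \aelauto{s'}{t}$, and prepending the given $v \in \aelauto{s}{s'}$ concludes $v \in \aelauto{s}{t}$.

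The second bullet is dual. Lemma~\ref{lem:jtrans} applied directly to $v \in \aelauto{t}{t'}$ produces $y \in A^*$ with $y \in \alauto{t}{t'}$, $\beta(y) = \beta(v)$, and $d(t, y, t') = 0$ by the same SCC argument. Then $wy \in \alauto{s}{t'}$ with $d(s, wy, t') \le \ell + 0 = \ell$ (Fact~\ref{fct:dist}) and $\beta(wy) = \beta(w)\beta(v) = \beta(v)\inv \beta(v) = 1_H$ since $\beta(w) = \beta(v)\inv$. Property~1 gives $\veps \in \aelauto{s}{t'}$, and appending $v\inv \in \aelauto{t'}{t}$ from Fact~\ref{fct:jump} yields $v\inv \in \aelauto{s}{t}$.

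The main obstacle — and really the only non-routine step — is the SCC/DAG observation. Without it, Lemma~\ref{lem:jtrans} would only guarantee $d(s', y, s) \le |Q|-1$ via Fact~\ref{fct:dist}, which is generally too large to respect the $\ell$-synchronizer's budget so that Property~1 would not apply. The observation ensures that the bridging segment across an SCC is ``free'' with respect to the frontier-transition count, so the original budget $d(s, w, t) \le \ell$ already suffices.
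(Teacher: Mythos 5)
Your proof is correct and follows essentially the same route as the paper: unpack $v\in L(s)$ (resp.\ $L(t)$) to obtain a strongly connected witness state, use Fact~\ref{fct:jump} and Lemma~\ref{lem:jtrans} to produce a word $y\in A^*$ that stays inside the SCC (hence contributes $0$ to the frontier-transition count), compose with $w$ to stay within budget $\ell$, and apply Property~1 of the $\ell$-synchronizer to produce the bridging $\veps$-transition. One minor correction: the subadditivity $d(s',yw,t)\leq d(s',y,s)+d(s,w,t)$ does not follow from Fact~\ref{fct:dist} (which asserts the existence of an \emph{optimal} split point achieving equality, the reverse direction); it is instead immediate from the definition of $d$ by concatenating minimal runs, which is exactly the reasoning the paper uses implicitly.
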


\begin{proof}
  For the first assertion, consider $v \in L(s)$ such that $\beta(w) = \beta(v)$. By definition of $L(s)$, we get $s' \in Q$ such that $s,s'$ are strongly connected and $v \in \aelauto{s}{s'}$. Thus, we get $v\inv \in \aelauto{s'}{s}$ by Fact~\ref{fct:jump} and Lemma~\ref{lem:jtrans} yields $x \in \alauto{s'}{s}$ such that $\beta(x) = \beta(v\inv)$. Since $d(s,w,t) \leq \ell$ and $s',s$ are strongly connected, it follows that $d(s',xw,t) \leq \ell$. Moreover, since $\beta(w) = \beta(v)$ and $\beta(x) = \beta(v\inv)$, we have $\beta(xw) = 1_H$. Altogether, since $\beta$ is an $\ell$-synchronizer, we get $\veps \in \aelauto{s'}{t}$ by Property~1. Since $v \in \aelauto{s}{s'}$, we get $v \in \aelauto{s}{t}$ as desired.

  For the second assertion, let $v \in L(t)$ and assume that $\beta(w) = (\beta(v))\inv$. By definition of $L(t)$, we have $t' \in Q$ such that $t,t'$ are strongly connected and $v \in \aelauto{t}{t'}$. Lemma~\ref{lem:jtrans} yields $y \in \alauto{t}{t'}$ such that $\beta(y) = \beta(v)$. Since $d(s,w,t) \leq \ell$ and $t,t'$ are strongly connected, we get $d(s,wy,t') \leq \ell$. Also, $\beta(w)\! =\! (\beta(v))\inv$ and $\beta(y)\! =\! \beta(v)$. Thus, $\beta(wy) = 1_H$ and since $\beta$ is an $\ell$-synchronizer, we get $\veps \in \aelauto{s}{t'}$. Finally, since $v \in \aelauto{t}{t'}$ and $t,t'$ are strongly connected, we get $v\inv \in \aelauto{t'}{t}$ by Fact~\ref{fct:jump}. Altogether, we obtain  $v\inv \in \aelauto{s}{t}$.
\end{proof}

\noindent
Let $\beta:A^*\to H$ be an $\ell$-synchronizer. Consider $q \in Q$ and $(h,a) \in H \times A$. We say that $(h,a)$ \emph{stabilizes}~$q$ if there are $x \in A^*$ and $s \in Q$ such that $d(q,xa,s) = 0$ and $\beta(x)=h$. The next lemma follows from Lemma~\ref{lem:thekey}.

\begin{lemma}\label{lem:stabalt}
  Let $\ell\in\nat$ and $\beta:A^*\to H$ be an $\ell$-synchronizer. Let $q \in Q$ and $(h,a) \in H \times A$ that stabilizes $q$.~Then:
  \begin{itemize}
    \item If $v \in L(q)$ and $\beta(v) = h$, then $va \in L(q)$.
    \item If $v \in L(q)$ and $\beta(v) = h\beta(a)$, then $va\inv \in L(q)$.
  \end{itemize}
\end{lemma}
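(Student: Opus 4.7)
\medskip
\noindent
\textbf{Proof plan for Lemma~\ref{lem:stabalt}.} The plan is to first unpack the hypothesis that $(h,a)$ stabilizes $q$, and then apply the first assertion of Lemma~\ref{lem:thekey} twice, once for each item. Unpacking the definition, we get $x \in A^*$ and $s \in Q$ with $\beta(x) = h$ and $d(q,xa,s) = 0$; the latter means $xa \in \alauto{q}{s}$ together with $q,s$ strongly connected. Applying Fact~\ref{fct:dist} to the factorization $xa = x \cdot a$, we obtain an intermediate state $t \in Q$ with $d(q,x,t) + d(t,a,s) = 0$, hence both summands vanish. Consequently, $x \in \alauto{q}{t}$ with $q,t$ strongly connected, and $(t,a,s) \in \delta$ with $t,s$ strongly connected.

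For the first assertion, let $v \in L(q)$ with $\beta(v) = h$. Since $d(q,x,t) = 0 \leq \ell$ and $\beta(x) = h = \beta(v)$, the first assertion of Lemma~\ref{lem:thekey} (applied with the pair $(s,t)$ of that lemma instantiated as $(q,t)$) gives $v \in \aelauto{q}{t}$. Then concatenating with the $\delta$-transition $(t,a,s)$ yields $va \in \aelauto{q}{s}$. Since $q,s$ are strongly connected, this proves $va \in L(q)$.

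For the second assertion, let $v \in L(q)$ with $\beta(v) = h\beta(a)$. Here we instead apply Lemma~\ref{lem:thekey} with the word $xa$: we have $d(q,xa,s) = 0 \leq \ell$ and $\beta(xa) = \beta(x)\beta(a) = h\beta(a) = \beta(v)$, so its first assertion yields $v \in \aelauto{q}{s}$. Finally, because $(t,a,s) \in \delta$ and $t,s$ are strongly connected, the definition of \acdel gives $(s,a\inv,t) \in \acdel \subseteq \ecdel$, so concatenating produces $va\inv \in \aelauto{q}{t}$; and since $q,t$ are strongly connected, this gives $va\inv \in L(q)$.

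There is no real obstacle: the only step that requires a little attention is choosing the right word $w$ to feed into Lemma~\ref{lem:thekey}: $w = x$ for the first item, and $w = xa$ for the second. Once the factorization through $t$ is isolated, each item is a one-line concatenation (using a forward $\delta$-transition $a$ for item~1, and its reverse $a\inv$ in \acauto for item~2).
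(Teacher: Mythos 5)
Your proof is correct and follows essentially the same route as the paper's: unpack the definition of ``stabilizes'' to obtain $x$, $s$ with $\beta(x)=h$ and $d(q,xa,s)=0$, use Fact~\ref{fct:dist} to extract an intermediate state (your $t$, the paper's $q'$), then apply the first assertion of Lemma~\ref{lem:thekey} to $x$ for item~1 and to $xa$ for item~2, finishing with the forward transition $(t,a,s)\in\delta$ and its reverse $(s,a\inv,t)\in\acdel$ respectively. The argument is identical to the paper's up to naming.
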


\begin{proof}
  Since $(h,a)$ stabilizes $q$, we have $x \in A^*$ and $s \in Q$ such that $d(q,xa,s) = 0$ and $\beta(x)=h$. Since $d(q,xa,s) = 0$, Fact~\ref{fct:dist} yields $q' \in Q$ such that $d(q,x,q') = d(q',a,s) = 0$.

  Let $v \in L(q)$ such that $\beta(v) = h=\beta(x)$. Lemma~\ref{lem:thekey} yields $v \in \aelauto{q}{q'}$ since $d(q,x,q') = 0\leq\ell$. Since \mbox{$(q',a,s) \in \delta$}, we get $va \in \aelauto{q}{s}$. This yields $va \in L(q)$ since $q,s$ are strongly connected.

  We now consider $v \in L(q)$ such that $\beta(v) = h\beta(a) = \beta(xa)$. Since $d(q,xa,s) = 0\leq\ell$, Lemma~\ref{lem:thekey} yields $v \in\aelauto{q}{s}$. Moreover, since $(q',a,s) \in \delta$ and $q',s$ are strongly connected, we have $(s,a\inv,q') \in \acdel$ by definition. Thus, $va\inv\in \aelauto{q}{q'}$. Since $q,q'$ are strongly connected, this yields $va\inv \in L(q)$, as~desired.
\end{proof}

\smallskip\noindent
\textbf{Construction of $\ell$-synchronizers by induction on~$\ell$.} We are ready to prove that for all $\ell\in\nat$, there exists an $\ell$-synchronizer.

\smallskip
\noindent
{\bf Base case: $\ell = 0$.} The definition of our $0$-synchronizer is based on an equivalence. Let $q,r \in Q$. We write $q \simeq r$ when $q$ and~$r$ are strongly connected and $\veps \in \aelauto{q}{r}$.

\begin{lemma}\label{lem:grp:bsync}
  The relation $\simeq$ is an equivalence. Moreover, for every $q,r,q',r' \in Q$ which are strongly connected  and $a \in A$, if $(q,a,q') \in \delta$ and $(r,a,r') \in \delta$, then $q \simeq r \Leftrightarrow q' \simeq r'$.
\end{lemma}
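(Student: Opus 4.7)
My plan is to verify the two claims separately, first that $\simeq$ is an equivalence and then the congruence-like property, both by direct inspection of runs in $\ecauto$.

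For the equivalence properties, reflexivity follows because every state $q$ is trivially strongly connected to itself (the empty run witnesses both directions) and $\veps \in \aelauto{q}{q}$. Symmetry is an application of the second part of Fact~\ref{fct:jump} to the empty word: since $\veps\inv = \veps$, the assumption $q \simeq r$ directly yields $r \simeq q$. For transitivity, observe that strong connectivity is itself transitive, and concatenating two runs labeled by $\veps$ gives another run labeled by $\veps$; hence $q \simeq r$ and $r \simeq s$ imply $q \simeq s$.

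For the congruence property, assume $q,q',r,r'$ lie in a single strongly connected component and $(q,a,q'),(r,a,r') \in \delta$. I prove $q \simeq r \Rightarrow q' \simeq r'$; the converse is symmetric. From $\veps \in \aelauto{q}{r}$ I build a run from $q'$ to $r'$ in $\ecauto$ as follows. The strong connectivity of $q$ and $q'$ together with $(q,a,q') \in \delta$ puts $(q',a\inv,q) \in \acdel \subseteq \ecdel$ by definition of $\acauto$. Chaining this transition with the $\veps$-run from $q$ to $r$ given by the hypothesis, and then with $(r,a,r') \in \delta \subseteq \ecdel$, produces a run in $\ecauto$ from $q'$ to $r'$ labeled by $a\inv a$. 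Since $a\inv a \rightarrow \veps$, we have $a\inv a \in L_\veps$, and the first part of Fact~\ref{fct:jump} forces the $\veps$-transition $(q',\veps,r') \in \ecdel$, hence $\veps \in \aelauto{q'}{r'}$. Combined with the assumed strong connectivity of $q',r'$, this gives $q' \simeq r'$. The reverse implication proceeds identically but uses the word $a a\inv$, relying on the transition $(r',a\inv,r) \in \acdel$ that exists because $r,r'$ are strongly connected.

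The argument presents no real obstacle. The only subtlety to keep in mind is that the reverse transitions introduced in $\acauto$ are available only between states sharing a strongly connected component, which is precisely why the statement requires all four states to lie in the same component; once this is granted, everything reduces to inserting the pair $a\inv a$ (or $a a\inv$) around the existing $\veps$-run and invoking Fact~\ref{fct:jump} to absorb it into a single $\veps$-transition of $\ecauto$.
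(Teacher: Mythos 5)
Your proof is correct and follows essentially the same approach as the paper's: reflexivity via the empty word, symmetry via Fact~\ref{fct:jump} and $\veps\inv = \veps$, transitivity by concatenation, and for the congruence property, sandwiching the given $\veps$-run between $(q',a\inv,q) \in \acdel$ and $(r,a,r')$ (resp.\ $(q,a,q')$ and $(r',a\inv,r)$) to get a run labeled $a\inv a$ (resp.\ $a a\inv$), then absorbing it into a single $\veps$-transition with Fact~\ref{fct:jump}.
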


\begin{proof}
  Clearly, $\simeq$ is reflexive: $\veps \in \aelauto{q}{q}$ for every $q\in Q$. Moreover, if $q \simeq r$, then $q$ and $r$ are strongly connected and $\veps \in \aelauto{q}{r}$. Consequently, since $\veps= \veps\inv$, Fact~\ref{fct:jump} yields $\veps \in \aelauto{r}{q}$ and we get  $r \simeq q$. Hence $\simeq$ is symmetric. Finally, let $q,r,s \in Q$ such that $q \simeq r$ and $r \simeq s$. By definition, $q,r,s$ are strongly connected, $\veps \in \aelauto{q}{r}$ and $\veps \in \aelauto{r}{s}$. Clearly, $\veps \in \aelauto{q}{s}$ which yields $q \simeq s$ and we conclude that $\simeq$ is transitive.

  We now prove that for all $q,r,q',r' \in Q$ which are strongly connected and $a \in A$ such that $(q,a,q') \in \delta$ and $(r,a,r') \in \delta$, we have $q \simeq r \Leftrightarrow q' \simeq r'$. By definition, $(q',a\inv,q) \in \acdel$ and $(r',a\inv,r) \in \acdel$.  Assume first that $q \simeq r$. We have $\veps \in \aelauto{q}{r}$. Thus, $a\inv a \in \aelauto{q'}{r'}$ and since $a\inv a \xrightarrow{*} \veps$, Fact~\ref{fct:jump} yields $\veps \in \aelauto{q'}{r'}$. Hence, $q' \simeq r'$. Conversely, if $q' \simeq r'$, we have $\veps \in \aelauto{q'}{r'}$. Thus, $aa\inv \in \aelauto{q}{r}$ and since $aa\inv \xrightarrow{*} \veps$, Fact~\ref{fct:jump} yields $\veps \in \aelauto{q}{r}$. We get $q \simeq r$, as desired.
\end{proof}

For each $q \in Q$, we write $\folcl{q} \in {Q}/{\simeq}$ for the $\simeq$-class of~$q$. Moreover, we let $G$ be the group of permutations of ${Q}/{\simeq}$. That is, $G$ consists of all bijections $g: {Q}/{\simeq}\to {Q}/{\simeq}$ and the multiplication is composition (the neutral element is identity). We have the following fact.

\begin{fact}\label{fct:grp:defm}
  For every $a \in A$, there exists an element $g_a \in G$ such that for every $q,q' \in Q$ which are strongly connected and such that $(q,a,q') \in \delta$, we have $g_a(\folcl{q}) = \folcl{q'}$.
\end{fact}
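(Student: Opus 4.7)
My plan is to construct $g_a$ by first defining a partial injection on $Q/{\simeq}$ induced by the $a$-transitions of $\As$, then extending it to a permutation one SCC at a time. Concretely, introduce the relation $R_a$ on $Q/{\simeq}$ by declaring $(\folcl{q}, \folcl{q'}) \in R_a$ iff there exist $q_0 \simeq q$ and $q_1 \simeq q'$ with $q_0, q_1$ strongly connected and $(q_0, a, q_1) \in \delta$. Apply Lemma~\ref{lem:grp:bsync} twice: the forward implication $q \simeq r \Rightarrow q' \simeq r'$ makes $R_a$ a partial function, and the converse implication $q' \simeq r' \Rightarrow q \simeq r$ makes it injective on its domain. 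Hence $R_a$ is a partial injection on $Q/{\simeq}$.

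Next, observe that $R_a$ respects strongly connected components: if $(\folcl{q}, \folcl{q'}) \in R_a$ with witnesses $q_0, q_1$, then $q \simeq q_0$, $q_0$ strongly connected to $q_1$, and $q_1 \simeq q'$ together force $q, q_0, q_1, q'$ to lie in a common SCC $S$ of $\As$. Therefore, fixing any SCC $S$ and writing $\mathcal{E}_S$ for the set of $\simeq$-classes contained in $S$, the restriction of $R_a$ to $\mathcal{E}_S$ is a partial bijection from some $D_S \subseteq \mathcal{E}_S$ onto some $R'_S \subseteq \mathcal{E}_S$ with $|D_S| = |R'_S|$. The complements $\mathcal{E}_S \setminus D_S$ and $\mathcal{E}_S \setminus R'_S$ thus have the same cardinality, so I can extend the partial bijection to a full permutation of $\mathcal{E}_S$ by picking any bijection between these complements.

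Gluing these SCC-wise permutations over all SCCs yields the desired permutation $g_a$ of $Q/{\simeq}$, and the required property is built in: if $(q, a, q') \in \delta$ with $q, q'$ strongly connected, then $(\folcl{q}, \folcl{q'}) \in R_a$ by definition, whence $g_a(\folcl{q}) = \folcl{q'}$. The main point requiring Lemma~\ref{lem:grp:bsync} is that $R_a$ is a partial \emph{injection} (both single-valuedness and injectivity), which is why its biconditional form is essential; everything else reduces to routine bookkeeping on partial injections of finite sets.
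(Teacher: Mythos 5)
Your proof is correct and follows essentially the same route as the paper: use Lemma~\ref{lem:grp:bsync} (both directions of its biconditional) to show that the $a$-transitions within SCCs induce a well-defined partial injection on $Q/{\simeq}$, and then extend it to a permutation. The SCC-by-SCC extension is harmless but superfluous --- since a partial injection on a finite set always has domain and range of equal size, its complement-to-complement extension to a full permutation works directly, which is all the paper does.
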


\begin{proof}
  Consider $q \in Q$. By Lemma~\ref{lem:grp:bsync}, if there exists $q'\in Q$ such that $q,q'$ are strongly connected and $(q,a,q') \in \delta$, we know that for every $r,r' \in Q$ which are strongly connected and such that $(r,a,r') \in \delta$, we have $q \simeq r \Leftrightarrow q' \simeq r'$. Hence, we may define $g_a(\folcl{q}) = \folcl{q'}$. This yields a \emph{partial} function $g_a: {Q}/{\simeq} \to {Q}/{\simeq}$ which satisfies the condition described in the fact and is injective. Hence, we may complete $g_a$ into a bijection, concluding the proof.
\end{proof}

We let $\alpha: A^* \to G$ be the morphism defined by $\alpha(a)=g_a$ for every $a \in A$ and show that $\alpha$ is a $0$-synchronizer. We prove the first property in the definition (the second one is trivially satisfied when $\ell = 0$). Let $q,r \in Q$ and $w \in A^*$, such that $d(q,w,r) = 0$ and $\alpha(w) = 1_G$. By definition,  $\alpha(w)$ is a permutation of ${Q}/{\simeq}$. Moreover, since $d(q,w,r) = 0$, we have $w \in \alauto{q}{r}$ and $q,r$ are strongly connected. By definition of $\alpha$ from Fact~\ref{fct:grp:defm} this implies that $\alpha(w)(\folcl{q}) = \folcl{r}$. Finally, since $\alpha(w) = 1_G$, we also have $\alpha(w)(\folcl{q}) = \folcl{q}$. Hence, $q \simeq r$  and the definition of $\simeq$ yields $\veps \in \aelauto{q}{r}$. We conclude that $\alpha$ is a $0$-synchronizer.

\smallskip
\noindent
{\bf Inductive step: $\ell \geq 1$.} By induction on $\ell$, we know that there exists an $(\ell - 1)$-synchronizer $\beta: A^* \to H$. We use it to construct a new morphism $\alpha: A^* \to G$ from $\beta$. Then, we prove that $\alpha$ is an $\ell$-synchronizer.

For every pair $(h,a) \in H \times A$ and every $w \in A^*$, we let  $\#_{h,a}(w) \in \nat$ be the number of pairs $(x,y) \in A^* \times A^*$ such that $\beta(x) = h$  and $w = xay$. The definition of the morphism $\alpha: A^* \to G$ is designed with the following goal in mind: for each word $w \in A^*$, we want its image \mbox{$\alpha(w) \in G$} to determine $\beta(w)\in H$ and, for every $(h,a) \in H \times A$, whether the number $\#_{h,a}(w) \in \nat$ is even or odd. The definition is inspired by the work of Auinger~\cite{Auinger04}. We let $G = H \times \{0,1\}^{H \times A}$. That is, every element $g\in G$ is a pair $g = (h,f)$ where $h \in H$ and \mbox{$f: H \times A \to \{0,1\}$} is a function. We now equip~$G$ with a multiplication. Let $g_1,g_2\in G$ with $g_1 = (h_1,f_1)$ and $g_2= (h_2,f_2)$. We define $g_1g_2= (h_1h_2,f)$ where $f: H \times A \to  \{0,1\}$ is the function $f: (h,a) \mapsto f_1(h,a) + f_2(h_1\inv h,a) \bmod 2$. One may verify that $G$ is indeed a group for this multiplication (technically, $G$ is a wreath product, see \emph{e.g.},~\cite{Almeida:1994a}). For every $w \in A^*$, let $f_w: H \times A \to \{0,1\}$ be the function defined by $f_w(h,a) = \#_{h,a}(w) \bmod 2$. One may now verify that the map $\alpha: A^* \to G$ defined by $\alpha(w) = (\beta(w),f_w)$ is a monoid morphism. It remains to show that it is an $\ell$-synchronizer.

We first explain how to exploit the definition of $\alpha$. A key point is that we are interested in special pairs $(h,a) \in H \times A$. Given $F \subseteq H$, we say that such a pair $(h,a)$ is \emph{$F$-alternating} when $h\in F \Leftrightarrow h\beta(a)\not\in F$. Moreover, we say that a word $w \in A^*$ is \emph{$F$-safe} if $\#_{h,a}(w)$ is \emph{even} for every $F$-alternating pair $(h,a) \in H \times A$. By definition, the image $\alpha(w) \in G$ determines whether $w$ is $F$-safe or not. In the latter case, we get an $F$-alternating pair $(h,a)$ such that $\#_{h,a}(w)$ is \emph{odd} (and thus, $\#_{h,a}(w)\geq 1$). In the former, we use the next lemma.

\begin{lemma}\label{lem:alt}
  Let $F \subseteq H$ such that $1_H \in F$. For every $w \in A^*$ which is $F$-safe, $\beta(w) \in F$.
\end{lemma}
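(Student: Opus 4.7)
\smallskip
\noindent
\textbf{Proof plan for Lemma~\ref{lem:alt}.} The plan is to use a parity-of-switches argument along the sequence of prefix images of $w$. Write $w = a_1 a_2 \cdots a_n$ with $a_i \in A$, and for $1 \leq i \leq n+1$ set $h_i = \beta(a_1 \cdots a_{i-1})$, so that $h_1 = 1_H$ and $h_{n+1} = \beta(w)$. For each index $i \leq n$, the pair $(h_i,a_i)$ is one of the pairs counted by some $\#_{h,a}(w)$: indeed the decomposition $w = (a_1 \cdots a_{i-1}) a_i (a_{i+1} \cdots a_n)$ has $\beta(a_1 \cdots a_{i-1}) = h_i$. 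Conversely, every decomposition $w = xay$ with $\beta(x) = h$ arises from exactly one index $i$ with $h_i = h$ and $a_i = a$. Hence
\[
  \sum_{(h,a) \in H \times A} \#_{h,a}(w) = n,
\]
and, more usefully, restricting to $F$-alternating pairs,
\[
  \bigl|\{i \leq n \mid (h_i,a_i) \text{ is $F$-alternating}\}\bigr| = \!\!\!\sum_{\substack{(h,a) \in H \times A \\ \text{$F$-alternating}}}\!\!\! \#_{h,a}(w).
\]

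Since $w$ is $F$-safe, every term on the right-hand side is even, so the left-hand side is even as well. Now observe what it means that $(h_i,a_i)$ is $F$-alternating: by definition, $h_i \in F \Leftrightarrow h_i \beta(a_i) \notin F$, which is exactly $h_i \in F \Leftrightarrow h_{i+1} \notin F$. In other words, the indices $i$ at which $(h_i,a_i)$ is $F$-alternating are precisely the indices at which the sequence $h_1,h_2,\dots,h_{n+1}$ switches between $F$ and $H \setminus F$. We have just shown that this number of switches is even. Since $h_1 = 1_H \in F$ by assumption, an even number of switches implies $h_{n+1} \in F$, i.e., $\beta(w) \in F$, as required.

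\smallskip
\noindent
The argument is a straightforward counting/parity computation once one interprets $\#_{h,a}(w)$ as enumerating the positions of $w$ together with their prefix images; there is no real obstacle. The only point worth double-checking is the bijection between decompositions $w = xay$ with $\beta(x)=h$ and indices $i \leq n$ with $h_i = h$ and $a_i = a$, which is immediate from the definition of the $h_i$.
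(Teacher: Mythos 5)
Your proof is correct, and it establishes exactly the same intermediate fact as the paper's proof, namely that $\beta(w) \in F$ if and only if the total count of $F$-alternating decompositions of $w$ (the paper's $\#_F(w)$) is even. The paper proves this biconditional by induction on $|w|$, peeling off the last letter at each step; you instead give a direct combinatorial reading: $\#_{h,a}(w)$ counts the indices $i$ with prefix image $h_i = h$ and $a_i = a$, so $\#_F(w)$ counts the switches of the $\{0,1\}$-sequence $\mathbbm{1}[h_i \in F]$, and an even number of switches from $h_1 = 1_H \in F$ forces $h_{n+1} = \beta(w) \in F$. The two arguments are mathematically the same parity bookkeeping; yours is a little more direct and visual, while the paper's induction is more mechanical but requires no setup of the bijection between decompositions and indices. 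No gaps.
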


\begin{proof}
  We prove a stronger property. For every $w \in A^*$, we write $\#_F(w) \in \nat$ for the sum of all numbers $\#_{h,a}(w)$ where $(h,a) \in H\times A$ is $F$-alternating. We prove that for $w \in A^*$, we have $\beta(w) \in F \Leftrightarrow \#_F(w)$ \emph{is even}. This implies the lemma: if $w$ is $F$-safe, then  $\#_F(w)$ is even which yields $\beta(w) \in F$.

  We use induction on the length of $w\in A^*$. If $w = \veps$, then $\beta(w) = 1_H \in F$ and $\#_F(w)= 0$. Thus, the property is trivially satisfied. Assume now that $w \in A^+$. This yields $v\in A^*$ and $a\in A$ such that $w=va$. Clearly, $|v| < |w|$ which yields $\beta(v) \in F \Leftrightarrow \#_F(v)$ \emph{is even} by induction. It follows that $\beta(v) \not\in F \Leftrightarrow \#_F(v)$ \emph{is odd}. There are two cases. First, assume that $(\beta(v),a)$ is $F$-alternating. In that case, since $w = va$, it follows that $\beta(w) \in F \Leftrightarrow \beta(v) \not\in F$ and $\#_F(w) = \#_F(v) +1$ (\emph{i.e.}, $\#_F(w)$ \emph{is even} $\Leftrightarrow$ $\#_F(v)$ \emph{is odd}). Thus, we may combine the equivalences to get $\beta(w) \in F \Leftrightarrow \#_F(w)$ \emph{is even} as desired. Assume now that $(\beta(v),a)$ is \emph{not} $F$-alternating. In that case, as $w = va$, we get $\beta(w) \in F \Leftrightarrow \beta(v) \in F$ and $\#_F(w) = \#_F(v)$ (thus, $\#_F(w)$ \emph{is even} $\Leftrightarrow$ $\#_F(v)$ \emph{is even}). Hence, we may again combine the equivalences to get $\beta(w) \in F \Leftrightarrow \#_F(w)$ \emph{is even}, as desired.
\end{proof}

We now present the sets $F \subseteq H$ to be used in Lemma~\ref{lem:alt}.  Recall that for each $q \in Q$, the language $L(q) \subseteq \tilas$ consists of all words $v \in \tilas$ satisfying $v \in \aelauto{q}{q'}$ for some $q' \in Q$ such that $q,q'$ are \emph{strongly connected}. To each $S \subseteq Q$, we associate a set $F_S \subseteq H$ as follows,
\[
  F_S = \Biggr\{\beta(v) \mid v \in \bigcap_{q \in S} L(q)\Biggr\}.
\]
A key point is that $1_H \in F_S$ for all $S \subseteq Q$. Indeed, $\veps \in L(q)$ for all $q \in Q$ since $\veps \in \aelauto{q}{q}$. Hence, Lemma~\ref{lem:alt} applies to $F_S$. Finally, we present a corollary of Lemma~\ref{lem:stabalt}. Recall that $(h,a)\in H\times A$ \emph{stabilizes}~$q$ if there exist $x,y \in A^*$ and $s \in Q$ such that $d(q,xay,s) = 0$ and $\beta(x)=h$.

\begin{corollary}\label{cor:mainstuff}
  If $S \subseteq Q$ and $(h,a) \in H \times A$ is $F_S$-alternating, there exists $q \in S$ such that $(h,a)$ does not stabilize $q$.
\end{corollary}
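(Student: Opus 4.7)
\medskip

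\noindent
My plan is to prove the contrapositive: assuming that $(h,a)$ stabilizes every $q \in S$, I will show that $h \in F_S \Leftrightarrow h\beta(a) \in F_S$, contradicting the hypothesis that $(h,a)$ is $F_S$-alternating. The whole argument is a direct application of Lemma~\ref{lem:stabalt}, which is precisely set up to transport membership in $L(q)$ along the letter $a$ (resp.~$a\inv$) whenever $(h,a)$ stabilizes $q$.

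For the forward direction, assume $(h,a)$ stabilizes every $q \in S$ and $h \in F_S$. By definition of $F_S$, I obtain a word $v \in \bigcap_{q \in S} L(q)$ with $\beta(v) = h$. For every $q \in S$, the first item of Lemma~\ref{lem:stabalt} then yields $va \in L(q)$. Hence $va \in \bigcap_{q \in S} L(q)$ and $\beta(va) = h\beta(a) \in F_S$. The converse is symmetric using the second item of Lemma~\ref{lem:stabalt}: if $h\beta(a) \in F_S$, pick $v \in \bigcap_{q \in S} L(q)$ with $\beta(v) = h\beta(a)$, so that $va\inv \in L(q)$ for every $q \in S$, which gives $\beta(va\inv) = h \in F_S$. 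The two implications together yield $h \in F_S \Leftrightarrow h\beta(a) \in F_S$, contradicting $F_S$-alternation.

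There is essentially no obstacle here: the combinatorial work has already been done in Lemma~\ref{lem:stabalt}, and the corollary is really a repackaging saying that the set $F_S$ is stable under right multiplication by $\beta(a)$ and by $\beta(a)\inv$ as soon as $(h,a)$ stabilizes every state of $S$. The only small care needed is to note that the two definitions of ``stabilizes'' appearing in the text (with $xa$ and with $xay$) coincide because $d(q,xa,s) = 0$ already forces $q$ and $s$ to be strongly connected, so any suffix $y \in \alauto{s}{s}$ can be freely appended without increasing the distance, which is why Lemma~\ref{lem:stabalt} applies verbatim.
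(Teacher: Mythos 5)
Your proof is correct and is essentially identical to the paper's: both prove the contrapositive by assuming $(h,a)$ stabilizes every $q \in S$, producing a witness $v$ (resp.\ $v'$) in $\bigcap_{q \in S} L(q)$ from the definition of $F_S$, and applying the two items of Lemma~\ref{lem:stabalt} to transport membership by $a$ and by $a\inv$, contradicting $F_S$-alternation. Your closing remark about the two (superficially different) phrasings of ``stabilizes'' in the text is a fair observation, though the cleanest way to see they coincide is simply to take $y = \veps$ in one direction and apply Fact~\ref{fct:dist} to split $d(q,xay,s)=0$ in the other.
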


\begin{proof}
  By contradiction, assume that $(h,a)$ stabilizes $q$ for all $q \in S$. We show that $h \in F_S \Leftrightarrow h\beta(a) \in F_S$, contradicting the hypothesis that $(h,a)$ is $F_S$-alternating. Assume first that $h \in F_S$. By definition, this yields $v \in \bigcap_{q \in S} L(q)$ such that $\beta(v) = h$. As $(h,a)$ stabilizes $q$ for all $q \in S$, the first assertion in Lemma~\ref{lem:stabalt} yields $va \in \bigcap_{q \in S} L(q)$. Thus, $h\beta(a)\in F_S$. Conversely assume that $h\beta(a)\in F_S$. By definition, this yields $v' \in \tilas$ such that $v' \in L(q)$ for all $q \in S$ and $\beta(v') =h\beta(a)$. Since $(h,a)$ stabilizes $q$ for all $q \in S$, the second assertion in Lemma~\ref{lem:stabalt} yields $v'a\inv \in \bigcap_{q \in S} L(q)$. Thus, $h \in F_S$, as~desired.
\end{proof}

We are ready to prove that $\alpha$ is an $\ell$-synchronizer. There are two conditions to prove.

\smallskip
\noindent
{\bf Condition 1.} Let $q,r\! \in\! Q$ and $w\! \in\! A^*$ such that $d(q,w,r)\! \leq\! \ell$ and $\alpha(w)=1_G$. We show that $\veps \in \aelauto{q}{r}$. We have $\beta(w) = 1_H$ by definition of $\alpha$. Hence, since $\beta$ is an $(\ell-1)$-synchronizer, the result is immediate when $d(q,w,r) \leq \ell-1$. We assume from now on that $d(q,w,r)= \ell$.

Since $\ell \geq 1$, $w$ is nonempty. Let $a_1, \dots ,a_n \in A$ such that $w = a_1 \cdots a_n$. By Fact~\ref{fct:dist}, we have $q_0,\dots,q_n \in Q$ such that $q_0 = q$, $q_n = r$ and $\sum_{1\leq k \leq n} d(q_{k-1},a_{k},q_{k}) = d(q,w,r) = \ell$. This means that there are exactly $\ell$ indices $k < n$ such that $(q_{k-1},a_{k},q_{k}) \in \delta$ is a frontier transition. For $0 \leq k \leq n$, we let $x_k = a_1 \cdots a_k$ and $y_k = a_{k+1} \cdots a_n$ (we let $x_0 = y_n = \veps$). Clearly, $w= x_ky_k$. We let $h_k = \beta(x_k)$ for every $k \leq n$. Note that since $\beta(w) = 1_H$, we also know that $\beta(y_k) = h_k\inv$.

Let $i \leq n$ be the least index such that $(q_{i-1},a_{i},q_{i})$ is a frontier transition. Let $j\leq  n$ be the greatest index such that $(q_{j-1},a_{j},q_{j})$ is a frontier transition. Clearly, $1 \leq i \leq j \leq n$ ($i = j$, if $\ell = 1$). By definition, we have the following fact.

\begin{fact}\label{fct:grp:middle}
  Let $k \leq n$. If $i \leq k$, then $d(q_k,y_k,r) \leq \ell-1$. If $k < j$, then $d(q,x_k,q_k) \leq \ell-1$.
\end{fact}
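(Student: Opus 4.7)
The plan is to pin down how the $\ell$ frontier transitions of the chosen run are distributed on either side of position $k$, and then read off both bounds from the split. Let $p$ be the number of frontier transitions occurring among the first $k$ steps $(q_0,a_1,q_1),\dots,(q_{k-1},a_k,q_k)$, so that the remaining $\ell - p$ frontier transitions occur among the last $n - k$ steps $(q_k,a_{k+1},q_{k+1}),\dots,(q_{n-1},a_n,q_n)$. The prefix and suffix of the given run immediately witness the upper bounds $d(q,x_k,q_k) \leq p$ and $d(q_k,y_k,r) \leq \ell - p$.

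Next I would derive the matching inequality $d(q,x_k,q_k) + d(q_k,y_k,r) \geq \ell$. Concatenating an optimal run from $q$ to $q_k$ labeled $x_k$ with an optimal run from $q_k$ to $r$ labeled $y_k$ yields a run from $q$ to $r$ labeled $w$ using exactly $d(q,x_k,q_k) + d(q_k,y_k,r)$ frontier transitions, so this quantity is at least $d(q,w,r) = \ell$ (this is also an immediate consequence of Fact~\ref{fct:dist}, which asserts that the optimum is reached by \emph{some} splitting state, hence all splitting states give at least the optimum). Combining the two sides, $\ell \leq d(q,x_k,q_k) + d(q_k,y_k,r) \leq p + (\ell - p) = \ell$, so all inequalities are equalities, which pins down $d(q,x_k,q_k) = p$ and $d(q_k,y_k,r) = \ell - p$.

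From here the two assertions of the fact are immediate. If $i \leq k$, the $i$-th transition $(q_{i-1},a_i,q_i)$ is a frontier transition sitting inside the prefix, so $p \geq 1$ and therefore $d(q_k,y_k,r) = \ell - p \leq \ell - 1$. Symmetrically, if $k < j$, the $j$-th transition $(q_{j-1},a_j,q_j)$ lies in the suffix, forcing $\ell - p \geq 1$ and hence $d(q,x_k,q_k) = p \leq \ell - 1$.

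There is no real obstacle in this proof: it is elementary bookkeeping on where the (minimum number of) frontier transitions sit along the fixed run. The only thing to watch is that one genuinely needs both directions of the equality $d(q,x_k,q_k) + d(q_k,y_k,r) = \ell$, since the given run supplies only the upper bounds, and turning ``one more frontier transition guaranteed on one side'' into ``at least one fewer needed on the other side'' relies on the matching lower bound from Fact~\ref{fct:dist}.
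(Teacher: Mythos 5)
Your proof is correct and follows essentially the same route the paper has in mind when it says ``By definition, we have the following fact'' — bookkeeping on where the $\ell$ frontier transitions of the fixed run sit relative to position $k$. However, the second paragraph (pinning down the exact equality $d(q,x_k,q_k) + d(q_k,y_k,r) = \ell$ via the lower bound from Fact~\ref{fct:dist}) is superfluous, and your closing remark misidentifies what is actually needed. Your first paragraph already gives everything: the fixed run from $q_0=q$ to $q_n=r$ has exactly $\ell$ frontier transitions, with $p$ of them among the first $k$ steps and $\ell-p$ among the rest, and its prefix and suffix witness $d(q,x_k,q_k) \leq p$ and $d(q_k,y_k,r) \leq \ell-p$. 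When $i\leq k$, the $i$-th transition is a frontier transition in the prefix, so $p\geq 1$ and immediately $d(q_k,y_k,r)\leq \ell-p\leq\ell-1$; symmetrically when $k<j$, the $j$-th frontier transition lies in the suffix, so $\ell-p\geq 1$ and $d(q,x_k,q_k)\leq p\leq\ell-1$. Your claim that ``one genuinely needs both directions of the equality'' because ``the given run supplies only the upper bounds'' is therefore mistaken: you are \emph{not} turning information about one side into information about the other — the suffix of the given run directly supplies the upper bound $d(q_k,y_k,r)\leq\ell-p$, and combining it with $p\geq 1$ is all the argument requires. No lower bound on $d(q,x_k,q_k)$ enters anywhere.
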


The hypothesis that $\alpha(w) = 1_G$ implies the next lemma.

\begin{lemma}\label{lem:grp:cond1}
  One of the three following properties holds:
  \begin{enumerate}
    \item there exists $k$ such that $i \leq k < j$ and $h_k \in \setfqr$, or,
    \item $h_{i-1}  \in \setfqr$ and $(h_{i-1} ,a_{i})$ stabilizes $r$, or,
    \item $h_{j} \in \setfqr$ and $(h_{j-1},a_{j})$ stabilizes $q$.
  \end{enumerate}
\end{lemma}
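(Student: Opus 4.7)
The plan is to argue by contradiction: assume that none of conditions~(1), (2), (3) holds and derive a contradiction from $\alpha(w) = 1_G$. Unpacking $\alpha(w) = 1_G$ yields two pieces of information, namely $\beta(w) = 1_H$ (so $h_0 = h_n = 1_H$, and both lie in $\setfqr$ since $\veps \in L(q) \cap L(r)$), together with the arithmetic fact that $\#_{h,a}(w)$ is \emph{even} for every $(h,a) \in H \times A$. In particular, $w$ is $\setfqr$-safe, so Corollary~\ref{cor:mainstuff} is available for every $\setfqr$-alternating pair that occurs as some $(h_{k-1}, a_k)$.

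Next I would anchor the boundary values. Because $i$ is the \emph{first} frontier index, the states $q_0 = q, q_1, \ldots, q_{i-1}$ all lie in the strongly connected component of $q$, so $x_{i-1} \in \aelauto{q}{q_{i-1}}$ witnesses $x_{i-1} \in L(q)$ and hence $h_{i-1} = \beta(x_{i-1}) \in F_{\{q\}}$. Symmetrically, since $j$ is the \emph{last} frontier index, $q_j, \ldots, q_n = r$ all sit in the component of $r$; applying Fact~\ref{fct:jump} to $y_j \in \aelauto{q_j}{r}$ yields $y_j\inv \in \aelauto{r}{q_j} \subseteq L(r)$, and combined with $\beta(y_j) = h_j\inv$ we get $h_j = \beta(y_j\inv) \in F_{\{r\}}$.

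If condition~(1) fails, then $h_k \notin \setfqr$ for every $k \in \{i, \ldots, j-1\}$. Since $h_0 = h_n = 1_H \in \setfqr$, the behaviour of the indicator $k \mapsto [h_k \in \setfqr]$ is pinned down at the two boundary crossings: $(h_{i-1}, a_i)$ is $\setfqr$-alternating iff $h_{i-1} \in \setfqr$, and $(h_{j-1}, a_j)$ is $\setfqr$-alternating iff $h_j \in \setfqr$ (assuming $\ell \geq 2$; the edge case $\ell = 1$ forces $i = j$, which merges both transitions into one and is handled by a separate short argument).

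The main obstacle is now to promote $h_{i-1} \in F_{\{q\}}$ to $h_{i-1} \in \setfqr$ while also establishing that $(h_{i-1}, a_i)$ stabilizes $r$, and symmetrically at index $j$. The key lever is Corollary~\ref{cor:mainstuff}: every $\setfqr$-alternating pair fails to stabilize at least one element of $\{q, r\}$. Combined with the anchoring step, every alternating pair arising from the prefix $x_{i-1}$ already stabilizes $q$ (the prefix lives entirely inside $q$'s component), so it must fail to stabilize $r$; dually for the suffix $y_j$. Feeding the $\setfqr$-safety constraint (evenness of $\#_{h,a}(w)$ for each alternating $(h,a)$ separately) back into this bookkeeping should force one of the two boundary transitions to supply a compensating $L(r)$-witness, respectively $L(q)$-witness, upgrading the boundary membership to $\setfqr$ while preserving the stabilization property required by~(2) or~(3). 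This zone-by-zone, pair-by-pair parity bookkeeping across the prefix, interior and suffix regions is the delicate step that I expect to consume the bulk of the proof.
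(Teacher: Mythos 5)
There is a genuine gap here. You have correctly identified the ingredients that will matter (the boundary anchors $h_{i-1}\in F_{\{q\}}$, $h_j\in F_{\{r\}}$; the role of Corollary~\ref{cor:mainstuff}; the parity of $\#_{h,a}$), but you stop precisely where the work begins. The final paragraph — ``feeding the $\setfqr$-safety constraint back into this bookkeeping should force one of the two boundary transitions to supply a compensating witness'' — is the entire content of the lemma, and it is left as a promise. The paper does not prove this by contradiction with a global zone count; it proceeds directly, via a careful case analysis on whether $x_i$ and then $x_{i-1}$ are $\setfqr$-safe, and each case is resolved by a different one of Assertions 1--3. Nothing in your sketch indicates how the cases would actually split or close.

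More concretely, you are missing the one technical construction that makes the parity argument tick. The hypothesis $\alpha(w)=1_G$ is exploited in the paper not as ``$\#_{h,a}(w)$ is even'' but as $\alpha(x_j)=\alpha(y_j\inv)$, which, combined with Lemma~\ref{lem:trans} applied to $y_j\inv\in\aclauto{r}{q_j}$, produces a word $z\in A^*$ from $r$ to $q_j$ with $\alpha(z)=\alpha(x_j)$. This $z$ is the object whose $(h,a)$-occurrences stabilize $r$, and whose counts $\#_{h,a}(z)$ agree in parity with $\#_{h,a}(x_j)$. Your ``dually for the suffix $y_j$'' elides this: an occurrence $y_j=uav$ with $\beta(u)=h'$ does not directly witness that $(h',a)$ stabilizes $r$, because stabilization requires a word \emph{starting at $r$} with $\beta$-value $h'$, while $u$ starts at $q_j$. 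One must first pass through $r\to q_j$, which shifts the $\beta$-value by the image of that connecting word; fixing that shift to be exactly $h_j$ is precisely what the $z$-construction accomplishes. Without it, your ``parity bookkeeping across the prefix, interior and suffix'' does not have a well-defined meaning on the suffix side. You also plan to ``promote $h_{i-1}\in F_{\{q\}}$ to $h_{i-1}\in\setfqr$,'' but this upgrade is not generally available: when $x_{i-1}$ fails to be $\setfqr$-safe it can be false, and the paper's treatment of that subcase does not establish it, instead exhibiting a different alternating pair inside $x_{i-1}$ and routing through Corollary~\ref{cor:mainstuff} to land in Assertion 1 or 3. In short, the preliminaries are reasonable but the argument is not there, and one key idea (the $z$-surrogate for the suffix) is absent.
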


\begin{proof}
  Since $\alpha(w) = 1_G$ and $w = x_{j}y_{j}$,  it follows that $\alpha(x_{j}) = \alpha(y_{j}\inv)$. Moreover, $y_j \in \alauto{q_j}{r}$. Consequently, $y_j\inv \in \alauto{r}{q_j}$ since $q_{j},r$ are strongly connected by definition of $j$. Thus, Lemma~\ref{lem:trans} yields $z \in \alauto{q_j}{r}$ such that $\alpha(z) = \alpha(y_{j}\inv) = \alpha(x_{j})$. For all $(h,a) \in H \times A$, we have the following two properties:
  \begin{itemize}
    \item By definition of $i$, we have $d(q,x_{i-1},q_{i-1}) = 0$. Thus,  if $\#_{h,a}(x_{i-1}) \geq 1$, then $(h,a)$ stabilizes $q$.
    \item By definition of $j$, we have $d(r,z,q_{j}) = 0$. Thus, if $\#_{h,a}(z)\geq 1$, then $(h,a)$ stabilizes $r$.
  \end{itemize}
  We use these properties and their contrapositives repeatedly. We consider two cases depending on whether $x_i$ is \setfqr-safe.

  \smallskip
  \noindent
  {\it Case~1: $x_i$ is \setfqr-safe.} We know that $h_i = \beta(x_i) \in \setfqr$ by Lemma~\ref{lem:alt}. Clearly, if $i < j$, then Assertion~1 in the lemma holds for $k = i$ and we are finished. Assume now that $i = j$. Let $(h,a) =(h_{i-1},a_i)=(h_{j-1},a_j)$.  The argument depends on whether $\#_{h,a}(x_{i-1}) \geq 1$ or not. If $\#_{h,a}(x_{i-1}) \geq 1$, then $(h,a) = (h_{j-1},a_j)$ stabilizes~$q$. Thus, Assertion~3 in the lemma holds as $h_j \!=\! h_i\! \in\! \setfqr$. Otherwise, $\#_{h,a}(x_{i-1}) = 0$. Since $x_i = x_{i-1}a_i$ and $(h,a) =(h_{i-1},a_i)$, it follows that $\#_{h,a}(x_{i}) = 1$. Thus, $\#_{h,a}(x_{i})$ is odd and since $x_i$ is \setfqr-safe, it follows that $(h,a)$ is \emph{not} \setfqr-alternating. Since $h\beta(a) = h_i \in \setfqr$, we also have $h_{i-1} = h \in \setfqr$. Finally, as $x_i = x_j$, we have $\alpha(x_i) = \alpha(x_{j}) = \alpha(z)$. Thus, as $\#_{h,a}(x_{i}) = 1$, we get that $\#_{h,a}(z)$ is odd by definition of $\alpha$. Hence, $\#_{h,a}(z) \geq 1$ which yields that $(h_{i-1},a_i) = (h,a)$ stabilizes~$r$. As $h_{i-1}\in \setfqr$, it follows that  Assertion~2~holds.

  \smallskip
  \noindent
  {\it Case~2: $x_i$ is not \setfqr-safe.} The argument depends on whether $x_{i-1}$ is \setfqr-safe or not. Assume first that $x_{i-1}$ is \setfqr-safe. Lemma~\ref{lem:alt} yields $h_{i-1} = \beta(x_{i-1}) \in \setfqr$. If there exists $k$ such that $i \leq k < j$ and $h_k = h_{i-1}$, then Assertion~1 in the lemma holds. Otherwise, we have $\#_{h_{i-1},a_i}(x_{i}) = \#_{h_{i-1},a_i}(x_{j})$. By hypothesis, $x_i = x_{i-1}a_i$ is not \setfqr-safe while $x_{i-1}$ is \setfqr-safe. Thus, $(h_{i-1},a_i)$ is \setfqr-alternating and  $\#_{h_{i-1},a_i}(x_{i}) = \#_{h_{i-1},a_i}(x_{j})$ is odd. Since $\alpha(z)=\alpha(x_{j})$, it follows that $\#_{h_{i-1},a_i}(z)$ is also odd by definition of $\alpha$. Thus, $\#_{h_{i-1},a_i}(z) \geq 1$ which implies that $(h_{i-1},a_i)$ stabilizes~$r$. Since $h_{i-1} \in \setfqr$,  Assertion~2 in the lemma holds.

  Finally, assume that $x_{i-1}$ is not \setfqr-safe: we have $(h,a)$ which is \setfqr-alternating and such that $\#_{h,a}(x_{i-1})$ is odd. Since $x_i = x_{i-1}a_i$ is not \setfqr-safe as well, we may choose $(h,a)$ so that $(h,a) \neq (h_{i-1},a_i)$. Thus, $\#_{h,a}(x_{i})$ is odd as well. Since $\#_{h,a}(x_{i-1}) \geq 1$, we know that $(h,a)$ stabilizes $q$. By Corollary~\ref{cor:mainstuff} it follows that $(h,a)$ does \emph{not} stabilize~$r$. This implies $\#_{h,a}(z)= 0$ and since $\alpha(z)=\alpha(x_{j})$, it follows that $\#_{h,a}(x_{j})$ is even. Since $\#_{h,a}(x_{i})$ is odd, this yields $k$ such that $i \leq k < j$ and $(h_k,a_{k+1}) = (h,a)$. Since $(h,a)$ is \setfqr-alternating either $h_k \in \setfqr$ or $h_{k+1} =h_k\beta(a_{k+1}) \in \setfqr$. If $h_k \in \setfqr$, Assertion~1 holds. If $h_{k+1} \in \setfqr$, then either $i \leq k < j-1$ and Assertion~1 in the lemma holds, or $k = j-1$ which means that $h_j = h_{k+1} \in \setfqr$ and $(h_{j-1},a_{j}) = (h,a)$ which stabilizes $q$: Assertion~3 in the lemma holds.
\end{proof}

We may now prove that $\veps \in \aelauto{q}{r}$. We treat the three cases depicted in Lemma~\ref{lem:grp:cond1} independently. First, assume that there exists $k$ such that $i \leq k < j$ and $h_k \in \setfqr$. The definition of \setfqr yields $v \in L(q) \cap L(r)$ such that $\beta(v) = h_k$. It follows from Fact~\ref{fct:grp:middle} that $d(q,x_k,q_k) \leq \ell-1$. Therefore, since $v \in L(q)$ and $\beta(x_k) = h_k= \beta(v)$, Lemma~\ref{lem:thekey} implies that $v \in \aelauto{q}{q_k}$. Symmetrically, Fact~\ref{fct:grp:middle} yields $d(q_k,y_k,r) \leq \ell-1$. Thus, since we have $v \in L(r)$ and $\beta(y_k) = h_k\inv = (\beta(v))\inv$, it follows from Lemma~\ref{lem:thekey} that $v\inv \in \aelauto{q_k}{r}$. Hence, $vv\inv \in \aelauto{q}{r}$. Since $vv\inv \xrightarrow{*} \veps$, Fact~\ref{fct:jump} yields $(q,\veps,r) \in \ecdel$ concluding this case.

In the second case, $h_{i-1}\!\in\! \setfqr$ and $(h_{i-1} ,a_{i})$ stabilizes~$r$. By definition of \setfqr, we have $v \in L(q) \cap L(r)$ such that $\beta(v) = h_{i-1}$. We have $d(q,x_{i-1},q_{i-1}) = 0$ by definition of~$i$. Thus, as $v \in L(q)$ and $\beta(x_{i-1}) = \beta(v)$, Lemma~\ref{lem:thekey} yields $v \in \aelauto{q}{q_{i-1}}$. Moreover, since $(h_{i-1},a_{i})$ stabilizes $r$, $v \in L(r)$ and $\beta(v) = h_{i-1}$, Lemma~\ref{lem:stabalt} implies that $va_i \in L(r)$. Fact~\ref{fct:grp:middle} yields $d(q_{i},y_i,r) \leq \ell-1$. Thus, since $va_i \in L(r)$ and $\beta(y_i) = h_i\inv = (\beta(va_i))\inv$, Lemma~\ref{lem:thekey} yields $(va_i)\inv \in \aelauto{q_i}{r}$. Hence, since $(q_{i-1},a_{i},q_i) \in \delta$, we get $va_i(va_i)\inv\in\aelauto{q}{r}$. Since $va_i(va_i)\inv \xrightarrow{*} \veps$, it follows that $(q,\veps,r) \in \ecdel$ by Fact~\ref{fct:jump}, concluding this case.

In the last case, $h_{j} \in \setfqr$ and $(h_{j-1},a_{j})$ stabilizes $q$. By definition, we get $v \in L(q) \cap L(r)$ such that $\beta(v) = h_{j}$. We have $d(q_j,y_j,r) = 0$ by definition of $j$. As $v \in L(r)$ and $\beta(y_j) \!=\! h_j\inv\! =\!\beta(v)\inv$, we get $v\inv \in \aelauto{q_j}{r}$ by Lemma~\ref{lem:thekey}. Moreover, we know that $(h_{j-1},a_{j})$ stabilizes $q$, $v \in L(q)$ and $\beta(v) = h_j = h_{j-1}\beta(a_j)$. Thus, $va_j\inv \in L(q)$ by Lemma~\ref{lem:stabalt}. We have $d(q,x_{j-1},q_{j-1}) \leq \ell-1$ by Fact~\ref{fct:grp:middle}. Thus, since $va_j\inv \in L(q)$ and $\beta(x_{j-1}) = h_{j-1} = \beta(va_j\inv)$, it follows from Lemma~\ref{lem:thekey} that $va_j\inv \in \aelauto{q}{q_{j-1}}$. Since $(q_{j-1},a_{j},q_j) \in \delta$, we obtain $va_j\inv a_jv\inv \in \aelauto{q}{r}$. Thus, since $va_j\inv a_jv\inv \xrightarrow{\smash{*}} \veps$, Fact~\ref{fct:jump} yields $(q,\veps,r) \in \ecdel$ as desired. This concludes the proof for the first condition.

\smallskip
\noindent
{\bf Condition~2.} Consider $q_1, \dots ,q_n\in Q$, $r_1, \dots ,r_n  \in  Q$ and $w_1, \dots, w_n \in  A^*$ such that $\sum_{i \leq n} d(q_i,w_i,r_i)  \leq \ell-1$ and $\alpha(w_1) = \cdots = \alpha(w_n)$. We need to exhibit $u \in \tilas$ such that $u \in \aelauto{q_i}{r_i}$ for every $i \leq k$.  By definition of $\alpha$, we have $\beta(w_1) = \cdots = \beta(w_n)$. Let $S = \{q_1,\dots,q_n\}$. There are two cases depending on whether $w_1$ is $F_S$-safe or not.

Assume first that $w_1$ is $F_S$-safe. By Lemma~\ref{lem:alt}, it follows that $\beta(w_1) \in F_S$. We get $u \in \tilas$ such that $u \in \bigcap_{i \leq n} L(q_i)$ and $\beta(u) = \beta(w_1) = \cdots = \beta(w_n)$. Since $d(q_i,w_i,r_i) \leq \ell-1$ by hypothesis, Lemma~\ref{lem:thekey} yields $u \in \aelauto{q_i}{r_i}$  for every $i \leq k$, concluding this case.

Conversely, we assume that $w_1$ is \emph{not} $F_S$-safe. By definition, this yields an $F_S$-alternating pair $(h,a)$ such~that $\#_{h,a}(w_1)$ is odd. By definition of $\alpha$, it follows that~$\#_{h,a}(w_i)$ is odd as well for every index $i \leq n$ since \mbox{$\alpha(w_1) = \alpha(w_i)$}. Therefore, we have $\#_{h,a}(w_i) \geq 1$ for every $i \leq n$. We get $x_i,y_i\! \in\! A^*$ such that $w_i = x_iay_i$ and \mbox{$\beta(x_i) = h$}. Since \mbox{$d(q_i,w_i,r_i) \in \nat$}, we get \mbox{$d(q_i,x_i,s_i) + d(s_i,a,t_i) + d(t_i,y_i,r_i) = d(q_i,w_i,r_i)$} for $s_i,t_i \in Q$ by Fact~\ref{fct:dist}. Thus, $d(t_i,y_i,r_i) \leq d(q_i,w_i,r_i)$ for all $i \leq k$. Also, $(h,a)$ is $F_S$-alternating and \mbox{$S = \{q_1,\dots,q_n\}$}. Hence, Corollary~\ref{cor:mainstuff} yields $j \leq k$ such that $(h,a)$ does not stabilize~$q_j$.  As \mbox{$\#_{h,a}(x_ja) \geq 1$}, we get $d(q_j,x_ja_j,t_j) \geq 1$ which yields the  \emph{strict} inequality $d(t_j,y_j,r_j) < d(q_j,w_j,r_j)$. Altogether, we obtain $\sum_{i \leq k} d(t_i,y_i,r_i) < \sum_{i \leq k} d(q_i,w_i,r_i)$. By hypothesis, this implies that $\sum_{i \leq k} d(t_i,y_i,r_i) \leq (\ell-1) - 1$. Moreover, $H$ is a group, $\beta(x_1a) = \cdots = \beta(x_na) = h\beta(a)$ and $\beta(w_1) = \cdots = \beta(w_n)$. Hence, $\beta(y_1) = \cdots = \beta(y_n)$ and since $\beta$ is an $(\ell-1)$-synchronizer, we obtain $z \in \tilas$ such that $z \in \aelauto{t_i}{r_i}$ for every $i \leq k$.

We now consider two subcases. Since the pair $(h,a)$ is $F_S$-alternating, either $h \in F_S$ or $h\beta(a) \in F_S$. If $h \in F_S$, we get a word $v\! \in\! \bigcap_{i \leq k}\! L(q_i)$ such that $\beta(v)\! =\! h\! =\!\beta(x_i)$ for all \mbox{$i\! \leq\! n$}. Thus, since $d(q_i,x_i,s_i) \leq d(q_i,w_i,r_i) \leq \ell-1$, Lemma~\ref{lem:thekey} yields $v \in \aelauto{q_i}{s_i}$ for all $i \leq k$. Moreover, we have $(s_i,a,t_i) \in \delta$. Altogether, it follows that $vaz \in \aelauto{q_i}{r_i}$ for every $i \leq k$. This concludes the first subcase. Finally, assume that $h\beta(a) \in F_S$. This yields $v' \in \bigcap_{i \leq k} L(q_i)$ such that  $\beta(v') = h\beta(a) = \beta(x_ia)$ for all $i \leq n$. Since we know that $d(q_i,x_ia_i,t_i) \leq d(q_i,w_i,r_i) \leq \ell-1$,  Lemma~\ref{lem:thekey} yields $v' \in \aelauto{q_i}{t_i}$ for every $i \leq k$. Altogether, it follows that $v'z \in \aelauto{q_i}{r_i}$ for every $i \leq k$. This completes the proof of Proposition~\ref{prop:covesynch}.

\subsection{\ptime-completeness}
We prove that \grp-separation is \ptime-complete. We already proved that it is in \ptime. We show that it is \ptime-hard, even  when one of the two inputs is the singleton $\{\veps\}$.
We reduce the Monotone Circuit Value problem, a variant of the Circuit Value Problem in which all gates are either a disjunction ($\vee$) or a conjunction ($\wedge$). It is known to be \ptime-complete~\cite{Goldschlager1977TheMA}. Let us describe it.

\noindent A \emph{Boolean circuit} is a finite directed acyclic graph such that:
\begin{itemize}
  \item There are \emph{input vertices} with no incoming edge and labeled by truth values ($0$ for \emph{false}, $1$ for \emph{true}).
  \item The other vertices have exactly two incoming edges. They are called \emph{gates} and are labeled by a logical connective: ``$\vee$'' or ``$\wedge$''. They have arbitrarily many outgoing edges.
  \item There is a single gate with no outgoing edge. It is called the \emph{output vertex}.
\end{itemize}
We present an Example of a Boolean circuit in Figure~\ref{fig:circuit} below.

\tikzstyle{mstate}=[circle,minimum size=5mm,inner sep = 1pt,draw=black,thick,align=center]
\tikzstyle{medge}=[draw=black,thick,->]
\tikzstyle{mlabel}=[fill=white,inner sep=0pt,very thick,circle]

\begin{figure}[!htb]
  \centering
  \scalebox{0.85}{
    \begin{tikzpicture}[scale=.6]
      \node[mstate] (w) at (0.0,4.5) {$0$};
      \node[mstate] (x) at (0.0,3.0) {$1$};
      \node[mstate] (y) at (0.0,1.5) {$0$};
      \node[mstate] (z) at (0.0,0.0) {$1$};

      \node[mstate] (c11) at (2.0,3.75) {$\lor$};
      \node[mstate] (c12) at (2.0,2.25) {$\lor$};
      \node[mstate] (c13) at (2.0,0.75) {$\land$};

      \node[mstate] (c21) at (4.0,4.5) {$\land$};
      \node[mstate] (c22) at (4.0,1.5) {$\land$};
      \node[mstate] (c23) at (4.0,0.0) {$\lor$};

      \node[mstate] (c31) at (6.0,3.0) {$\vee$};
      \node[mstate] (c32) at (6.0,0.75) {$\vee$};

      \node[mstate] (c4) at (8.0,1.9) {$\land$};

      \draw[medge] (w) to [out=-10,in=165] node {} (c11);
      \draw[medge] (x) to [out=10,in=195] node {} (c11);
      \draw[medge] (x) to [out=-10,in=165] node {} (c12);
      \draw[medge] (y) to [out=10,in=195] node {} (c12);
      \draw[medge] (y) to [out=-10,in=165] node {} (c13);
      \draw[medge] (z) to [out=10,in=195] node {} (c13);

      \draw[medge] (w) to [out=0,in=180] node {} (c21);
      \draw[medge] (c11) to [out=0,in=205] node {} (c21);
      \draw[medge] (c12) to [out=0,in=165] node {} (c22);
      \draw[medge] (c13) to [out=10,in=195] node {} (c22);
      \draw[medge] (c13) to [out=-10,in=155] node {} (c23);
      \draw[medge] (z) to [out=0,in=180] node {} (c23);

      \draw[medge] (c21) to [out=0,in=165] node {} (c31);
      \draw[medge] (c22) to [out=10,in=195] node {} (c31);
      \draw[medge] (c22) to [out=-10,in=165] node {} (c32);
      \draw[medge] (c23) to [out=0,in=195] node {} (c32);

      \draw[medge] (c31) to [out=0,in=165] node {} (c4);
      \draw[medge] (c32) to [out=0,in=195] node {} (c4);
    \end{tikzpicture}}
  \caption{An example of a Boolean circuit, which evaluates to $0$.}
  \label{fig:circuit}
\end{figure}
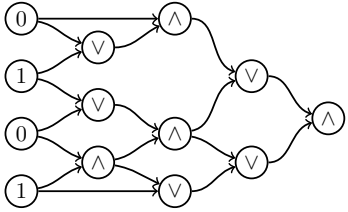

A Boolean circuit computes a truth value for each gate. The decision problem takes as input a Boolean circuit $C$ and asks if the value computed by the output vertex is true. We present a logarithmic space reduction from this problem to \emph{non}-separability by \grp. Given as input a Boolean circuit $C$, we construct an \nfa $\As_C$ such that $C$ evaluates to true if and only if $\{\veps\}$ is not \grp-separable from $L(\As_C)$. This implies that \grp-separation is \ptime-hard, as desired. We only present the construction of $\As_C$. That it can be implemented in logarithmic space is straightforward and left to the reader.

We fix $C$ and describe the \nfa $\As_C = (Q,I,F,\delta)$. We let $n$ be the number of vertices in $C$ and $\{v_1,\dots,v_n\}$ be the set of all these vertices, with $v_n$ as the output vertex. The \nfa \As uses an alphabet $A = \{a_1,\dots,a_n\}$ of size $n$. For each $i \leq n$, the set of states $Q$ contains three states $q_i,r_i$ and $s_i$ associated to the vertex $v_i$ (note that $s_i$ is only useful when $v_i$ is a gate labeled by ``$\wedge$''). Moreover, we also associate several transitions in $\delta$ connecting these three states to those associated to other vertices. There are several cases depending on $v_i$.

First, assume that $v_i$ is an input vertex. If $v_i$ is labeled by ``$0$'' (false), we add the following transition to $\As_C$:
\begin{center}
  \scalebox{1.0}
  {
    \begin{tikzpicture}
      \node[lstate] (s) at (0.0,0.0) {$q_i$};
      \node[lstate] (r) at (1.5,0.0) {$r_i$};

      \draw[trans] (s) edge node[above] {$a_i$} (r);
    \end{tikzpicture}
  }
\end{center}
If $v_i$ is labeled by ``$1$'', we add the following transitions:
\begin{center}
  \scalebox{1.0}
  {
    \begin{tikzpicture}
      \node[lstate] (s) at (0.0,0.0) {$q_i$};
      \node[lstate] (r) at (1.5,0.0) {$r_i$};

      \draw[trans] (s) edge node[above] {$a_i$} (r);
      \draw[trans] (r) edge [loop right] node[right] {$a_i$} (r);
    \end{tikzpicture}
  }
\end{center}
Assume now that $v_i$ is a gate. Let $j,k \leq n$ be the two indices such that $C$ contains edges from $v_j$ to $v_i$ and from $v_k$ to $v_i$.  If $v_i$ is labeled by ``$\vee$'', we add the following transitions to $\As_C$:
\begin{center}
  \scalebox{1.0}
  {
    \begin{tikzpicture}
      \node[lstate] (s) at (0.0,0.0) {$q_i$};
      \node[lstate] (sj) at (1.5,0.4) {$q_j$};
      \node[lstate] (sk) at (1.5,-0.4) {$q_k$};

      \node[lstate] (rj) at (4.0,0.4) {$r_j$};
      \node[lstate] (rk) at (4.0,-0.4) {$r_k$};
      \node[lstate] (r) at (5.5,0.0) {$r_i$};

      \draw[trans] (s) edge node[sloped,above] {$a_i$} (sj);
      \draw[trans] (s) edge node[sloped,below] {$a_i$} (sk);

      \draw[trans] (rj) edge node[sloped,above] {$a_i$} (r);
      \draw[trans] (rk) edge node[sloped,below] {$a_i$} (r);

      \draw[trans] (r) edge [loop right] node[right] {$a_i$} (r);
    \end{tikzpicture}
  }
\end{center}
If $v_i$ is labeled by ``$\wedge$'', we add the following transitions:
\begin{center}
  \scalebox{1.0}
  {
    \begin{tikzpicture}
      \node[lstate] (s) at (0.0,0.0) {$q_i$};
      \node[lstate] (sj) at (1,0.0) {$q_j$};

      \node[lstate] (rj) at (2.5,0.0) {$r_j$};
      \node[lstate] (q) at (3.5,0.0) {$s_i$};
      \node[lstate] (sk) at (4.5,0.0) {$q_k$};

      \node[lstate] (rk) at (6,0.0) {$r_k$};
      \node[lstate] (r) at (7,0.0) {$r_i$};

      \draw[trans] (s) edge node[above] {$a_i$} (sj);
      \draw[trans] (rj) edge node[above] {$a_i$} (q);

      \draw[trans] (q) edge node[above] {$a_i$} (sk);
      \draw[trans] (rk) edge node[above] {$a_i$} (r);

      \draw[trans] (q) edge [loop above] node[above] {$a_i$} (q);
    \end{tikzpicture}
  }
\end{center}

We let $\As_C = (Q,\{q_n\},\{r_n\},\delta)$. One may verify that the output vertex $v_n$ of $C$ evaluates to \emph{true} if and only if $\{\veps\}$ is not \grp-separable from $L(\As_C)$. Note that the proof argument does not look at \grp-separation directly: we use Theorem~\ref{thm:grpsep} instead. Indeed, it implies that $\{\veps\}$ is not \grp-separable from $L(\As_C)$ if and only if $\veps \in L(\econs{\As_C})$. It is straightforward to verify that the latter property holds if and only if the output vertex of $C$ evaluates to \emph{true}. One use induction to show that each gate $i$ evaluates to \emph{true} if and only if $\veps \in \aelauto{q_i}{r_i}$. This completes the presentation of our reduction.

\subsection{Connection with Ash's historical result}
\label{sec:history}
We compare Theorem~\ref{thm:grpsep} with the historical \grp-covering algorithm that can be deduced from Ash's results. We prove that the former is essentially a reformulation of the latter.

\newcommand{\opti}[2]{\ensuremath{\Is_{#1}[#2]}\xspace}
\newcommand{\copti}[1]{\opti{\Cs}{#1}}
\newcommand{\gropti}[1]{\opti{\grp}{#1}}

\smallskip
\noindent
{\bf Preliminaries.} Let \Cs be a Boolean algebra and $\alpha: A^* \to M$ be a morphism into a finite monoid. We define $\copti{\alpha} \subseteq 2^M$ as the set of all subsets $S \subseteq M$ such that $\{\alpha\inv(s) \mid s \in S\}$ is \emph{not} \Cs-coverable. It carries enough information to decide \Cs-covering for every input set consisting only of languages recognized by $\alpha$. More precisely, for $F_1,\dots,F_n \subseteq M$, one may verify that $\{\alpha\inv(F_i) \mid i \leq n\}$ is not \Cs-coverable if and only if there is $S \in \copti{\alpha}$ such that $S \cap F_i \neq \emptyset$ for all $i \leq n$.

Thus, a procedure computing $\copti{\alpha} \subseteq 2^M$ from an input morphism $\alpha: A^* \to M$ yields an algorithm for \Cs-covering.  Given a finite set of languages \Hb, one first computes a single morphism $\alpha: A^* \to M$ recognizing all $H \in \Hb$ (this is straightforward). Then, one computes $\copti{\alpha} \subseteq 2^M$. It carries enough information to decide whether \Hb is \Cs-coverable.

\smallskip
\noindent
{\bf Historical algorithm.} Ash's results~\cite{Ash91} yield a characterization of \gropti{\alpha}. We present this characterization (we use a formulation taken from~\cite{henckell:hal-00019815}) and prove that Theorem~\ref{thm:grpsep} is a natural reformulation on automata and a simple corollary.

We need weak inverses (they are the counterpart of automata construction $\As \mapsto \acauto$ of Section~\ref{sec:prelims}). Let $\alpha: A^*\to M$ be a morphism into a finite monoid. For $s \in M$, a \emph{weak inverse} of $s$ in an element $t \in M$ such that $tst = t$. We use this definition to associate a second morphism $\gamma_\alpha: \tilas \to 2^M$ over the extended alphabet \tila. For $a \in A$, we let,
\[
  \begin{array}{lll}
    \gamma_\alpha(a) & = &\{\alpha(a)\} \in 2^M, \\
    \gamma_\alpha(a\inv) &= & \{s \in \alpha(A^*) \mid \text{$s$ is a weak inverse of $\alpha(a)$}\}.
  \end{array}
\]
We now present the characterization. Recall that we write $L_\veps = \{w \in \tilas \mid w \xrightarrow{*} \veps\}$ (see Section~\ref{sec:grp}). We extend this notation to all words $u \in \tilas$: we let $L_u =  \{w \in \tilas \mid w \xrightarrow{*} u\}$.

\begin{theorem}[\cite{Ash91,henckell:hal-00019815}] \label{thm:ash}
  Let $\alpha: A^* \to M$ be a morphism into a finite monoid. Then, $\gropti{\alpha} \subseteq 2^M$ consists of all sets $\bigcup_{w \in L_u} \gamma_\alpha(w) \subseteq M$ for $u \in \tilas$.
\end{theorem}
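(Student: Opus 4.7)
The plan is to derive this characterization directly from Theorem~\ref{thm:grpsep}. I would associate to each $s \in M$ the canonical deterministic \nfa $\As_s = (M, \{1_M\}, \{s\}, \delta_s)$ whose transitions are $\delta_s = \{(m, a, m\alpha(a)) : m \in M,\ a \in A\}$, which recognizes $\alpha\inv(s)$. All the $\As_s$ share the same underlying transition graph (only the final state changes), so strong connectivity in $\As_s$ is independent of~$s$.

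The technical heart of the argument is a bridge lemma linking the construction $\As_s \mapsto \econs{\As_s}$ of Section~\ref{sec:grp} to the morphism $\gamma_\alpha$: for every $u \in \tilas$,
\[
  \bigl\{s \in M : u \in L(\econs{\As_s})\bigr\} \;=\; \bigcup_{w \in L_u} \gamma_\alpha(w).
\]
For the right-to-left inclusion, given a factorization $s = s_1 \cdots s_n$ witnessing $s \in \gamma_\alpha(w)$ for some $w = b_1 \cdots b_n \in L_u$, I would consider the partial products $m_i = s_1 \cdots s_i$ and show they form a run of $\acons{\As_s}$ from $1_M$ to $s$ reading $w$: for $b_i \in A$ this is immediate by definition of $\delta_s$; for $b_i = a\inv$ the weak-inverse identity $s_i \alpha(a) s_i = s_i$, combined with a finite-monoid idempotent-power argument, shows that $m_{i-1}$ and $m_i$ lie in the same strong component of $\As_s$ and that $(m_{i-1}, a\inv, m_i)$ is a backward transition of $\acons{\As_s}$. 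The reduction $w \xrightarrow{*} u$ then folds this run into an $\econs{\As_s}$-run on $u$ using Fact~\ref{fct:jump}. The converse inclusion proceeds by unfolding: a run of $\econs{\As_s}$ on $u$ expands, via the definition of $\ecdel$, into a run of $\acons{\As_s}$ on some word $w$ with $w \xrightarrow{*} u$, and each reversed edge $(m, a\inv, m')$ encountered along the way is matched with an element of $\gamma_\alpha(a\inv)$ by applying Lemma~\ref{lem:trans} to a cycle through $m', m$ in $\As_s$; the product of these contributions witnesses $s \in \gamma_\alpha(w)$.

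Given the bridge lemma, both directions of the theorem follow from Theorem~\ref{thm:grpsep}. For any $u \in \tilas$, the bridge lemma places $u$ in $\bigcap_{s \in S_u} L(\econs{\As_s})$ where $S_u = \bigcup_{w \in L_u} \gamma_\alpha(w)$, so $\{\alpha\inv(s) : s \in S_u\}$ is not \grp-coverable and $S_u \in \gropti{\alpha}$. Conversely, any $S \in \gropti{\alpha}$ yields (by Theorem~\ref{thm:grpsep}) a common $u \in \bigcap_{s \in S} L(\econs{\As_s})$, and the bridge lemma forces $S \subseteq S_u$; combined with the downward closure of $\gropti{\alpha}$, this exhausts $\gropti{\alpha}$ by the sets $\{S_u : u \in \tilas\}$. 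The main obstacle is the bridge lemma, and within it the delicate translation between the combinatorial definition of $\acdel$ (strong connectivity in $\As_s$) and the algebraic condition in $\gamma_\alpha(a\inv)$ (weak inverses in $\alpha(A^*)$); this is precisely the place where the finite-monoid gymnastics (idempotent powers, subgroup structure of strong components) that the paper's direct proof was designed to sidestep must be brought back in, in order to reconcile the two frameworks.
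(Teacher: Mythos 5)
The paper does not prove Theorem~\ref{thm:ash}; it cites it as a known result of Ash and uses it in Section~\ref{sec:history} to \emph{re-derive} Theorem~\ref{thm:grpsep} (via Lemma~\ref{lem:weakinv}), the point being that the two results are interchangeable once the link is established. What you propose is the reverse derivation --- obtaining Theorem~\ref{thm:ash} from the paper's independently proved Theorem~\ref{thm:grpsep}. That is a reasonable and genuinely different goal, and the scaffolding (the Cayley automata $\As_s$, a bridge lemma relating $\econs{\As_s}$ to $\gamma_\alpha$, downward closure of $\gropti{\alpha}$) is the right shape. However, your proof of the bridge lemma is broken at its crucial step.

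The right-to-left inclusion claims that the partial products $m_i = s_1 \cdots s_i$ form a run of $\acons{\As_s}$ reading $w$, and in particular that whenever $b_i = a\inv$, the triple $(m_{i-1}, a\inv, m_i)$ is a transition of $\acons{\As_s}$. This is false. By the definition of $\acdel$, that would require $(m_i, a, m_{i-1}) \in \delta_s$, i.e.\ $m_{i-1} s_i \alpha(a) = m_{i-1}$, and also that $m_{i-1}, m_i$ be strongly connected. The weak-inverse identity $s_i \alpha(a) s_i = s_i$ only makes $s_i\alpha(a)$ an idempotent; it gives no control over $m_{i-1} s_i \alpha(a)$. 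Concretely, take $M = \{1,e\}$ with $e^2 = e$, $A = \{a\}$, $\alpha(a) = e$. Then $\gamma_\alpha(a\inv) = \{e\}$. With $u = w = a\inv$ and $s = e$, you get $m_0 = 1$, $s_1 = e$, $m_1 = e$; but $(1,a\inv,e)\notin\acons{\delta_e}$ because $(e,a,1)\notin\delta_e$ and $1,e$ are not even strongly connected in $\As_e$. The bridge equality still holds here, but only because of an $\veps$-transition $(1,\veps,e)$ of $\econs{\As_e}$ (witnessed by $aa\inv$). The fix is exactly what Lemma~\ref{lem:weakinv} does in the paper's own direction: a weak inverse only produces an $a\inv$-step in the \emph{$\veps$-augmented} automaton $\ecauto$, via a pumping argument that manufactures two $\veps$-transitions around the genuine backward edge. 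You would need to state and prove the Cayley-automaton analogue of Lemma~\ref{lem:weakinv} and then run the rest of your right-to-left argument inside $\econs{\As_s}$ throughout (noting also that ``folding'' $w \xrightarrow{*} u$ must be carried out at the level of $\ecauto$, which requires a slight strengthening of Fact~\ref{fct:jump}).

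Your left-to-right inclusion has a second, smaller problem: you invoke Lemma~\ref{lem:trans}, but that lemma is stated for a morphism into a finite \emph{group}, whereas $\alpha$ maps into an arbitrary finite monoid and the lemma's proof (which produces an inverse via $g^p = 1_G$) does not go through. What you actually need at a reversed edge $(m, a\inv, m')$ is to build a weak inverse directly from a cycle: since $m'\alpha(a) = m$ and $m\alpha(y) = m'$ for some $y\in A^*$, one can check that $t = \alpha\bigl((ya)^{\omega-1} y\bigr)$, with $\omega$ a suitably large idempotent power, satisfies $t\alpha(a)t = t$ and $mt = m'$. That is an elementary finite-monoid calculation, but it is not an application of Lemma~\ref{lem:trans}. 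With these two repairs the argument would go through, and it would indeed give a self-contained derivation of Ash's theorem from Theorem~\ref{thm:grpsep} --- something the paper does not attempt.
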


\begin{remark}
  It is simple to verify that this yields an algorithm for computing $\gropti{\alpha}$ from $\alpha$. Roughly, one first needs to verify that the set $S_\veps=\bigcup_{w \in L_\veps} \gamma_\alpha(w)$ can be computed using a least fixpoint procedure (this is the counterpart of the construction $\As \mapsto \ecauto$ in Theorem~\ref{thm:grpsep}). Then, Theorem~\ref{thm:ash} implies that \gropti{\alpha} is the least subset of $2^M$ closed under multiplication and containing $S_\veps$ and all sets $\gamma_\alpha(b)$ for $b \in \tila$. It can be computed using again a least fixpoint procedure.
\end{remark}

Let us explain why Theorem~\ref{thm:ash} implies Theorem~\ref{thm:grpsep}. Let $k \geq 1$ and $\As_j = (Q_j,I_j,F_j,\delta_j)$ a \nfa for $1 \leq j \leq k$. Theorem~\ref{thm:grpsep} states that \mbox{$\{L(\As_j) \mid j \leq k\}$} is \grp-coverable if and only if  $\bigcap_{j \leq k} L(\econs{\As_j}) = \emptyset$. We use Theorem~\ref{thm:ash} to prove the right to left implication (the converse is simple as seen in Section~\ref{sec:grp}). Actually, we prove the contrapositive. Assume that $\{L(\As_j) \mid j \leq k\}$ is not \grp-coverable. We show that $\bigcap_{i \leq k} L(\econs{\As_i})\neq\emptyset$.

\newcommand{\popair}{\ensuremath{2^{\smash{Q^2}}}\xspace}

First, we build a morphism recognizing all languages $L(\As_j)$ (we use the standard transition morphism construction). Let $Q = \bigcup_{j \leq k} Q_j$ (we assume that the sets $Q_j$ are pairwise disjoint) and $\delta = \bigcup_{j \leq k} \delta_j$. Let $\As = (Q,\emptyset,\emptyset,\delta)$. Let $M = \popair$ be the monoid equipped with the standard multiplication: for $P,P' \in \popair$, we let $PP'$ as the set of all pairs $(q,s) \in Q^2$ such that $(q,r) \in P$ and $(r,s) \in P'$ for some $r \in Q$ (the set $\{(q,q) \mid q \in Q\}$ is the identity element). It is standard that the map $\alpha: A^* \to M$ defined by $\alpha(w) = \{(q,r) \mid w \in \alauto{q}{r}\}$ is a morphism. One may verify that $L(\As_j) = \alpha\inv(\{P \in \popair \mid P \cap (I_j \times F_j) \neq \emptyset\})$ for every $j \leq k$. The proof is based on the following simple lemma. It connects the \nfa \ecauto to weak inverses.

\begin{lemma} \label{lem:weakinv}
  Let $a \in A$ and let $P \in \alpha(A^*)$ be a weak inverse of $\alpha(a)$. For all $(q,r) \in P$, we have $a\inv \in \aelauto{q}{r}$.
\end{lemma}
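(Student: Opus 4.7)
The plan is to exploit the fact that $e := \alpha(aw) = \alpha(a)P$ is an idempotent of $M$, where $w \in A^*$ is any word with $\alpha(w) = P$. Left-multiplying the hypothesis $P\alpha(a)P = P$ by $\alpha(a)$ shows $(\alpha(a)P)^2 = \alpha(a)P$, and $Pe = P\alpha(aw) = P\alpha(a)P = P$ gives $P = Pe^n$ for every $n \in \nat$. I would then fix $(q,r) \in P$ and work with a sufficiently long decomposition $(q,r) \in Pe^n$ for $n \geq |Q|$.

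Such a decomposition provides states $z_0, z_1, \ldots, z_n$ with $(q,z_0) \in P$, $(z_{k-1},z_k) \in e$ for all $k$, and $z_n = r$. Pigeonhole yields indices $i < j$ with $z_i = z_j$, and setting $z := z_i$ gives three crucial properties: $(q,z) \in Pe^i = P$, $(z,r) \in e^{n-i} = e$, and $(z,z) \in e^{j-i} = e$ (all reductions using idempotence of $e$ and $Pe = P$). The key feature is the diagonal property $(z,z) \in \alpha(aw)$, because it forces \As\ to contain a loop $z \autoarr{a} z' \autoarr{w} z$ whose states are all strongly connected to $z$. Two consequences follow: $(z',a^{-1},z) \in \acdel$, and the $w$-segment $z' \autoarr{w} z$ may be reversed inside \acauto\ to produce $z \autoarr{w^{-1}} z'$.

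Combining $(z',z) \in P$ (from the loop) with $(z,r) \in e$ additionally yields $(z',r) \in Pe = P$, hence $w \in \alauto{z'}{r}$. From this I would build two runs in \acauto: the concatenation $q \autoarr{w} z \autoarr{w^{-1}} z'$, carrying the label $ww^{-1} \in L_\veps$, and $z \autoarr{w^{-1}} z' \autoarr{w} r$, carrying the label $w^{-1}w \in L_\veps$. The definition of \ecauto\ then supplies the $\veps$-transitions $(q, \veps, z') \in \ecdel$ and $(z, \veps, r) \in \ecdel$, and composing these with the reverse $a$-transition $(z',a^{-1},z) \in \acdel$ produces a run $q \autoarr{\veps} z' \autoarr{a^{-1}} z \autoarr{\veps} r$ in \ecauto\ whose label is exactly $a^{-1}$.

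The hardest part will be isolating the diagonal state $z$ together with the additional compatibility condition that its loop midpoint $z'$ also witnesses $(z',r) \in P$, so that the same $z'$ can play both the role of the target of the reverse $w$-segment $z \autoarr{w^{-1}} z'$ and the role of the starting point of a forward $w$-segment to $r$. The iterated pigeonhole on $Pe^n$ delivers all these conditions at once, which is why the argument has to go through a long decomposition rather than applying the single-step identity $P\alpha(a)P = P$ directly.
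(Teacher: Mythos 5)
Your proof is correct and follows essentially the same strategy as the paper's: exploit the idempotent $e = \alpha(a)P$, use a pumping/pigeonhole argument to isolate a state $z$ (the paper's $s$) with $(z,z) \in e$, split that loop at the $a$-transition to obtain $(z',a\inv,z) \in \acdel$ with $z,z'$ strongly connected, and then manufacture the two $\veps$-transitions entering $z'$ and leaving $z$. The only cosmetic difference is in the words witnessing those $\veps$-transitions: you use $ww\inv \in \aclauto{q}{z'}$ and $w\inv w \in \aclauto{z}{r}$, which requires the extra observation $(z',r) \in Pe = P$, whereas the paper uses $uauu\inv a\inv u\inv$ and $u\inv a\inv au$ and so sidesteps that step.
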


\begin{proof}
  Let $P_a = \alpha(a)$ and $u \in \alpha\inv(P)$. By hypothesis, $PP_aP = P$. Thus $P_aP$ is idempotent.  Since $\alpha(au)= P_aP$, we get $\alpha((au)^n) = P_aP$ and $\alpha(u(au)^n) =PP_aP = P$ for all $n \geq 1$. As $(q,r) \in P$, this yields $u(au)^n \in\alauto{q}{r}$ for all $n\geq 1$. Thus, a pumping argument yields $s \in Q$ and $h,i,j \geq 1$ such that $u(au)^h \in \alauto{q}{s}$, $(au)^i \in \alauto{s}{s}$ and $(au)^j \in \alauto{s}{r}$. Since $\alpha((au)^n) = P_aP = \alpha(au)$ for all $n \geq 1$, we get $uau \in \alauto{q}{s}$, $au \in \alauto{s}{s}$ and $au \in \alauto{s}{r}$. The second property yields $t \in Q$ such that $a \in \alauto{s}{t}$ and $u \in \alauto{t}{s}$. Clearly, $s$ and $t$ are strongly connected. Hence, we get $u\inv \in \aclauto{s}{t}$ and $a\inv \in \aclauto{t}{s}$ by definition of \acauto. Altogether, we obtain that $w_1 = uauu\inv a\inv u\inv \in \aclauto{q}{t}$ and $w_2  = u\inv a\inv au \in \aclauto{s}{r}$. Since $w_i \xrightarrow{*} \veps$ for $i \in \{1,2\}$, we get the \veps-transitions $(q,\veps,t),(s,\veps,r) \in \ecdel$. Together with $a\inv \in \aclauto{t}{s}$, this yields $a\inv \in \aelauto{q}{r}$.
\end{proof}

We prove that $\bigcap_{j \leq k} L(\econs{\As_j})\neq\emptyset$. Since $\{L(\As_j) \mid j \leq k\}$ is not \grp-coverable, we get $S\! \in\! \gropti{\alpha}$ such that for all $j \leq k$, there is $P_j \in S$ such that $P_j\cap (I_j \times F_j) \neq \emptyset$. Theorem~\ref{thm:ash} yields $u \in \tilas$ such that $S = \bigcup_{w \in L_u} \gamma_\alpha(w)$. We use Lemma~\ref{lem:weakinv} to show that $u \in L(\econs{\As_j})$ for each $j \leq k$, completing the proof.

As $P_j \in S$, we get $w \in L_u$ such that $P_j \in \gamma_\alpha(w)$. We show that $w \in L(\econs{\As_j})$. As $w \xrightarrow{*} u$ (by definition of $L_u$), this yields $u \in L(\econs{\As_j})$ by Fact~\ref{fct:jump}. Let \mbox{$(q,r) \in P_j\cap (I_j \times F_j)$} and $b_1,\dots,b_n \in \tila$ such that $w = b_1 \cdots b_n$. As $(q,r) \in P_j$ and $P_j \in \gamma_\alpha(w) \subseteq M$, we get $T_1,\dots,T_n \in M = \popair$ and $q_0,\dots,q_n \in Q$ such that $q_0 = q$, $q_n = r$, $(q_{i-1},q_i) \in T_i$ and $T_i \in \gamma_\alpha(b_i)$ for $i \leq n$. We show that $b_i \in \aelauto{q_{i-1}}{q_i}$ for all $i \leq n$. This yields $w = b_1 \cdots b_n \in \aelauto{q}{r}$, \emph{i.e.} $w \in L(\econs{\As_j})$ as desired since $(q,r) \in I_j \times F_j$. Let $i \leq n$. If $b \in A$, then $\gamma_\alpha(b_i) = \{\alpha(b_i)\}$. Hence, $T_i = \alpha(b_i)$ and since $(q_{i-1},q_i) \in T_i$, we get $b_i \in \alauto{q_{i-1}}{q_i}$ by definition of $\alpha$. Otherwise, if $b_i = a\inv \in A\inv$, the fact that $T_i \in \gamma_\alpha(b_i)$ implies that $T_i \in \alpha(A^*)$ is a weak inverse of $\alpha(a)$. Hence, since $(q_{i-1},q_i) \in T_i$, Lemma~\ref{lem:weakinv} yields $b_i = a\inv \in \aelauto{q_{i-1}}{q_i}$, completing the proof.

\section{Covering for alphabet modulo testable languages}
\label{sec:abe}
We consider the alphabet modulo testable languages. We first prove formally that these are the languages that can be recognized by a commutative group (this will be useful later). We then prove that \abg-covering is decidable in Theorem~\ref{thm:abesep} below. Finally, we prove that \abg-separation and \abg-covering are \conptime-complete.

\subsection{Algebraic characterization of \abg}

For every number $d \geq 1$, we associate an equivalence $\sim_d$ over $A^*$ and use it to characterize the languages in \abg. Let $d \geq 1$ and $w,w' \in A^*$, we write $w \sim_d w'$ if and only if $|w|_a \equiv |w'|_a \bmod d$ for every $a \in A$. It is immediate from the definition that $\sim_d$ is an equivalent of finite index.

\begin{lemma} \label{lem:amteq}
  Let $L \subseteq A^*$. We have $L \in \abg$ if and only if there exists $d \geq 1$ such that $L$ is a union of $\sim_d$-classes.
\end{lemma}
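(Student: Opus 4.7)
The plan is to prove both directions directly using the closure properties of \abg and the key structural observation that $\sim_d$ refines $\sim_{d'}$ whenever $d' \mid d$.

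For the ($\Rightarrow$) direction, I would proceed by structural induction on how $L$ is built in \abg. The base languages $L^a_{q,r}$ are unions of $\sim_q$-classes by definition, since membership in $L^a_{q,r}$ only depends on $|w|_a \bmod q$. To obtain a single modulus that works for $L$, I would take $d$ to be a common multiple of all moduli $q$ appearing in the basic languages used in the construction of $L$. A key sub-observation is that whenever $q \mid d$, every union of $\sim_q$-classes is also a union of $\sim_d$-classes (because $w \sim_d w'$ implies $w \sim_q w'$). Once a common $d$ has been fixed, the family of languages that are unions of $\sim_d$-classes is trivially closed under union and intersection, so the induction through the \abg-building operators goes through.

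For the ($\Leftarrow$) direction, fix $d \geq 1$ and suppose $L$ is a union of $\sim_d$-classes. Each $\sim_d$-class $C$ is determined by a tuple $(r_a)_{a \in A}$ with $0 \leq r_a < d$, namely
\[
  C = \bigcap_{a \in A} L^a_{d,r_a}.
\]
Thus each $\sim_d$-class lies in \abg as a finite intersection of basic languages. Since the index of $\sim_d$ is at most $d^{|A|}$, the language $L$ is a \emph{finite} union of such classes, and closure of \abg under finite union gives $L \in \abg$.

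There is no substantial obstacle here: the statement is a routine algebraic characterization, and the only point that requires a moment of care is the choice of a uniform $d$ in the forward direction, which is handled by taking a common multiple of the moduli involved and invoking the refinement property of the equivalences $\sim_d$.
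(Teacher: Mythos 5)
Your proof is correct and follows essentially the same route as the paper: for the forward direction, fix $d$ as a common multiple of all moduli appearing in the \abg-expression for $L$, observe that $\sim_d$ refines each $\sim_q$, and propagate through unions and intersections; for the converse, write each $\sim_d$-class as $\bigcap_{a\in A} L^a_{d,r_a}$ and use finite index plus closure under finite union.
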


\begin{proof}
  Assume first that $L \in \abg$: $L$ is built from finitely many languages $L^a_{q,r}$ (for $a \in A$ and $q,r \in \nat$ such that $r < q$) using only unions and intersections. Let $d$ be the least common multiplier of all numbers $q \geq 1$ used in these languages. We show that $L$ is a union of $\sim_d$-classes. Let $w,w' \in A^*$ such that $w \sim_d w'$. We prove that $w \in L \Leftrightarrow w' \in L$. Clearly, it suffices to show that each language $L^a_{q,r}$ used to define $L$ satisfies $w \in L^a_{q,r} \Leftrightarrow w' \in L^a_{q,r}$. Since $d$ is a multiple of $q$, the hypothesis that $w \sim_d w'$ yields $|w|_a \equiv |w'|_a \bmod q$. Hence, since $L^a_{q,r} = \{w \in A^* \mid |w|_a \equiv r \bmod q\}$ by definition, we have $w \in L^a_{q,r} \Leftrightarrow w' \in L^a_{q,r}$ as desired.

  Assume now that $L$ is a union of $\sim_d$-classes for $d \geq 1$. We show that $L \in \abg$. Since $\sim_d$ has finite index, it suffices to show that all $\sim_d$-classes belongs to \abg. Let $w \in A^*$ and consider its $\sim_d$-class. For every $a \in A$, let $r_a < d$ by the remainder of the Euclidean division of $|w|_a$ by $d$. By definition, for every $w ' \in A^*$, we have $w' \sim_d w$ if and only if $|w'|_a \equiv r_a \bmod d$ for every $a \in A$. It follows that the $\sim_d$-class of $w$ is $\bigcap_{a \in A} L^a_{d,r_a}$ which belongs to \abg by definition.
\end{proof}

We now prove the algebraic characterization of \abg.

\begin{restatable}{lemma}{amtcar} \label{lem:amtcar}
  The class \abg consists of all languages that are recognized by a morphism into a finite commutative group.
\end{restatable}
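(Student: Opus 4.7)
The plan is to prove both inclusions by leveraging Lemma~\ref{lem:amteq}, which already characterizes \abg as the class of languages that are unions of $\sim_d$-classes for some $d \geq 1$. So the task reduces to showing that ``recognizable by a finite commutative group'' is equivalent to ``saturated by some $\sim_d$''.

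For the forward direction, assume $L \in \abg$. Lemma~\ref{lem:amteq} yields some $d \geq 1$ such that $L$ is a union of $\sim_d$-classes. I would then exhibit an explicit finite commutative group recognizing $L$: take $G = (\integ/d\integ)^A$ with componentwise addition, and define the morphism $\alpha: A^* \to G$ by $\alpha(w) = (|w|_a \bmod d)_{a \in A}$. This is a morphism (the map $w \mapsto (|w|_a)_{a\in A}$ is the Parikh map, which is a morphism into the additive abelian group $\integ^A$, and composing with reduction modulo $d$ preserves this). By construction, $\alpha(w) = \alpha(w')$ if and only if $w \sim_d w'$, so $L$ is a union of $\alpha$-classes and therefore recognized by~$\alpha$.

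For the converse, suppose $L$ is recognized by a morphism $\alpha: A^* \to G$ with $G$ a finite commutative group. Let $d$ be the exponent of $G$ (or, more simply, the order of $G$), so that $g^d = 1_G$ for every $g \in G$. By Lemma~\ref{lem:amteq}, it suffices to show $L$ is a union of $\sim_d$-classes, i.e.\ that $w \sim_d w'$ implies $\alpha(w) = \alpha(w')$. Since $G$ is commutative, I can rearrange the product $\alpha(w) = \alpha(a_1) \cdots \alpha(a_n)$ to group equal letters, obtaining $\alpha(w) = \prod_{a \in A} \alpha(a)^{|w|_a}$. If $|w|_a \equiv |w'|_a \bmod d$ for every $a \in A$, then $\alpha(a)^{|w|_a} = \alpha(a)^{|w'|_a}$ since $\alpha(a)^d = 1_G$, hence $\alpha(w) = \alpha(w')$, as required.

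Neither direction presents a real obstacle: both are essentially bookkeeping. The only mild subtlety is invoking commutativity to rewrite $\alpha(w)$ as an ordered product over the alphabet, which is what makes the argument work (and would fail for arbitrary groups). The rest is immediate from Lemma~\ref{lem:amteq} and the standard fact that finite groups have finite exponent.
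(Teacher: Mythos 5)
Your proof is correct. For the converse direction you follow the same route as the paper: take $d$ with $g^d = 1_G$ for all $g \in G$, use commutativity to rearrange $\alpha(w)$ as a product over the alphabet, and conclude via Lemma~\ref{lem:amteq}. For the forward direction you take a genuinely different (and arguably tidier) route: you first invoke Lemma~\ref{lem:amteq} to reduce $L$ to a union of $\sim_d$-classes, then exhibit a single explicit commutative group $(\integ/d\integ)^A$ with morphism $w \mapsto (|w|_a \bmod d)_{a\in A}$ whose kernel classes are precisely the $\sim_d$-classes. The paper instead argues structurally from the definition of \abg: it shows each generator $L^a_{q,r}$ is recognized by the cyclic group $\integ/q\integ$, and that recognition by a finite commutative group is preserved under union and intersection via direct products, never touching Lemma~\ref{lem:amteq} in this direction. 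Both arguments are short; yours has the aesthetic advantage of treating Lemma~\ref{lem:amteq} as the pivot for both inclusions, so the whole lemma becomes the single equivalence ``recognized by a finite commutative group $\iff$ saturated by some $\sim_d$,'' while the paper's forward argument is a more generic closure-under-operations style proof that would transfer to other classes defined by generators plus Boolean structure.
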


\begin{proof}
  First consider $L \in \abg$. We show that $L$ is recognized by a morphism into a finite commutative group. By definition of \abg, it suffices to prove that this property is true for all basic languages $L^a_{q,r}$ and that it is preserved by union and intersection. by definition $L^a_{q,r} = \{w \in A^* \mid |w|_a \equiv r \bmod q\}$ for $a \in A$ and $q,r \in \nat$ such that $r < q$. It is recognized by the morphism $\alpha: A^* \to {\integ}/q{\integ}$ (where ${\integ}/q{\integ} = \{0,\dots,q-1\}$ is the standard cyclic group) defined by $\alpha(a) = 1$ and $\alpha(b) = 0$ for $b \in A \setminus \{a\}$. We have $L^a_{q,r} = \alpha\inv(r)$. Finally, if $L_1,L_2 \subseteq A^*$ are such that $L_i$ is recognized by a morphism $\alpha_i: A^* \to G_i$ into a finite commutative group for $i \in \{1,2\}$, then $L_1 \cup L_2$ and $L_1 \cap L_2$ are recognized by the natural morphism $\alpha: A^* \to G_1 \times G_2$ (where $G_1 \times G_2$ is equipped with the componentwise multiplication).

  Assume now that $L$ is recognized by a morphism $\alpha\!: \!A^*\! \to\! G$ into a finite commutative group $G$. We show that $L\in\abg$. Since $G$ is a finite group, it is standard that there exists a number $d \geq 1$ such that $g^d = 1_G$ for every $g \in G$. We show that for every $u,v \in A^*$, it $u \sim_d v$, then $\alpha(u) = \alpha(v)$. It will follows that every language recognized by $\alpha$ is a union of $\sim_d$-classes and therefore belongs to \abg by Lemma~\ref{lem:amteq}. Recall that $A = \{a_1,\dots,a_n\}$. As $G$ is commutative, reorganizing the letters in $u,v$ does not change their image under $\alpha$. Thus,
  \[
    \alpha(u) = \alpha(a_1^{|u|_{a_1}} \cdots a_n^{|u|_{a_n}}) \text{ and } \alpha(v) = \alpha(a_1^{|v|_{a_1}} \cdots a_n^{|v|_{a_n}}).
  \]
  If $u \sim_d v$, then $|u|_{a_i} \equiv |v|_{a_i} \bmod d$ for every $i \leq n$. We get $r_i < d$ and $h_i,k_i \in \nat$ such that  $|u|_{a_i} = r_i + h_i \times d$ and $|v|_{a_i} = r_i + k_i \times d$. Therefore, since $g^d = 1_G$ for all $g \in G$, we obtain that  $\alpha(a_i^{\smash{|u|_{a_i}}}) = \alpha(a_i^{\smash{|v|_{a_i}}}) = \alpha(a_i^{\smash{r_i}})$. Altogether, we get $\alpha(u) = \alpha(v) = \alpha(a_1^{r_1} \cdots a_n^{r_n})$, concluding the proof.
\end{proof}

\subsection{Covering for \abg}
We prove that covering is decidable for \abg as well.  Let us point out that this can be obtained from an algebraic theorem of Delgado~\cite{abelianp}. Yet, this approach is indirect: Delgado's results are purely algebraic and do not mention separation. Formulating them would require a lot of groundwork. We use a direct approach based on standard arithmetical and automata theoretic arguments. As for \grp, we present a theorem characterizing the finite sets of regular languages which are \abg-coverable. We reuse the construction $\As \mapsto \acauto$ of Section~\ref{sec:prelims}. We start with terminology that we need to formulate the result.

Let $n = |A|$. Consider an arbitrary linear order $A$ and let $A = \{a_1,\dots, a_n\}$. We define a map $\zeta: \tilas \to \integ^n$ (where $\integ$ is the set of integers). Given, $w \in \tilas$, we define,
\[
  \zeta(w) = (|w|_{\smash{a_1}} - |w|_{\smash{a_1\inv}},\dots,|w|_{\smash{a_n}} - |w|_{\smash{a_n\inv}}) \in \integ^n.
\]
For a language $L \subseteq \tilas$ over \tila, we shall consider the direct image $\zeta(L) = \{\zeta(w) \mid w \in L\} \subseteq \integ^n$. We may now present the characterization theorem.

\begin{restatable}{theorem}{abesep} \label{thm:abesep}
  Let $k \geq 1$ and $k$ \nfas $\As_1,\dots,\As_k$. The following conditions are equivalent:
  \begin{enumerate}
    \item The set $\{L(\As_1),\dots,L(\As_k)\}$ is \abg-coverable.
    \item We have $\bigcap_{i \leq k} \zeta(L(\acons{\As_i})) = \emptyset$.
  \end{enumerate}
\end{restatable}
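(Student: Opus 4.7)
My plan is to prove the two implications separately. For $1) \Rightarrow 2)$, I would argue the contrapositive along the same lines as in the \grp case, but exploiting commutativity of the recognizing group. Suppose $v \in \bigcap_{i \leq k} \zeta(L(\acons{\As_i}))$ and let \Kb be an arbitrary \abg-cover of $A^*$. By Lemma~\ref{lem:amtcar}, each $K \in \Kb$ is recognized by a morphism into a finite commutative group; taking a product yields a single morphism $\alpha: A^* \to G$ into a finite commutative group $G$ recognizing every $K \in \Kb$. For each $i$, pick $u_i \in L(\acons{\As_i})$ with $\zeta(u_i) = v$, and invoke Lemma~\ref{lem:trans} to get $u_i' \in L(\As_i)$ with $\alpha(u_i') = \alpha(u_i)$. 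Since $G$ is commutative and $\alpha(a_j\inv) = \alpha(a_j)\inv$, the image $\alpha(u_i) = \prod_{j=1}^{n} \alpha(a_j)^{\zeta(u_i)_j}$ depends only on $\zeta(u_i) = v$, so $\alpha(u_1') = \cdots = \alpha(u_k')$. Any $K \in \Kb$ containing $u_1'$ therefore contains every $u_i'$, meeting each $L(\As_i)$ and proving non-coverability.

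For $2) \Rightarrow 1)$, I would reformulate \abg-coverability via Lemma~\ref{lem:amteq}: the set $\{L(\As_1),\dots,L(\As_k)\}$ is \abg-coverable iff there exists $d \geq 1$ such that every $\sim_d$-class is disjoint from some $L(\As_i)$. Introducing the morphism $\alpha_d: A^* \to (\integ/d\integ)^n$ with $\alpha_d(a_j) = e_j$, whose fibres are precisely the $\sim_d$-classes, this condition becomes $\bigcap_i \alpha_d(L(\As_i)) = \emptyset$. Applying Lemma~\ref{lem:trans} to $\alpha_d$ and writing $\pi_d: \integ^n \to (\integ/d\integ)^n$ for componentwise reduction modulo $d$, both inclusions of the equality $\alpha_d(L(\As_i)) = \pi_d(\zeta(L(\acons{\As_i})))$ follow: ``$\supseteq$'' from the lemma, ``$\subseteq$'' from $L(\As_i) \subseteq L(\acons{\As_i})$. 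Writing $S_i := \zeta(L(\acons{\As_i}))$, the implication thus reduces to the arithmetic claim that $\bigcap_i S_i = \emptyset$ in $\integ^n$ forces $\bigcap_i \pi_d(S_i) = \emptyset$ for some $d$.

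The core step is a structural description of each $S_i$ as a finite union of cosets $b_{ij} + H_{ij}$ of subgroups $H_{ij} \leq \integ^n$. I would prove this by decomposing an accepting run in $\acons{\As_i}$ into a simple path (of which there are finitely many) together with loops inserted at intermediate states; each loop is a cycle within a strongly connected component of $\As_i$, and since $\acons{\As_i}$ contains inverse-letter reverse edges throughout every SCC, the reverse traversal of each cycle is also available in $\acons{\As_i}$ and contributes the opposite value under $\zeta$. The loop contributions therefore range over a subgroup of $\integ^n$ rather than merely an $\nat$-generated submonoid. With $S_i = \bigcup_j (b_{ij} + H_{ij})$, distributivity reduces $\bigcap_i S_i = \emptyset$ to the assertion that each selection $\bigcap_i (b_{i, j_i} + H_{i, j_i})$ is empty; each such empty coset-intersection is equivalent to a tuple of $(\integ^n)^k$ lying outside a specific subgroup $K$. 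For each such outsider tuple, the structure theorem for finitely generated abelian groups yields a modulus $d$ detecting it in $(\integ^n)^k / K$ (take $d$ large for the free part and $d$ equal to the exponent of the torsion summand otherwise); the LCM of these finitely many $d$'s then gives a uniform modulus with $\bigcap_i \pi_d(S_i) = \emptyset$.

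The main obstacle is the structural lemma: proving $\zeta(L(\acons{\As_i}))$ is a finite union of cosets of subgroups rather than a generic semi-linear set. This requires careful path-by-path bookkeeping in $\acons{\As_i}$ and explicit use of strong connectivity to ensure every inserted cycle contributes symmetrically under $\zeta$. Once the structure is in hand, the passage from $\bigcap_i S_i = \emptyset$ to the existence of an appropriate $d$ is a routine consequence of the structure theorem for finitely generated abelian groups.
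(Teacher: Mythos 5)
Your proposal is correct, and its ingredients match the paper's: the two load-bearing facts are (a) the structural lemma that $\zeta(L(\acauto))$ is a finite union of cosets of subgroups of $\integ^n$, which is exactly the paper's Lemma~\ref{lem:autozeta} once one notices that a \integ-linear set $\Ls(\overbar{v},V)$ \emph{is} the coset $\overbar{v}+\langle V\rangle$; and (b) a structure-theorem argument (Smith normal form / bases for subgroups of $\integ^n$) to extract a separating modulus~$d$ from emptiness of an intersection, which is the paper's Proposition~\ref{prop:fab} (via Theorem~\ref{thm:abg:fab}). Your $1)\Rightarrow 2)$ is word-for-word the paper's argument. The only real difference is how $2)\Rightarrow 1)$ is plumbed. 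You reformulate coverability via $\sim_d$-classes, reduce to ``$\bigcap_i S_i=\emptyset$ forces $\bigcap_i \pi_d(S_i)=\emptyset$ for some $d$,'' distribute the intersection over the coset decompositions, obtain one modulus per empty $k$-wise coset intersection, and take the lcm. The paper instead argues the contrapositive: from non-coverability it extracts, for each $d$, a common residue (Lemma~\ref{lem:comp}), then packages the $k$ sets into a single \integ-semilinear $S\subseteq\integ^{kn}$ using a shift $\overbar{x}^k$ and invokes Proposition~\ref{prop:fab} once to conclude $0\in S$. Your route is more explicit about the coset combinatorics and avoids the auxiliary shift construction; the paper's route factors the number theory into a clean standalone proposition applied a single time. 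Both are sound, and the per-selection modulus plus lcm step you sketch is exactly the \integ-linear case of Proposition~\ref{prop:fab} (note that divisibility $d'\mid d$ preserves ``$\pi_{d'}(b)\notin\pi_{d'}(K)$'' via $d\integ^{nk}\subseteq d'\integ^{nk}$, so the lcm does work). The one place I would tighten your sketch is the structural lemma itself: runs in \acauto are not simple paths with loops at states in the naive sense, and the paper handles this via bounded $P$-witnesses and bounded cycles together with a pumping argument, crucially using that \acauto and \As have the same strongly connected components so that every cycle in \acauto is reversible; you correctly flag this as the main technical obstacle, and your high-level reason (inverse-letter reverse edges throughout every SCC) is the right one.
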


We first explain why Theorem~\ref{thm:abesep} implies the decidability of \abg-covering. This follows from standard results and the decidability of Presburger arithmetic. Let us present a sketch.

The definition of the map $\zeta: \tilas \to \integ^n$ is a variation on a standard notion. Given a word $w \in \tilas$, its \emph{Parikh image} (also called commutative image) is defined as the following vector,
\[
  \pi(w) = (|w|_{\smash{a_1}},\dots,|w|_{\smash{a_n}},|w|_{\smash{a_1\inv}},\dots,|w|_{\smash{a_n\inv}}) \in \nat^{2n}.
\]
Clearly, $\pi(w)$ determines $\zeta(w)$ and for every $L \subseteq \tilas$, $\pi(L) \subseteq \nat^{2n}$ determines $\zeta(L) \subseteq \integ^n$. Consider $k$ \nfas $\As_1,\dots,\As_k$. We know that $\acons{\As_i}$ can be computed from $\As_i$ in polynomial time for every $i \leq k$. Moreover, it is known~\cite{prescomput} that an \emph{existential} Presburger formula $\varphi_i$ describing the set $\pi(L(\acons{\As_i})) \subseteq \nat^{2n}$ can be computed from $\tilde{\As_i}$ in polynomial time. It is then straightforward to combine the formulas $\varphi_i$ into a single \emph{existential} Presburger sentence which is equivalent to $\bigcap_{i \leq k} \zeta(L(\acons{\As_i})) \neq \emptyset$. Finally, it is known~\cite{presNP1} that the existential fragment of Presburger arithmetic can be decided in \nptime. Hence, deciding whether $\bigcap_{i \leq k} \zeta(L(\acons{\As_i})) \neq \emptyset$ can be achieved in \nptime. It then follows from Theorem~\ref{thm:abesep} that \abg-covering (and therefore \abg-separation as well) can be decided in \conptime. It turns out that this complexity upper bound is optimal: \abg-covering and \abg-separation are both \conptime-complete (we present a simple proof for the lower bound using a reduction from \tsat).

\begin{proof}[Proof of Theorem~\ref{thm:abesep}]
  We fix a number $k \geq 1$ and for every $j \leq k$, we consider an \nfa $\As_j = (Q_j,I_j,F_j,\delta_j)$. The two implications in the theorem are handled independently.

  \medskip
  \noindent
  {\bf Implication $1) \Rightarrow 2)$.} We prove the contrapositive. Consider $\overbar{v} \in \bigcap_{i \leq k} \zeta(L(\acons{\As_i}))$. We show that $\{L(\As_1),\dots,L(\As_k)\}$ is \emph{not} \abg-coverable. Thus, we fix an \abg-cover \Kb of $A^*$ and show that there exists $K \in \Kb$ such that $K \cap L(\As_j) \neq \emptyset$ for every $j \leq k$. Let $\{K_1,\dots,K_\ell\} = \Kb$. For each $i \leq \ell$, since $K_i \in \abg$,  Lemma~\ref{lem:amtcar} yields a morphism $\alpha_i: A^* \to G_i$ into a finite \emph{commutative} group recognizing $K_i$. Clearly, \mbox{$G = G_1 \times \cdots \times G_\ell$} is a commutative group for the componentwise multiplication and each $K \in \Kb$ is recognized by the morphism $\alpha: A^* \to G$ defined by $\alpha(w) = (\alpha_1(w),\dots,\alpha_n(w))$.

  Since $\overbar{v} \in \bigcap_{i \leq k} \zeta(L(\acons{\As_i}))$, we get $w_i \in L(\acons{\As_i})$ such that $\zeta(w_i) = \overbar{v}$ for all $i \leq \ell$. Also, Lemma~\ref{lem:trans} yields $u_i \in L(\As_i)$ such that $\alpha(u_i) = \alpha(w_i)$. As $G$ is \emph{commutative}, the image under $\alpha$ of a word $w \in \tilas$ depends only on $\zeta(w)$. Hence, $\alpha(w_1) = \cdots = \alpha(w_k)$. We get $\alpha(u_1) = \cdots = \alpha(u_k)$. As \Kb is a cover of $A^*$, we get $K \in \Kb$ such that $u_1 \in K$. Since $K$ is recognized by $\alpha$ and $\alpha(u_1)=\cdots = \alpha(u_k)$, this yields $u_1,\dots,u_k \in K$. Thus, $K \cap L(\As_i) \neq \emptyset$ for all $i \leq \ell$ as desired.

  \medskip
  \noindent
  {\bf Implication $2) \Rightarrow 1)$.} We use standard arithmetical tools.  Consider the componentwise addition on $\integ^n$. We abuse notation and write ``$0$'' for the identity element (\emph{i.e.}, the vector whose entries are all equal to zero). For a single vector $\overbar{v} \in \integ^{n}$ and a \emph{finite} set of vectors $V = \{\overbar{v_1},\dots,\overbar{v_\ell}\} \subseteq \integ^{n}$, we write,
  \[
    \Ls(\overbar{v},V) =  \{\overbar{v} + k_1\overbar{v_1} + \cdots + k_\ell\overbar{v_\ell} \mid k_1,\dots,k_\ell \in \integ\} \subseteq \integ^{n}.
  \]
  Following~\cite{DBLP:journals/dmtcs/ChoffrutF10}, we call these sets the \emph{\integ-linear subsets} of~$\integ^{n}$. Likewise, \emph{\integ-semilinear} subsets are finite unions of \integ-linear sets (including~$\emptyset$, which is the empty union). We need two results about these sets. The first one is a variation on Parikh's theorem (which implies that the Parikh images of regular languages are semilinear subsets of $\nat^n$). It is specific to the automata built with $\As\mapsto \acauto$.

  \begin{restatable}{lemma}{autozeta}\label{lem:autozeta}
    Let \As be an \nfa. Then, $\zeta(L(\acauto))$ is a \integ-semilinear subset of $\integ^n$.
  \end{restatable}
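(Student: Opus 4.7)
The plan is to invoke Parikh's theorem on the regular language $L(\acauto) \subseteq \tilas$ and then upgrade the resulting $\nat$-semilinear description to an $\integ$-semilinear one, using a cycle-reversal property specific to $\acauto$. Let $\pi: \tilas \to \nat^{2n}$ be the ordinary Parikh map that counts each of the $2n$ letters of $\tila$, and let $\phi: \nat^{2n} \to \integ^n$ be the linear map sending $(x_1,\dots,x_n,y_1,\dots,y_n)$ to $(x_1-y_1,\dots,x_n-y_n)$, so that $\zeta = \phi\circ\pi$. Parikh's theorem, in its standard path/cycle-decomposition form, writes $\pi(L(\acauto))$ as a finite union of sets $u + \nat w_1 + \cdots + \nat w_m$, with each base $u$ the Parikh image of a simple accepting path in $\acauto$ and each period $w_j$ the Parikh image of a simple cycle $C_j$ rooted on that path. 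Pushing this through $\phi$ expresses $\zeta(L(\acauto))$ as a finite union of $\nat$-linear subsets $\phi(u) + \nat\phi(w_1) + \cdots + \nat\phi(w_m)$ of $\integ^n$.

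The main step is then to show that each period $\phi(w_j)$ can also be negated without leaving the set, which would turn each piece into the $\integ$-linear set $\phi(u) + \integ\phi(w_1) + \cdots + \integ\phi(w_m)$. For this I would establish the following structural observation: every cycle $q_0 \to q_1 \to \cdots \to q_k = q_0$ in $\acauto$ has all its states in a single SCC of $\As$. The key remark is that any $\acauto$-edge from $p$ to $p'$ yields $\delta$-reachability of $p'$ from $p$ in $\As$ (either the edge already lies in $\delta$, or by definition of $\acdel$ the endpoints are strongly connected in $\As$); applying this to both arcs of the cycle shows that every $q_i$ shares an SCC of $\As$ with $q_0$. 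Once this is known, the definition of $\acdel$ directly provides, for every transition $(q_{i-1},x_i,q_i)$ of the cycle, an inverse transition $(q_i,x_i\inv,q_{i-1}) \in \acdel$ (if $x_i = a \in A$ this is immediate from strong connectedness; if $x_i = a\inv$, the inverse transition is the original $\delta$-transition $(q_i,a,q_{i-1}) \in \delta \subseteq \acdel$). Hence the reversed walk is again a cycle in $\acauto$ at the same root, and its image under $\phi\circ\pi$ is exactly $-\phi(\pi(C_j))$, since swapping the $a$ and $a\inv$ coordinates in $\nat^{2n}$ negates the $a$-th coordinate in $\integ^n$.

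Combining these ingredients, the reversed cycle can be inserted at its root state along the simple path contributing $u$, showing that $-\phi(w_j)$ is also achievable as a period of the same linear piece; hence each piece coincides with the $\integ$-linear set above, and $\zeta(L(\acauto))$ is a finite union of such sets. The only delicate point I expect is the structural claim that every cycle of $\acauto$ lies within a single SCC of $\As$ together with the corresponding reversibility of $\acauto$-transitions; everything else is a direct application of Parikh's theorem and the definition of $\acdel$.
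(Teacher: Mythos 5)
Your argument takes essentially the same route as the paper: both hinge on the observation that every cycle of $\acauto$ stays within a single SCC of $\As$ (so the reversed cycle is again a cycle of $\acauto$ with negated $\zeta$-image), which makes the period set of each Parikh linear piece closed under negation and upgrades the $\nat$-linear description to an $\integ$-semilinear one. The paper simply inlines the Parikh-style decomposition rather than invoking it as a black box—indexing the linear pieces by the set $P$ of visited states, with short $P$-witnesses as bases and short cycles anchored at $P$-states as periods, which is the precise form of the ``simple path plus cycles rooted on it'' picture you sketch—but the key cycle-reversal insight and its justification are identical.
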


  \begin{proof}
    The proof is based on standard ideas which are typically used to prove the automata variant of Parikh's theorem. However, let us point out that we do require a specific property of the automaton \acauto at some point (the lemma is not true for an arbitrary \nfa over the extended alphabet \tila). For all $q \in Q$, we associate a finite set $V_q \subseteq \integ^n$. We define,
    \[
      V_q = \{\zeta(w) \mid w \in \aclauto{q}{q} \text{ and } |w| \leq |Q|\}.
    \]
    Observe that if $w \in \aclauto{q}{q}$ for some $w \in \tilas$, the states encountered on this run are strongly connected. Hence, in that case, we also have $w\inv \in \aclauto{q}{q}$ by definition of \acauto. Moreover, we have $\zeta(w\inv) = - \zeta(w)$ by definition of $\zeta$. Consequently, for every $\overbar{v} \in V_q$, the opposite vector also belongs to $V_q$: we have $-\overbar{v} \in V_q$. This property is where we need the hypothesis that are considering an automata built with the construction $\As \mapsto \acauto$ (it fails for an arbitrary \nfa). For every $P \subseteq Q$, we write $V_P = \bigcup_{q \in  P} V_q$.

    Finally, we associate a second finite set $X_P \subseteq \integ^n$ to every $P \subseteq Q$. Let $w \in \tilas$. We say that $w$ is a $P$-witness if there exist $q\in I$ and $r \in F$ such that there is a run from $q$ to $r$ labeled by $w$ such that $P$ is exactly the set of all states encountered in that run (in particular, we have $w \in L(\acauto)$). We define,
    \[
      X_P = \{\zeta(w) \mid \text{$w$ is a $P$-witness and $|w| \leq |Q|^2$}\}.
    \]
    We now prove the following,
    \[
      \zeta(L(\acauto)) = \bigcup_{P \subseteq Q} \bigcup_{\overbar{v} \in X_P} \Ls(\overbar{v},V_P).
    \]
    This equality concludes the proof: $\zeta(L(\acauto))$ is a \integ-semilinear subset of $\integ^n$, as desired. We start with the right to left inclusion. Let $P \subseteq Q$ and $\overbar{v} \in X_P$. We show that $\Ls(\overbar{v},V_P) \subseteq \zeta(L(\acauto))$.

    Let $\overbar{u} \in \Ls(\overbar{v},V_P)$. By definition, we have $\overbar{v_1},\dots,\overbar{v_\ell} \in V_P$ and $k_1,\dots,k_\ell \in \integ$ such that $\overbar{u} = \overbar{v} + k_1\overbar{v_1} + \cdots + k_\ell\overbar{v_\ell}$. Moreover, recall that by construction, for every $\overbar{v_i}$, the opposite vector $-\overbar{v_i}$ belongs to $V_P$ as well. Therefore, we may assume without loss of generality that $k_1,\dots,k_\ell \in \nat$: they are \emph{positive integers}.  By definition of $V_P$, we know that for every $i \leq \ell$, we have $\overbar{v_i} \in V_{q_i}$ for some $q_i \in P$. Hence, there exists $x_i \in \aclauto{q_i}{q_i}$ such that $\zeta(x_i) = \overbar{v_i}$. Let $y_i = (x_i)^{k_i}$ (this is well-defined since $k_i \in \nat$). Clearly, $\zeta(y_i) = k_i\overbar{v_i}$ and $y_i \in \aclauto{q_i}{q_i}$. Moreover, $\overbar{v} \in X_P$ which yields a $P$-witness $w \in \tilas$ such that $\zeta(w) = \overbar{v}$. Since $q_1,\dots,q_\ell \in P$ and $w$ is a $P$-witness, we have $q\in I$ and $r \in F$ such that there exists a run from $q$ to $r$ labeled by $w$ which encounters all states $q_1,\dots,q_\ell$. Therefore, we have a permutation $\sigma$ of $\{1,\dots,\ell\}$ and $w_0,\dots,w_\ell \in \tilas$ such that \mbox{$w = w_0 \cdots w_\ell$}, $w_0 \in \aclauto{q}{q_{\sigma(1)}}$, $w_i \in \aclauto{q_{\sigma(i)}}{q_{\sigma(i+1)}}$ for every $1 \leq i \leq n-1$ and $w_\ell \in \aclauto{q_{\sigma(\ell)}}{r}$. Consider the word $w' = w_0y_{\sigma(1)}w_1 \cdots y_{\sigma(\ell)}w_\ell$. It is clear from the definitions that $w' \in \aclauto{q}{r}$ which yields \mbox{$w' \in L(\acauto)$} and $\zeta(w') \in \zeta(L(\acauto))$. Moreover, it is immediate that \mbox{$\zeta(w') = \zeta(w) + \zeta(y_1) + \cdots + \zeta(y_\ell)$} which yields \mbox{$\zeta(w') = \overbar{v} + k_1\overbar{v_1} + \cdots + k_\ell\overbar{v_\ell} = \overbar{u}$}. We get $\overbar{u} \in \zeta(L(\acauto))$ as desired.

    We turn to the converse inclusion which is based on pumping arguments. Given a word $w \in L(\acauto)$, we need to prove that $\zeta(w) \in \bigcup_{P \subseteq Q} \bigcup_{\overbar{v} \in X_P} \Ls(\overbar{v},V_P)$. Since $w \in  L(\acauto)$, there exists $q \in I$ and $r \in F$ such that $w \in \aclauto{q}{r}$. We let $P \subseteq Q$ be the set of all states which are encountered in the corresponding run: $w$ is a $P$-witness. We use induction on the length of $w$ to show that there exists $\overbar{v} \in X_P$ such that $\zeta(w) \in\Ls(\overbar{v},V_P)$ (which concludes the argument). There are two cases. First assume that $|w| \leq |Q|^2$. This implies $\zeta(w) \in X_P$ by definition and we have $\zeta(w) \in \Ls(\zeta(w),V_P)$, concluding this case. Assume now that $|w| > |Q|^2$. One may verify with a pumping argument that there exist $x_1,x_2 \in A^*$ and $y \in A^+$ such that $w = x_1yx_2$, the word $w' = x_1x_2$ remains a $P$-witness, $|y| \leq |Q|$ and $y \in \aclauto{q}{q}$ for some $q \in P$.  Since $y \in A^+$ and $w = x_1yx_2$, we have \mbox{$|w'| < |w|$}. Thus, since $w'$ is a $P$-witness, induction yields $\overbar{v} \in X_P$ such that $\zeta(w') \in\Ls(\overbar{v},V_P)$. Moreover, since $|y| \leq |Q|$ and \mbox{$y \in \aclauto{q}{q}$} for some $q \in P$, we have $\zeta(y) \in V_P$ by definition. Thus, \mbox{$\zeta(w') + \zeta(y)\in \Ls(\overbar{v},V_P)$}. Finally, since  $w = x_1yx_2$ and $w' = x_1x_2$, it is clear that $\zeta(w) = \zeta(w') + \zeta(y)$. Altogether, we obtain $\zeta(w) \in \Ls(\overbar{v},V_P)$ which concludes the~proof.
  \end{proof}

  The second result is more general.

  \begin{restatable}{proposition}{fab} \label{prop:fab}
    Let $n \geq 1$ and $S \subseteq \integ^n$ be a \integ-semilinear set. Assume that for all $d \geq 1$, there exists a vector $\overbar{u} \in \integ^n$ such that $d\overbar{u} \in S$. Then, $0 \in S$.
  \end{restatable}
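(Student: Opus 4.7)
The plan is to reduce the statement to an elementary divisibility game in finitely many quotients of $\integ^n$. Writing $S = \bigcup_{j=1}^m \Ls(\overbar{v_j}, V_j)$, let $L_j \subseteq \integ^n$ be the subgroup generated by $V_j$, so that $\Ls(\overbar{v_j},V_j) = \overbar{v_j}+L_j$. For each $j$, define
\[
  D_j = \{d \geq 1 \mid d\integ^n \cap (\overbar{v_j}+L_j) \neq \emptyset\} = \{d \geq 1 \mid \overbar{v_j} \in L_j + d\integ^n\}.
\]
The hypothesis then reads $\bigcup_j D_j = \{d \in \nat \mid d \geq 1\}$, while the goal $0 \in S$ amounts to showing $\overbar{v_j} \in L_j$ (equivalently $0 \in \Ls(\overbar{v_j},V_j)$) for some $j$.

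The first key step is a structural property: each $D_j$ is closed under divisors. Indeed, if $d' \mid d$ then $L_j + d\integ^n \subseteq L_j + d'\integ^n$, so $d \in D_j$ implies $d' \in D_j$; equivalently, each complement $D_j^c$ is closed under taking multiples. The second step is the following dichotomy for each $j$: either $\overbar{v_j} \in L_j$, in which case $0 \in S$ and we are finished, or $D_j^c$ is non-empty. The non-trivial direction amounts to the fact that in the finitely generated abelian quotient $G_j = \integ^n/L_j$ one has $\bigcap_{d \geq 1} dG_j = \{0\}$; by the structure theorem $G_j \cong \integ^r \oplus T$ with $T$ finite, the free part is killed by taking $d$ larger than any coordinate, and the torsion part is killed by taking $d$ equal to the exponent of $T$.

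To conclude, I argue by contradiction: if $0 \notin S$, the dichotomy yields some $m_j \in D_j^c$ for every $j$. Setting $M = m_1 \cdots m_m$, closure of $D_j^c$ under multiples gives $M \in D_j^c$ for each $j$ (since $m_j \mid M$), so $M \notin \bigcup_j D_j$, contradicting the hypothesis. The delicate point is that a naive pigeonhole asserting that ``one linear piece is witnessed by infinitely many $d$'' does not quite suffice, since the infinite witnessing set need not contain the specific multiple of the torsion exponent required to force $\overbar{v_j} \in L_j$; the present argument sidesteps this by simultaneously aggregating one local obstruction per linear piece into the single global integer~$M$.
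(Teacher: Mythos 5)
Your proof is correct, and it takes a genuinely different route from the paper. The paper first reduces to a single $\integ$-linear set $\Ls(\overbar{v},V)$ by a pigeonhole argument on $d = h!$ (factorials being cofinal in the divisibility order, so that if infinitely many $h!$ land in one piece, then every $d$ does), and then applies a Smith-normal-form type theorem for subgroups of $\integ^n$ to derive $\overbar{v}\in\langle V\rangle$ by an explicit coefficient comparison. You instead keep all the pieces $\Ls(\overbar{v_j},V_j)$ in play, recast the hypothesis as $\bigcup_j D_j = \nat_{\geq 1}$ for the divisor-closed sets $D_j$, prove the dichotomy via the conceptual fact that $\bigcap_{d\geq 1} dG_j = \{0\}$ for the finitely generated abelian quotient $G_j = \integ^n/\langle V_j\rangle$, and aggregate one local obstruction $m_j\in D_j^c$ per piece into the single integer $M = m_1\cdots m_m$, which $D_j^c$ being closed under multiples then excludes from every $D_j$. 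Both proofs ultimately invoke the structure theory of finitely generated abelian groups (Smith normal form in the paper, $\integ^r\oplus T$ in yours), but your handling of the finite union is more modular and avoids the factorial trick; the paper's argument is more computational and produces an explicit coefficient relation. One small remark: the ``naive pigeonhole'' you caution against is not quite what the paper does -- using $d = h!$ already builds in the divisor-closure you make explicit -- so the paper's reduction is sound, but your aggregation via $M$ does make this point transparent.
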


  Proposition~\ref{prop:fab} is a corollary of a standard theorem about  bases of subgroups of free Abelian groups (\emph{i.e.}, the groups $\integ^n$). We first introduce terminology that we need to state this theorem. Clearly, $\integ^n$ is a commutative group for addition (called ``free abelian group of rank $n$''). We consider the subgroups of $\integ^n$ (the subsets which are closed under addition and inverses). Additionally, we need the notion of \emph{basis}. Given a subgroup $G$ of $\integ^n$, a basis of $G$ is a finite set of vectors $\{\overbar{v_1},\dots,\overbar{v_m}\} \subseteq G$ which satisfies the two following conditions:

  \begin{enumerate}
    \item $G$ is generated by $\{\overbar{v_1},\dots,\overbar{v_m}\}$. That is, we have \mbox{$G =  \{k_1\overbar{v_1} + \cdots + k_m\overbar{v_m} \mid k_1,\dots,k_m \in \integ\}$}.
    \item For all $k_1,\dots,k_m \in \integ$ such that $k_1\overbar{v_1} + \cdots + k_m\overbar{v_m} = 0$, we have $k_1 = \cdots = k_m = 0$.
  \end{enumerate}

  We now state the following standard theorem (see for example \cite[Theorem~1.6]{hungerford}).

  \begin{theorem}\label{thm:abg:fab}
    Let $G$ be a nontrivial subgroup of $\integ^n$. There exist a basis $\{\overbar{x_1},\dots,\overbar{x_n}\}$ of~$\integ^n$, a number $m \leq n$ and $d_1,\dots,d_m \geq 1$ such that $d_i$ divides $d_{i+1}$ for every $i \leq m-1$ and $\{d_1\overbar{x_1},\dots,d_m\overbar{x_m}\}$ is a basis of $G$.
  \end{theorem}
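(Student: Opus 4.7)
The plan is to prove the theorem by induction on $n$, using a Smith-normal-form style argument that minimizes a coordinate across \emph{all} $\integ$-bases of $\integ^n$ simultaneously. The base case $n=1$ is immediate: every nontrivial subgroup of $\integ$ has the form $d_1\integ$ with $d_1$ its smallest positive element (Euclidean division), so $m=1$ and $\overbar{x_1}=1$ work.

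For the inductive step $n \geq 2$, let $d_1 \geq 1$ be the minimum, over all $\integ$-bases $\Bs = \{\overbar{y_1},\dots,\overbar{y_n}\}$ of $\integ^n$ and all $g \in G \setminus \{0\}$, of the smallest absolute value of a nonzero coordinate of $g$ in $\Bs$. This exists since $G$ is nontrivial. Fix such a witness; after relabeling basis vectors and possibly flipping a sign (both unimodular operations), we may assume the first coordinate of the chosen element $g_0$ is exactly $d_1 > 0$, say $g_0 = d_1\overbar{y_1} + c_2\overbar{y_2} + \cdots + c_n\overbar{y_n}$. I claim $d_1$ divides every $c_i$: writing $c_i = q_i d_1 + r_i$ with $0 \leq r_i < d_1$, the unimodular change of basis $\overbar{y_1} \mapsto \overbar{y_1}' := \overbar{y_1} + q_i\overbar{y_i}$ (others fixed) turns the $i$-th coordinate of $g_0$ into $r_i$, and minimality of $d_1$ forces $r_i = 0$. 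Setting $\overbar{x_1} := \overbar{y_1} + q_2\overbar{y_2} + \cdots + q_n\overbar{y_n}$ gives $g_0 = d_1\overbar{x_1}$, and $\{\overbar{x_1},\overbar{y_2},\dots,\overbar{y_n}\}$ is still a basis of $\integ^n$.

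Now set $N := \integ \overbar{y_2} + \cdots + \integ \overbar{y_n} \cong \integ^{n-1}$. A parallel Euclidean-division argument shows that the first coordinate of every $h \in G$ in the basis $\{\overbar{x_1},\overbar{y_2},\dots,\overbar{y_n}\}$ is a multiple of $d_1$: otherwise, subtracting an appropriate integer multiple of $g_0$ from $h$ would yield an element of $G$ with first coordinate strictly between $0$ and $d_1$, contradicting minimality. Hence $G = \integ g_0 \oplus (G \cap N)$. Apply the induction hypothesis to $G \cap N \subseteq N$: if $G \cap N = \{0\}$ take $m=1$; otherwise obtain a basis $\{\overbar{x_2},\dots,\overbar{x_n}\}$ of $N$, an integer $m-1 \leq n-1$, and divisors $d_2 \mid \cdots \mid d_m$ with $\{d_2\overbar{x_2},\dots,d_m\overbar{x_m}\}$ a basis of $G \cap N$. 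Concatenation produces the required basis of $\integ^n$ and the required basis of $G$.

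The main obstacle, and the reason for imposing minimality globally rather than within a fixed basis, is the divisibility $d_1 \mid d_2$; the higher relations $d_i \mid d_{i+1}$ for $i \geq 2$ come for free from the inductive hypothesis applied to $G \cap N$. Suppose for contradiction $d_2 = q d_1 + r$ with $0 < r < d_1$. Consider the element $d_1\overbar{x_1} + d_2\overbar{x_2} \in G$, and rewrite it as $d_1(\overbar{x_1} + q\overbar{x_2}) + r\overbar{x_2}$ in the basis $\{\overbar{x_1}+q\overbar{x_2},\,\overbar{x_2},\,\overbar{x_3},\dots,\overbar{x_n}\}$ of $\integ^n$. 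Its second coordinate is $r$, with $0 < r < d_1$, contradicting the global minimality that defined $d_1$. This closes the induction and establishes the theorem.
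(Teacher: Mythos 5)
Your proof is correct; it is a careful instance of the classical ``minimize a coordinate over all unimodular bases'' argument for the stacked basis theorem / Smith normal form. Note, though, that the paper does \emph{not} prove Theorem~\ref{thm:abg:fab} — it cites it as a standard result (\cite[Theorem~1.6]{hungerford}) — so there is no argument in the paper to compare yours against. With that caveat, a few remarks. The two places where a naive induction would break are exactly where you put the work: (i) proving $d_1$ divides the other coordinates of the witness $g_0$ and the $\overbar{x_1}$-coordinate of every $h\in G$, both via Euclidean division against the extremal $d_1$; and (ii) getting $d_1 \mid d_2$ at the end. Your device for (ii) — defining $d_1$ as a minimum over \emph{all} bases rather than a fixed one — is the essential idea that makes the closing step a one-line unimodular change of variables $\overbar{x_1} \mapsto \overbar{x_1}+q\overbar{x_2}$ instead of a fresh argument, and you explain this motivation explicitly, which is good. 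Minor points worth being pedantic about in a polished write-up: the minimum exists because the relevant set of nonzero coordinates is a nonempty subset of $\{1,2,\dots\}$; the direct-sum decomposition $G = \integ g_0 \oplus (G\cap N)$ needs the observation that $\integ g_0 \cap N = \{0\}$ (clear since $g_0 = d_1\overbar{x_1}$ has nonzero first coordinate); and when re-indexing the output of the inductive hypothesis on $G\cap N$ you should note that linear independence of $\{d_1\overbar{x_1},\dots,d_m\overbar{x_m}\}$ in $G$ follows from that of $\{\overbar{x_1},\dots,\overbar{x_n}\}$ in $\integ^n$ together with $d_i\geq 1$. None of these is a gap, just material to spell out.
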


  We are now ready to prove Proposition~\ref{prop:fab}.
  \begin{proof}[Proof of Proposition~\ref{prop:fab}]
    Observe first that we may assume without loss of generality that $S$ is a \integ-linear subset of $\integ^n$. Indeed, by definition $S$ is a \emph{finite} union of \integ-linear subsets. Hence, by hypothesis, for every $d \geq 1$, there exists $\overbar{u} \in \integ^n$ and a \integ-linear set $S'$ in this union such that $d\overbar{u}\in S'$. In particular, this is true when $d= h!$ for some $h \geq 1$. Hence, since the union is finite, it contains a fixed \integ-linear set $S'$ such that there exists infinitely many $d$ such that $d = h!$ for some $h \geq 1$ and $d\overbar{u}\in S'$. It then follows that for every $d \geq 1$, there exists $\overbar{u} \in \integ^n$ such that $d\overbar{u}\in S'$. Therefore, we may replace $S$ with~$S'$.

    We assume from now on that $S$ is \integ-linear: we have $\overbar{v} \in \integ^n$ and a finite set $V \subseteq \integ^n$ such that $S = \Ls(\overbar{v},V)$. If $V = \emptyset$ or $V = \{0\}$, we have $\Ls(\overbar{v},V) = \{\overbar{v}\}$. Thus, for every $d \geq 1$, there exists $\overbar{u} \in \integ^n$ such that $\overbar{v} = d\overbar{u}$. In particular, this holds for a number $d$ which is strictly larger than the absolute values of all entries in $\overbar{v}$. Clearly, this implies $\overbar{v} = 0$ and we get $0 \in \Ls(\overbar{v},V)$. We now assume that $V$ contains a non-zero vector.

    Let $G \subseteq \integ^n$ be the subgroup of $\integ^n$ generated by the set $V \subseteq \integ^n$. By hypothesis on $V$, $G$ is nontrivial. Therefore, Theorem~\ref{thm:abg:fab} yields a basis $\{\overbar{x_1},\dots,\overbar{x_n}\}$ of $\integ^n$, $m \leq n$ and $d_1,\dots,d_m \geq 1$ such that $\{d_1\overbar{x_1},\dots,d_m\overbar{x_m}\}$ is a basis of $G$.

    Since $\{\overbar{x_1},\dots,\overbar{x_n}\}$ is a basis of $\integ^n$, we have $h_1,\dots,h_n \in \integ$ such that $\overbar{v} = h_1\overbar{x_1} + \cdots + h_n\overbar{x_n}$. Let $d$ be the least common multiplier of $|h_1|+1,\dots,|h_n|+1,d_1,\dots,d_m \geq 1$. By hypothesis, there exists $\overbar{u} \in \integ^n$ such that $d\overbar{u} \in \Ls(\overbar{v},V)$. Thus, since $G$ is the subgroup generated by~$V$, there exists $\overbar{y} \in G$ such that $d\overbar{u} = \overbar{v} + \overbar{y}$. Since $\{\overbar{x_1},\dots,\overbar{x_n}\}$ is a basis of $\integ^n$, we have $k_1,\dots,k_n \in \integ$ such that $\overbar{u} = k_1\overbar{x_1} + \cdots + k_n\overbar{x_n}$. Moreover, since $\{d_1\overbar{x_1},\dots,d_m\overbar{x_m}\}$ is a basis of $G$, we have $\ell_1,\dots,\ell_m \in \integ$ such that $\overbar{y} = \ell_1 d_1\overbar{x_1} + \cdots + \ell_m d_m\overbar{x_m}$. Altogether, we obtain,
    \[
      \sum_{1 \leq i \leq n} h_i\overbar{x_i} + \sum_{1 \leq j \leq m}  \ell_j d_j\overbar{x_j}= \sum_{1 \leq i \leq n}  dk_i\overbar{x_i}.
    \]
    Since $\{\overbar{x_1},\dots,\overbar{x_n}\}$ is a basis, this implies that for all $i > m$, we have $h_i = dk_i$. By definition $d > |h_i|$ (it is a nonzero multiple of $|h_{i}|+1$). Thus,  $dk_i = h_i$ implies that $k_i = h_i = 0$. Since this holds for every $i > m$, we obtain,
    \[
      \sum_{1 \leq i \leq n} h_i\overbar{x_i} + \sum_{1 \leq j \leq m}  \ell_j d_j\overbar{x_j}= \sum_{1 \leq i \leq m}  dk_i\overbar{x_i}.
    \]
    This yields the following,
    \[
      \overbar{v} + (\ell_1 d_1 - dk_1)\overbar{x_1} + \cdots + (\ell_m d_m - dk_m) \overbar{x_m} = 0.
    \]
    By definition $d$ is a multiple of $d_i$ for every $i \leq m$. Therefore, there exists $\ell'_i \in \integ$ such that $\ell_i d_i - dk_i = \ell'_i d_i$. Thus, we obtain,
    \[
      \overbar{v} + \ell'_1 d_1\overbar{x_1} + \cdots + \ell'_m d_m \overbar{x_m} = 0.
    \]
    Since $\{d_1\overbar{x_1},\dots,d_m\overbar{x_m}\}$ is a basis of $G$ which is the subgroup generated by $V$, we obtain $0\in F(\overbar{v},V)$ as desired.
  \end{proof}

  We may now prove that $2) \Rightarrow 1)$ in Theorem~\ref{thm:abesep}. We consider the contrapositive. Assume that $\{L(\As_1),\cdots,L(\As_k)\}$ is \emph{not} \abg-coverable. We prove that $\bigcap_{j \leq k} \zeta(L(\acons{\As_j})) \neq \emptyset$. First, we use our hypothesis to prove the following lemma.

  \begin{lemma} \label{lem:comp}
    For every $d \geq 1$, there exist $\overbar{x},\overbar{y_1},\dots,\overbar{y_k} \in \integ^n$ such that $\overbar{x} + d\overbar{y_j} \in \zeta(L(\acons{\As_i}))$ for every $j \leq k$.
  \end{lemma}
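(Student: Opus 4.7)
The plan is to build an \abg-cover of $A^*$ by counting letters modulo~$d$, apply the failure of \abg-coverability to extract a single class meeting every $L(\As_j)$, and then rewrite the Parikh images of the witnessing words in the desired form.

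First I would fix $d \geq 1$ and, for each $g = (g_1,\dots,g_n) \in \{0,\dots,d-1\}^n$, introduce the language
\[
  K_g = \{w \in A^* \mid |w|_{a_i} \equiv g_i \bmod d \text{ for all } i \leq n\}.
\]
Each $K_g$ is a $\sim_d$-class and therefore belongs to \abg by Lemma~\ref{lem:amteq}, while the finite family $\Kb = \{K_g \mid g \in \{0,\dots,d-1\}^n\}$ trivially covers $A^*$. Since $\{L(\As_1),\dots,L(\As_k)\}$ is \emph{not} \abg-coverable, there must exist some $g \in \{0,\dots,d-1\}^n$ such that $K_g \cap L(\As_j) \neq \emptyset$ for every $j \leq k$; I would fix such a $g$ together with a witness $w_j \in K_g \cap L(\As_j)$ for each~$j$.

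Next I would set $\overbar{x} = (g_1,\dots,g_n) \in \integ^n$ and, for each $j \leq k$, define
\[
  \overbar{y_j} = \bigl((|w_j|_{a_1} - g_1)/d,\dots,(|w_j|_{a_n} - g_n)/d\bigr) \in \integ^n,
\]
which is integral precisely because $w_j \in K_g$. Since $w_j \in A^*$ contains no letter from $A\inv$, the map $\zeta$ restricted to $A^*$ coincides with the usual Parikh map, so $\zeta(w_j) = (|w_j|_{a_1},\dots,|w_j|_{a_n}) = \overbar{x} + d\overbar{y_j}$. Finally, because the transitions of $\As_j$ are contained in those of $\acons{\As_j}$, every run of $\As_j$ is also a run of $\acons{\As_j}$, hence $L(\As_j) \subseteq L(\acons{\As_j})$ and $\zeta(w_j) \in \zeta(L(\acons{\As_j}))$, which is exactly what the lemma asks for.

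I do not expect any genuine obstacle in this step: Lemma~\ref{lem:comp} is the only place in the proof of $2)\Rightarrow 1)$ where the non-coverability hypothesis is consumed, and it is a direct consequence of Lemma~\ref{lem:amteq} together with the inclusion $L(\As_j) \subseteq L(\acons{\As_j})$. The real work will come afterwards, when the lemma is combined with Lemma~\ref{lem:autozeta} and Proposition~\ref{prop:fab} to produce a common point of $\bigcap_j \zeta(L(\acons{\As_j}))$. The only minor point to be careful about is choosing the canonical lift of $g$ with entries in $\{0,\dots,d-1\}$ so that each coordinate of $\overbar{y_j}$ really lands in $\integ$.
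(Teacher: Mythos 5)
Your proof is correct and takes essentially the same approach as the paper: partition $A^*$ into $\sim_d$-classes, use the non-coverability hypothesis to find a class meeting every $L(\As_j)$, and read off $\overbar{x}$ and $\overbar{y_j}$ from the Parikh images of the witnessing words. The only (immaterial) difference is that you take $\overbar{x}$ to be the canonical representative $g\in\{0,\dots,d-1\}^n$, whereas the paper sets $\overbar{x}=\zeta(w_1)$; both choices work since $\zeta(w_j)\equiv\zeta(w_1)\equiv g \pmod d$ componentwise.
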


  \begin{proof}
    Given $w,w' \in A^*$, we write $w \sim_d w'$ if and only if $|w|_a \equiv |w'|_a \bmod d$ for all $a \in A$. Clearly, $\sim_d$ is an equivalence of finite index on $A^*$. One may verify that each $\sim_d$-class belongs to \abg. Thus, the partition \Kb of $A^*$ into $\sim_d$-classes is an \abg-cover of $A^*$ and since $\{L(\As_1),\cdots,L(\As_k)\}$ is not \abg-coverable, there exists a $\sim_d$-class which intersects $L(\As_j)$ for all $j \leq k$. We get $w_1 \in L(\As_1),\dots,w_k \in L(\As_k)$ such that $w_1 \sim_d \cdots \sim_d w_n$. Let $\overbar{x} = \zeta(w_1) \in \nat^n$. Let $j \leq k$. The fact that $w_j \sim_d w_1$ yields $\overbar{y_j} \in \integ^n$ such that $\zeta(w_j) = \overbar{x} + d \overbar{y_j}$. Since $w_j \in L(\As_j) \subseteq L(\acons{\As_j})$, we have $\zeta(w_j) \in \zeta(L(\acons{\As_j}))$ which completes the proof.
  \end{proof}

  By Lemma~\ref{lem:autozeta}, $\zeta(L(\acons{\As_j})) \subseteq \integ^n$ is \integ-semilinear for each $j \leq k$. We build a \integ-semilinear subset of $\integ^{kn}$. We use vector concatenation: for $i_1,i_2 \geq 1$, $\overbar{x} \in \integ^{i_1}$ and $\overbar{y} \in \integ^{i_2}$, we write $\overbar{x} \cdot \overbar{y} \in \integ^{i_1+i_2}$ for the vector obtained by concatenating $\overbar{x}$ with~$\overbar{y}$. Let $S \subseteq \integ^{kn}$ be the set of all vectors $\overbar{u_1} \cdots \overbar{u_k} + \overbar{x}^k$ such that $\overbar{u_j} \in \zeta(L(\acons{\As_j}))$ for every $j \leq k$ and $\overbar{x} \in \integ^n$.

  Since the sets $\zeta(L(\acons{\As_j})) \subseteq \integ^n$ are \integ-semilinear, one may verify that $S \subseteq \integ^{kn}$ is \integ-semilinear as well. Lemma~\ref{lem:comp} implies that for every $d \geq 1$, there exist $\overbar{x},\overbar{y_1},\dots,\overbar{y_k} \in \integ^n$ such that $\overbar{x} + d\overbar{y_j} \in \zeta(L(\acons{\As_j}))$ for all $j \leq k$. By definition of $S$, this implies $d(\overbar{y_1} \cdots \overbar{y_k}) \in S$. Altogether, it follows that for all $d \geq 1$, there exists $\overbar{y} \in \integ^{kn}$ such that $d\overbar{y} \in S$. Since $S$ is \integ-semilinear, this yields $0 \in S$ by Proposition~\ref{prop:fab}. By definition of $S$, we get $\overbar{x} \in \integ^n$ such that $\overbar{x} \in \zeta(L(\acons{\As_j}))$ for all $j \leq k$. Thus, $\bigcap_{j \leq k} \zeta(L(\acons{\As_j})) \neq \emptyset$ which completes the proof.
\end{proof}

\subsection{Complexity lower bound}

We prove that \abg-covering and \abg-separation are \conptime-complete. As we explained above, the upper bound follows from Theorem~\ref{thm:abesep}. Here, we prove the lower bound: both problems are \conptime-hard. Actually since separation is a special case of covering, it suffices to show that \abg-separation is \conptime-hard.

\begin{remark}
  When considering complexity, it is important to distinguish the case when the alphabet is fixed from the one when it is a parameter of the problem. Here, we consider the latter case: we show that given an alphabet $A$ and two \nfas over $A$, deciding whether the recognized languages are \abg-separable is \conptime-hard. Actually, when the alphabet is fixed, one may show that the problem is in \ptime (roughly, this boils down to disjointedness of Parikh images for \nfas which is known to be in \ptime when the alphabet is fixed~\cite{parikhdecid}).
\end{remark}

We actually show that \emph{non} \abg-separability is \nptime-hard. More precisely, we present a logarithmic space reduction from \textup{3}-satisfiability (\tsat) to this problem. Given a \tsat formula $\varphi$, we explain how to construct two regular languages $L_1,L_2$ and show that they are not \abg-separable if and only if $\varphi$ is satisfiable. We only describe the construction: that \nfas for the regular languages $L_1$ and $L_2$ can be computed from $\varphi$ in logarithmic space is straightforward and left to the reader.

Let $C_1,\dots,C_k$ be the $3$-clauses such that $\varphi = \bigwedge_{i \leq k} C_i$ and let $x_1,\ldots,x_n$ be the propositional variables in $\varphi$. We construct two \emph{finite} languages $L_1$ and $L_2$ over the alphabet \mbox{$A = \{x_1, \ldots ,x_n,\overbar{x_1}, \ldots, \overbar{x_n}\}$}. Intuitively, we code assignments of truth values for the variables $\{x_1,\ldots,x_n\}$ by words in~$A^*$. Given $w \in A^*$, we say that $w$ \emph{is an encoding} if for all $i \leq n$, $w$ contains either the letter $x_i$ or the letter $\overbar{x_i}$, but not both. It is immediate that an assignment of truth values for the variables $\{x_1,\ldots,x_n\}$ can be uniquely defined from any such~encoding.

We let $H_i = \{x_i^p \mid 1 \leq p \leq k\} \cup \{\overbar{x_i}^p \mid 1 \leq p \leq k\}$ for all $i \leq n$. We may now define $L_1 \subseteq A^*$. We let,
\[
  L_1 = H_1H_2 \cdots H_n.
\]
Clearly $L_1$ is finite and all the words in $L_1$ are encodings. We turn to the definition of $L_2$. For every $j \leq k$, we associate a language $T_j$ to the $3$-clause $C_j$. Assume that $C_j = \ell_1 \vee \ell_2 \vee \ell_3$ where $\ell_1,\ell_2,\ell_3 \in \{x_1, \overbar{x_1},\ldots, x_n,\overbar{x_n}\}$ are literals. We define,
\[
  T_j = \{\ell_1,\ell_2,\ell_3\}.
\]
Finally, we define,
\[
  L_2 = T_1 \cdots T_k(\{\veps\} \cup H_1) \cdots (\{\veps\} \cup H_n).
\]
Clearly, $L_2$ is finite as well. Observe that the words in $L_2$ need not be encodings. On the other hand, all encodings within $L_2$ (if any) correspond to an assignment of truth values which satisfies $\{C_1,\dots,C_k\}$.

\smallskip

It remains to show that $L_1,L_2$ are not \abg-separable if and only if the $\varphi$ is satisfiable. We start with the right to left implication. Assume that there exists a truth assignment satisfying $\varphi$. By definition of $L_1$ and $L_2$, one may verify that there exists $w_1 \in L_1$ and $w_2 \in L_2$ which are both encodings of this assignment. Moreover, one may verify that we can choose $w_1$ and $w_2$ so that $|w_1|_a = |w_2|_a$ for every  $a \in A$. This  implies that $\alpha(w_1) = \alpha(w_2)$ for every morphism $\alpha: A^* \to G$ into an commutative group $G$. Hence, in view of Lemma~\ref{lem:amtcar}, every language $K \in \abg$ which contains $w_1$ must contain $w_2$ as well. Since $w_1 \in L_1$ and $w_2 \in L_2$, it follows that $L_1$ and $L_2$ are not \abg-separable.

Conversely, assume that $L_1$ and $L_2$ are not \abg-separable. By definition, $L_1$ and $L_2$ are finite. Thus, there exists $d \in \nat$ such that $|w| < d$ for every $w \in L_1 \cup L_2$. We consider the equivalence $\sim_d$ over $A^*$. By Lemma~\ref{lem:amteq}, every union of $\sim_d$-classes belongs to \abg. Hence, since $L_1$ and $L_2$ are not \abg-separable, there exists a $\sim_d$-class which intersects both $L_1$ and $L_2$. We obtain $w_1 \in L_1$ and $w_2 \in L_2$ such that $w_1 \sim_d w_2$: we have $|w_1|_a \equiv |w'_2|_a \bmod d$  for every $a \in A$. Moreover, since $|w_1| < d$ and $|w_2| < d$ by definition of $d$, this yields $|w_1|_a  = |w_2|_a $ for every $a \in A$. By definition of $L_1$, the word $w_1 \in L_1$ encodes an assignment of truth values. Moreover, since $|w_1|_a  = |w_2|_a$ for every $a \in A$, the word $w_2$ encodes the same assignment of truth values. Finally, since $w_2 \in L_2$, this assignment satisfies $\varphi$ which completes the proof.

\section{Covering for modulo languages}
\label{sec:mod}
In this section, we reduce \md-covering to \grp-covering and \abg-covering for unary alphabets. Then, we show that \md-covering is \conptime-complete, while \md-separation is \nl-complete.

\subsection{\md-covering algorithm}

Getting a ``naive'' direct algorithm for \md-covering is straightforward. Here, we prove that \md-covering reduces to both \grp-covering and \abg-covering. This approach provides much better complexity upper bounds than the naive one.

The reduction is based on a simple construction. It takes a language $L \subseteq A^*$ as input and builds a new one over a \emph{unary alphabet} (\emph{i.e.}, containing a unique letter). We let $U\! =\! \{\$\}$ and $\mu: A^* \to U^*$ be the morphism defined~by~$\mu(a) = \$$ for every $a \in A$. It is standard that if $L\!\subseteq\! A^*$ is recognized by an input \nfa \As, one may compute an \nfa recognizing $\mu(L)$ in logarithmic space (all transitions must be relabeled by ``$\$$'').

\begin{restatable}{theorem}{mdcov} \label{thm:mdcov}
  Let $k \geq 1$ and $L_1,\dots,L_k \subseteq A^*$. The following conditions are equivalent:
  \begin{enumerate}
    \item The set $\{L_1,\dots,L_k\}$ is \md-coverable.
    \item The set $\{\mu(L_1),\dots,\mu(L_k)\}$ is \abg-coverable.
    \item The set $\{\mu(L_1),\dots,\mu(L_k)\}$ is \grp-coverable.
  \end{enumerate}
\end{restatable}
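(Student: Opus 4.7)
The plan is to prove the cycle $1) \Rightarrow 2) \Rightarrow 3) \Rightarrow 1)$, relying on two structural observations about the length morphism $\mu$.

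The first observation is that a language $L \subseteq A^*$ belongs to \md if and only if $L = \mu\inv(L')$ for some \md-language $L' \subseteq U^*$. This is immediate from the definitions: each $L_{q,r}$ over $A$ is the $\mu$-preimage of $\{\$^n \mid n \equiv r \bmod q\}$, and $\mu\inv$ commutes with finite unions; moreover, $\mu$ is surjective, so $\mu(\mu\inv(L')) = L'$ for every $L' \subseteq U^*$. The second observation is that over the unary alphabet $U$, the three classes \md, \abg and \grp coincide: the inclusions $\md \subseteq \abg \subseteq \grp$ are standard, and conversely, if a morphism $\alpha\colon U^* \to G$ into a finite group recognizes $K' \subseteq U^*$ and $d$ is the order of $\alpha(\$)$ in $G$, then $\alpha(\$^n)$ depends only on $n \bmod d$, so $K'$ is a finite union of length-modulo-$d$ classes and belongs to~\md.

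Given these, the implications reduce to routine transfers through $\mu$. For $1) \Rightarrow 2)$, I will take a witnessing \md-cover $\{K_1, \dots, K_\ell\}$ of $A^*$, write each $K_j = \mu\inv(K'_j)$ with $K'_j \in \md \subseteq \abg$ over $U$, and check that $\{K'_1, \dots, K'_\ell\}$ covers $U^*$ (by surjectivity of $\mu$) while disjointness transfers: if $\mu(w) \in K'_j$ for some $w \in L_i$, then $w \in \mu\inv(K'_j) = K_j$, contradicting $K_j \cap L_i = \emptyset$. The step $2) \Rightarrow 3)$ is immediate from $\abg \subseteq \grp$. For $3) \Rightarrow 1)$, a witnessing \grp-cover $\{K'_1, \dots, K'_\ell\}$ of $U^*$ consists of \md-languages over $U$ by the second observation, so pulling each back through $\mu\inv$ yields \md-languages over $A$ by the first observation; coverage of $A^*$ and preservation of disjointness are verified symmetrically.

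No step presents a real obstacle. The only content beyond invoking the two observations is checking that coverage and disjointness survive the $\mu$-correspondence, which works precisely because each $K_j$ at play is saturated under $\mu\inv$.
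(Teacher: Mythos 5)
Your proof is correct and follows essentially the same route as the paper: the cycle $1)\Rightarrow 2)\Rightarrow 3)\Rightarrow 1)$, with $2)\Rightarrow 3)$ trivial, $1)\Rightarrow 2)$ transferring the cover through $\mu$, and $3)\Rightarrow 1)$ exploiting the collapse of \grp to \md over the unary alphabet. The only difference is cosmetic: you isolate the two structural facts (\md over $A$ is exactly $\mu\inv$ of languages over $U$, and $\md=\abg=\grp$ over $U$) as explicit lemmas, whereas the paper inlines them, pushing the cover forward via $\mu(K)$ in $1)\Rightarrow 2)$ and directly verifying $\mu\inv(H)\in\md$ in $3)\Rightarrow 1)$; the underlying arguments are identical.
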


\begin{proof}
  We prove that $1) \Rightarrow 2) \Rightarrow 3) \Rightarrow 1)$. Let us start with $1) \Rightarrow 2)$. Assume that $\{L_1,\dots,L_k\}$ is \md-coverable. We get a \md-cover \Kb of $A^*$ such that for each $K \in \Kb$, there is $i \leq k$ satisfying $K \cap L_i = \emptyset$. Let $\Hb = \{\mu(K) \mid K \in \Kb\}$. Since \Kb is a cover of $A^*$ and $\mu$ is surjective, \Hb must be a cover of $U^*$. One may verify that all $H \in \Hb$ belongs to \md since this is the case for all $K \in \Kb$. Hence, since $\md \subseteq \abg$, we obtain that \Hb is an \abg-cover of $U^*$. It remains to verify for each $H \in \Hb$, there exists $i \leq k$ such that $H \cap \mu(L_i) = \emptyset$. By definition, $H = \mu(K)$ for some $K \in \Kb$. By hypothesis on \Kb, we get $i \leq k$ such that $K \cap L_i = \emptyset$. We show that $H \cap \mu(L_i)=\emptyset$. By contradiction, assume that there exists $u \in H \cap \mu(L_i)$. As $H = \mu(K)$, we get $w \in K$  and $w' \in L_i$ such that $\mu(w) = \mu(w') = u$. By definition of~$\mu$, we have $|w| = |w'| = |u|$. Since $w \in K$ and $K \in \md$, this yields $w' \in K$. Thus, $w' \in K \cap L_i$, a contradiction.

  The implication $2) \Rightarrow 3)$ is trivial as $\abg \subseteq \grp$. It remains to prove $3) \Rightarrow 1)$. Assume that $\{\mu(L_1),\dots,\mu(L_k)\}$ is \grp-coverable. This yields a \grp-cover \Hb of $U^*$ such that for each $H \in \Hb$, there exists $i \leq k$ satisfying $H \cap \mu(L_i) = \emptyset$. We let $\Kb = \{\mu\inv(H) \mid H \in \Hb\}$. By definition of \Hb, one may verify that \Kb is a cover of $A^*$ and that for all \mbox{$K \in \Kb$}, there is $i \leq k$ such that $K \cap L_i = \emptyset$. It remains to show that \Kb is a \md-cover (which implies that $\{L_1,\dots,L_k\}$ is \md-coverable, as desired). Let $H \in \Hb$. We prove that $\mu\inv(H) \in \md$. By definition, we have to exhibit $q \geq 1$ such that for $w,w' \in A^*$, if $|w| \equiv |w'| \bmod q$, then $w\! \in\! \mu\inv(H) \Leftrightarrow w'\! \in\! \mu\inv(H)$ (\emph{i.e.}, $\mu(w) \in H \Leftrightarrow \mu(w') \in H$). Since $H \in \grp$, we get a morphism $\alpha: U^* \to G$ into a finite group $G$ recognizing $H$. It is standard that there is $q \geq 1$ such that $g^q = 1_G$ for all $g \in G$. We now fix $w,w' \in A^*$ such that $|w| \equiv |w'| \bmod q$. This yields $r \geq 0$ and $k,k' \geq 1$ such that $|w| = r + kq$ and $|w'| = r + k'q$. Hence, we have  $\mu(w) = \$^{r + qk}$ and $\mu(w') = \$^{r +qk'}$. By definition of $q$, this yields $\alpha(\mu(w)) = \alpha(\mu(w')) = \alpha(\$^r)$. As $\alpha$ recognizes~$H$, we get $\mu(w) \in H \Leftrightarrow \mu(w') \in H$, as desired.
\end{proof}

Theorem~\ref{thm:mdcov} provides log-space reductions from \md-covering to \abg-covering and from \md-separation to \grp-separation. Hence by Section~\ref{sec:abe}, \md-covering is in \conptime and by Section~\ref{sec:grp}, \md-separation is in~\ptime. In the next two subsections, we show that the \conptime upper bound for covering is tight, while \md-separation is in fact \nl-complete.

\subsection{Complexity of \md-covering}

As we explained above, \md-covering is in \conptime: Theorem~\ref{thm:mdcov} provides a logarithmic space reduction to \abg-covering which is itself in \conptime. Here, we prove that this upper bound is tight. We actually show that \emph{non} \md-coverability is \nptime-hard. More precisely, we present a logarithmic space reduction from \textup{3}-satisfiability (\tsat) to this problem. Given a \tsat formula $\varphi$, we explain how to construct a finite set of regular languages and show that it is not \md-coverable if and only if $\varphi$ is satisfiable. We only describe the construction: that \nfas for the regular languages in the set can be computed from $\varphi$ in logarithmic space is straightforward and left to the reader.

\begin{remark}
  Note that  Theorem~\ref{thm:mdcov} also provides a logarithmic space reduction from \md-covering to \grp-covering. Hence, the lower bound for \md-covering that we prove in this section transfers to \grp-covering, which is therefore \conptime-hard (recall that the upper bound for this problem is \pspace, since it amounts to checking nonemptiness of an intersection of automata).
\end{remark}

Let $C_1,\dots,C_k$ be the $3$-clauses such that $\varphi = \bigwedge_{i \leq k} C_i$ and let $x_1,\ldots,x_n$ be the propositional variables in $\varphi$. Consider the unary alphabet $U = \{\$\}$. We construct a finite set of regular languages over $U$. We encode the assignment of truth values for $x_1,\ldots,x_n$ by single words in $U^*$. Let $p_1,\ldots,p_n \in \nat$ be the first $n$ prime numbers. For each $w \in U^*$, we associate an assignment $val(w) \in \{0,1\}^n$ that encodes a mapping $x_i\mapsto b_i$ giving truth values for the variables $x_1,\ldots,x_n$. We define $val(w)=(b_1,\dots,b_n)$ such that for each $i \leq n$, we let $b_i = 1$ if $|w|$ is a multiple of $p_i$ and $b_i = 0$ otherwise. Note that since $p_1,\dots,p_n$ are primes, each assignment of truth values for $x_1,\ldots,x_n$ is encoded by some word in $U^*$. We now specify the regular languages associated to $\varphi$. For every $i \leq n$, we let,
\[
  \begin{array}{lllll}
    P_i & = & \{w \in U^* \mid |w| \equiv p_i \bmod 0\}, &&\\
    N_i & =& \{w \in U^* \mid |w| \not\equiv p_i \bmod 0\} &= &U^* \setminus P_i.
  \end{array}
\]
Finally, with every $j \leq k$, we associate a language $L_j$ to the $3$-clause $C_j$. By definition, we have $i_1,i_2,i_3 \leq n$ such that $C_j = \ell_{i_1} \wedge \ell_{i_2} \wedge \ell_{i_3}$ where $\ell_{i_k} \in \{x_{i_k}, \neg x_{i_k}\}$. For $k \in \{1,2,3\}$, we let $H_k = P_k$ if $\ell_{i_k} = x_{i_k}$ and  $H_k = N_k$ if $\ell_{i_k} = \neg x_{i_k}$. We then define $L_j = H_1 \cup H_2 \cup H_3$. One may verify from the definition that $L_j \in \md$ and that an \nfa recognizing $L_j$ can be computed from $\varphi$ in logarithmic space. Moreover, the following lemma may also be verified.

\begin{lemma} \label{lem:mdred}
  The language $\bigcap_{j \leq k} L_j$ consists of every word $w \in U^*$ such that the assignment $val(w)$ satisfies $\varphi$.
\end{lemma}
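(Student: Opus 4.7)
The plan is to reduce the lemma to a single pointwise equivalence: for every $j \leq k$, $w \in L_j$ if and only if the assignment $val(w)$ satisfies the clause $C_j$. Once this is established, the lemma follows immediately, because membership in $\bigcap_{j \leq k} L_j$ is equivalent to being in each $L_j$, and the assignment $val(w)$ satisfies $\varphi = \bigwedge_{j \leq k} C_j$ precisely when it satisfies every clause~$C_j$.

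To prove the pointwise equivalence, I would fix $j \leq k$ and write $C_j = \ell_{i_1} \vee \ell_{i_2} \vee \ell_{i_3}$, so that by construction $L_j = H_1 \cup H_2 \cup H_3$ with each $H_k$ associated to the literal $\ell_{i_k}$. The core observation is that $H_k$ is, by design, exactly the set of words $w$ for which $\ell_{i_k}$ is satisfied by $val(w)$. Indeed, if $\ell_{i_k} = x_{i_k}$ then $H_k = P_{i_k}$, and unpacking the definitions of $P_{i_k}$ and $val$, we get $w \in H_k$ iff $p_{i_k}$ divides $|w|$ iff the $i_k$-th coordinate of $val(w)$ equals $1$ iff $val(w)$ satisfies~$\ell_{i_k}$. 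The case $\ell_{i_k} = \neg x_{i_k}$ is symmetric: $H_k = N_{i_k}$ is the complement of $P_{i_k}$, which matches the coordinate being $0$, which matches $val(w)$ satisfying the negated literal. Taking the union over $k \in \{1,2,3\}$ then yields that $w \in L_j$ iff at least one literal of $C_j$ is satisfied by $val(w)$, which is the definition of $val(w)$ satisfying~$C_j$.

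There is no genuine obstacle in this proof: it is a direct unpacking of the definitions of $P_i$, $N_i$, $L_j$, and~$val$. The work was done upstream by choosing the encoding so that the truth of each literal aligns with membership in an appropriate arithmetic progression, and so that the disjunctive structure of a clause aligns with the union defining~$L_j$. The only point requiring a little care is verifying the literal/$H_k$ correspondence in both polarities, but this is a short case split. Note that the primality of the $p_i$ is not needed for this lemma itself; it is what ensures, in the surrounding reduction, that every target assignment in $\{0,1\}^n$ is realized as $val(w)$ for some $w \in U^*$, via the Chinese Remainder Theorem.
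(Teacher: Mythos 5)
Your proof is correct and is precisely the routine verification that the paper leaves to the reader (the paper states Lemma~\ref{lem:mdred} without proof, saying only that it ``may also be verified''): a literal-by-literal unpacking of the definitions, lifted to clauses via the union $L_j = H_1 \cup H_2 \cup H_3$ and to $\varphi$ via the intersection. You also correctly read past the typographical slips in the paper's definitions ($P_i$ should be $\{w \in U^* \mid |w| \equiv 0 \bmod p_i\}$, $C_j$ is a disjunction of literals, and $H_k$ should be $P_{i_k}$ or $N_{i_k}$), and your closing remark that primality is only needed for the surrounding CRT surjectivity argument, not for this lemma, is accurate.
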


Since each assignment of truth values is encoded by some word in $U^*$, Lemma~\ref{lem:mdred} implies that $\varphi$ is satisfiable if and only if $\bigcap_{j \leq k} L_j\neq \emptyset$. Finally, since $L_1,\dots,L_k \in \md$, one may verify that $\bigcap_{j \leq k} L_j \neq \emptyset$ if and only if  $\{L_1,\dots,L_k\}$ is \emph{not} \md-coverable. Altogether, we obtain that $\varphi$ is satisfiable if and only if $\{L_1,\dots,L_k\}$ is \emph{not} \md-coverable: this is indeed a logarithmic space reduction from \tsat to non-coverability for \md.

\subsection{Complexity of \md-separation}

We now prove that \md-separation is in \nl, by an analysis the \grp-separation procedure for \emph{unary alphabets}. This implies that \md-separation is \nl-complete, as \nl is a generic lower bound for separation. Indeed, there exists a straightforward reduction from \nfa emptiness (which is \nl-complete) to \Cs-separation for an arbitrary Boolean algebra \Cs: given an \nfa~\As, $L(\As) = \emptyset$ if and only if $L(\As)$ is \Cs-separable from $A^*$.

Theorem~\ref{thm:mdcov} presents a log-space reduction from \md-separation to \grp-separation for languages over \emph{unary alphabets}. Hence, it suffices to prove that the latter problem is in \nl. Fix a single letter alphabet $A = \{a\}$. We prove that given as input two \nfas $\As_1$ and $\As_2$ over $A$, one may decide in \nl whether $L(\As_1)$ is \emph{not} \grp-separable from $L(\As_2)$. Since $\nl = \conl$ by the Immerman-Szelepcsényi theorem, this implies as desired that \grp-separation is in \nl for languages over unary alphabets. By Theorem~\ref{thm:grpsep}, the two following conditions are equivalent:
\begin{enumerate}
  \item $L(\As_1)$ is not \grp-separable from $L(\As_2)$.
  \item $L(\econs{\As_1}) \cap L(\econs{\As_2}) \neq \emptyset$.
\end{enumerate}
Therefore, we have to prove that the second condition can be decided in \nl. For $j \in \{1,2\}$, we write $\As_j = (Q_j,I_j,F_j,\delta_j)$. By definition, \econs{\As_j} is built from $\As_j$ by adding new transitions labeled by $a\inv$ (this is the construction of $\As_j \mapsto \acons{\As_j}$) and \veps-transitions (this is the construction of $ \acons{\As_j} \mapsto \econs{\As_j}$). It is standard that if we have \econs{\As_1} and \econs{\As_2} in hand, deciding whether $L(\econs{\As_1}) \cap L(\econs{\As_2}) \neq \emptyset$ can be achieved in \nl since this boils down to graph reachability (in the product of $\As_1$ and $\As_2$ whose set of states is $Q_1 \times Q_2$). Therefore, we have to prove that one may decide in \nl whether a given transition belongs to \econs{\delta_1} or \econs{\delta_2}.

This is immediate for the transitions labeled by $a \in A$ as they already belong to $\delta_1$ and~$\delta_2$. Let us now consider the transitions labeled by $a\inv \in A\inv$ which belong to \acons{\delta_1} and \acons{\delta_2}. By definition, for $j = 1,2$, and $q,r \in Q_j$, we have $(r,a\inv,q) \in \acons{\delta_j}$ if and only if $(q,a,r) \in \delta_j$ and $q,r$ are strongly connected. Clearly, this can be checked in \nl since testing whether $q,r$ are strongly connected boils down to graph reachability (which is in \nl). It remains to consider the $\veps$-transitions in \econs{\delta_1} and \econs{\delta_2}. We do so in the following lemma (this is where we use the hypothesis that the alphabet is unary).

\begin{lemma} \label{lem:unarybound}
  Let $j \in \{1,2\}$ and $q,r \in Q_j$, one may decide in \nl whether $(q,\veps,r) \in \econs{\delta_j}$.
\end{lemma}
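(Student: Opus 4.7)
The plan is to reduce the question to reachability in a polynomial-size auxiliary graph, then invoke \nl-completeness of graph reachability. Since the alphabet is unary, reducing $w \in \tilas$ to $\veps$ via the rules $aa\inv \to \veps$ and $a\inv a \to \veps$ amounts to balancing the number of $a$'s and $a\inv$'s, so $L_\veps = \{w \in \tilas : |w|_a = |w|_{a\inv}\}$. Hence $(q,\veps,r) \in \ecdel$ holds iff there is a walk from $q$ to $r$ in $\acons{\As_j}$ whose label has equal numbers of $a$'s and $a\inv$'s; equivalently, weighting $a$-edges by $+1$ and $a\inv$-edges by $-1$, a walk from $q$ to $r$ of total weight~$0$.

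I then show that whenever such a zero-weight walk exists, one exists whose length is at most a polynomial $P(n)$ in $n = |Q_j|$ and whose running weight stays in $[-P(n),P(n)]$. Given this bound, the \nl procedure guesses the walk one transition at a time, starting in configuration $(q,0)$ and accepting at $(r,0)$, carrying only the current state and running weight in logarithmic space. Each transition of $\acons{\As_j}$ can be checked on the fly in \nl: $a$-transitions come from $\delta_j$, and an $a\inv$-transition $(r',a\inv,q')$ belongs to $\acons{\delta_j}$ exactly when $(q',a,r') \in \delta_j$ and $q',r'$ are strongly connected, which is a standard reachability test.

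The polynomial bound exploits the fact that $a\inv$-edges appear only within SCCs of $\As_j$, so the SCC-DAG of $\acons{\As_j}$ coincides with that of $\As_j$. Any walk therefore visits the SCCs in topological order and uses at most $n-1$ inter-SCC $a$-edges, each contributing $+1$ to the total weight. Within a single SCC $C$, the weights realizable by walks from one state to another form a coset of $d_C\integ$, where $d_C \leq n$ is the gcd of the $a$-cycle lengths in $C$, so one can always choose a representative of absolute value at most $n$. Solving a linear congruence over the $d_{C_i}$'s of the visited SCCs then yields within-SCC weights $w_i$ with $|w_i| = O(n)$ whose sum, together with the $k \leq n$ inter-SCC contributions, is~$0$. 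Each $w_i$ can be realized inside $C_i$ by a sub-walk of length $O(n|w_i|) = O(n^2)$, so the whole walk has polynomial length and polynomial-magnitude running weight.

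The main obstacle is precisely the polynomial counter bound. A naive shortest-walk argument only shows that the configurations $(v_i,w_i)$ of a shortest walk are pairwise distinct, which does not by itself bound $|w_i|$. What makes the bound work is the coset structure of realizable weights within each SCC combined with the fact that at most $n$ SCCs are visited: these let us replace any long within-SCC excursion by a short one in the same residue class without changing the sequence of visited SCCs, keeping the counter polynomially bounded throughout.
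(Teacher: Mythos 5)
Your reduction to bounded graph reachability is the paper's strategy too, starting from the same observation that over a unary alphabet $L_\veps = \{w \in \tilas : |w|_a = |w|_{a\inv}\}$. The genuine divergence is in how the polynomial counter bound is established. The paper uses a direct pigeonhole on the walk itself: if the counter reaches some $m > |Q_j|^2$, pick one up-crossing index before the peak and one down-crossing index after it for each level $0,\ldots,m-1$; two of these $m$ pairs of states must coincide, and deleting the ascending excursion between them together with its matching descending excursion (while shifting the intervening portion of the walk down by the difference) strictly shortens the walk without moving the endpoints off level $0$. You instead decompose along the SCC-DAG (correctly noting that $a\inv$-edges stay inside SCCs of $\As_j$, so at most $|Q_j|$ SCCs are visited) and invoke the period structure of each SCC, namely that the within-SCC realizable weights form a coset of $d_C\integ$. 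This is a genuinely different, more structural route and it does work in principle, but two steps are asserted rather than argued: the claim that ``solving a linear congruence'' yields within-SCC weights $w_i$ with $|w_i| = O(n)$ is not justified and is probably too strong --- rebalancing $\sum_i w_i$ against the fixed inter-SCC contribution needs a Bezout-type argument whose coefficient bounds are polynomial but not linear --- and the claim that a prescribed weight $w_i$ is realizable inside $C_i$ by a walk of length $O(n|w_i|)$ also deserves a short proof (combine forward and backward traversals of cycles whose lengths have gcd $d_{C_i}$). Both gaps are fixable and still give polynomial bounds, which is all that \nl requires, so your approach goes through; the paper's pigeonhole is simply more elementary, self-contained, and avoids any appeal to SCC periods, cosets, or gcds.
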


\begin{proof}
  By definition, we have $(q,\veps,r) \in \econs{\delta_j}$ if and only if there exists $w  \in L_\veps \subseteq \tilas$ such that $w \in \lauto{\acons{\As_j}}{q}{r}$. Observe that since we have $A = \{a\}$, it follows that \mbox{$L_{\veps} = \{w \in \tilas \mid |w|_a\ = |w|_{a\inv}\}$}. We use this property to prove that deciding whether $(q,\veps,r) \in \econs{\delta_j}$ boils down to graph reachability, which again can be decided in \nl.

  We let $U = Q_j \times \integ$ be a set of vertices and consider the following set of edges:
  \[
    \begin{array}{llll}
      E &= & & \{((q,k),(q',k+1)) \mid (q,a,q') \in \acons{\delta_j}\} \\
        & & \cup &\{((q,k),(q',k-1)) \mid (q,a\inv,q') \in \acons{\delta_j}\}.
    \end{array}
  \]
  Consider the graph $G = (U,E)$. One may verify that \mbox{$(q,\veps,r) \in \econs{\delta_j}$} if and only if there exists a path from $(q,0)$ to $(r,0)$ in $G$. We prove that the latter condition can be checked in \nl. Let $V = \{(q,k) \in U \mid |k| \leq |Q_j|^2\}$.  We show that there exists a path $(q,0)$ to $(r,0)$ in $G$ if and only if there exists a path from $(q,0)$ to $(r,0)$ in $G$ using only states in $V$. It is then straightforward that this last property can be tested in \nl, since this is again a graph reachability problem over a graph with $|V| = |Q_j| \times (2|Q_j|+1)$ vertices, whose edges can be computed from $\As_j$ in \nl.

  The right to left implication is immediate. For the converse one, we consider a path from $(q,0)$ to $(r,0)$ in $G$. We prove that if this path contains a vertex in $U \setminus V$, then there exists a strictly shorter path from $(q,0)$ to $(r,0)$. One may then iterate the result to build a path that only contains states in $V$, completing the proof. Let $(s_0,k_0), \dots, (s_n,k_n) \in U$ be the vertices along our path: $(s_0,k_0) = (q,0)$, $(s_n,k_n) = (r,0)$, and for every $i \leq n$, we have  $((s_i,k_i),(s_{i+1},k_{i+1})) \in E$. Moreover, we know that there exits some index $h \leq n$  such that $(s_h,k_h) \not\in V$, \emph{i.e.}, such that $|k_h| > |Q_j|^2$. By symmetry, we assume that $k_h > |Q_j|^2$ and leave the case $k_h < -|Q_j|^2$ to the reader. We write $m = k_h$ for the proof. By definition of $E$ and since $k_0 = k_n = 0$, there exist,
  \[
    0 < i_1 < \cdots < i_{m-1} < h < i'_{m-1} < \cdots < i'_1 < n,
  \]
  such that $k_{i_1} = k_{i'_1} = 1,\dots, k_{i_{m-1}} = k_{i'_{m-1}} = m-1$. We also write $i_0 = 0$ and $i'_0 = n$. By hypothesis, $k_{i_0} = k_{i'_0} = 0$. Since $m > |Q_j|^2$, it now follows from the pigeonhole principle that there exist $0 \leq \ell_1 < \ell_2 \leq m-1$ such that $s_{i_{\ell_1}} = s_{i_{\ell_2}}$ and $s_{i'_{\ell_1}} = s_{i'_{\ell_2}}$. Let $\ell = \ell_2 - \ell_1$. One may verify from the definition of $E$ that the following paths exist in $G$:
  \[
    \begin{array}{l}
      (s_0,k_0) \rightarrow \cdots \rightarrow (s_{i_{\ell_1}},k_{i_{\ell_1}}) \rightarrow (s_{i_{\ell_2+1}},k_{i_{\ell_2+1}} - \ell)\\
      (s_{i_{\ell_2+1}},k_{i_{\ell_2+1}} - \ell) \rightarrow \cdots \rightarrow (s_h,k_h - \ell)\\ (s_h,k_h - \ell) \rightarrow \cdots \rightarrow (s_{i'_{\ell_2-1}},k_{i'_{\ell_2-1}} - \ell) \rightarrow (s_{i'_{\ell_1}},k_{i'_{\ell_1}}) \\
      (s_{i'_{\ell_1}},k_{i'_{\ell_1}}) \rightarrow \cdots \rightarrow (s_n,k_n).
    \end{array}
  \]
  Altogether, we get a strictly shorter path from $(q,0)$ to $(r,0)$, which completes the proof.
\end{proof}

\section{Conclusion}
\label{sec:conc}
We proved simple separation and covering algorithms for the classes \grp, \abg and \md using only standard notions from automata theory. For \grp and \abg, the proofs are based on the automata-theoretic construction ``$\As\mapsto\acauto$''. Since the statements behind the two algorithms (\emph{i.e.}, Theorem~\ref{thm:grpsep} and Theorem~\ref{thm:abesep}) are  similar, a natural question is whether their proofs can be unified (as of now, they are independent).
We also obtained tight complexity bounds:  separation is \nl-complete for \md, co-\nptime-complete for \abg and \ptime-complete for \grp. Covering is co-\nptime-complete for both \md and \abg, and between co-\nptime and \pspace for \grp. This raises the question of the exact complexity of \grp-covering.

\bibliographystyle{abbrv}

\bibliography{main}

\end{document}